   \def\@citecolor{blue}%
   \def\@urlcolor{blue}%
   \def\@linkcolor{blue}%
\def\orcidID#1{\smash{\href{http://orcid.org/#1}{\protect\raisebox{-1.25pt}{\protect\includegraphics{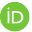}}}}}
\newcommand{\tosql}[1]{(#1)^{\mathsf{sql}}}
\newcommand{\QueL}{\textsc{Que}$\Lambda$\xspace}
\newcommand{\QueLG}{\textsc{Que}$\Lambda_G$\xspace}
\newcommand{\SQUR}{\textsc{Squr}\xspace}
\DeclareMathOperator{\FV}{FV}
\DeclareMathOperator{\dom}{dom}
\newcommand{\sem}[1]{\left\llbracket {#1} \right\rrbracket}
\newcommand{\vect}[1]{\overrightarrow{#1}}
\newcommand{\red}[0]{\mathrel{\leadsto}}
\newcommand{\monus}{\stackrel{.}{-}}
\def\orelse{\mathbin{~|~}}
\newcommand{\subst}[2]{\left[\nicefrac{#2}{#1}\right]}
\newcommand{\bbI}[0]{\mathbb{I}}
\newcommand{\cG}[0]{\mathcal{G}}
\newcommand{\cS}[0]{\mathcal{S}}
\def\rVS{\noalign{\vskip 3ex plus.4ex minus.4ex}}
\newcommand{\NRC}[0]{\ensuremath{\mathcal{NRC}}\xspace}
\newcommand{\NRCl}[0]{\ensuremath{\NRC_\lambda\xspace}}
\newcommand{\NRClsb}[0]{\ensuremath{\NRC_\lambda(\mathit{Set,Bag})}\xspace}
\newcommand{\NRCg}[0]{\ensuremath{\NRC_\cG}\xspace}
\newcommand{\BQL}[0]{\ensuremath{\mathcal{BQL}}\xspace}
\def\boolty{\mathbf{B}}
\def\emptyoset{\emptyset}
\def\emptymset{\mho}
\def\ocup{\cup}
\def\comprehension{\bigcup\setlit}
\def\mcomprehension{\biguplus\msetlit}
\newcommand{\setlit}[1]{\{{#1}\}}
\newcommand{\msetlit}[1]{\Lbag{#1}\Rbag}
\newcommand{\isempty}{\mathit{empty}}
\newcommand{\norm}{\mathit{norm}}
\newcommand{\tuple}[1]{\langle{#1}\rangle}
\def\distinct{\delta}
\def\promote{\iota}
\def\plwhere{\mathbf{where}}
\def\setwhere{\mathbf{where}_{\mathsf{set}}}
\def\bagwhere{\mathbf{where}_{\mathsf{bag}}}
\def\setempty{\mathbf{empty}_{\mathsf{set}}}
\def\bagempty{\mathbf{empty}_{\mathsf{bag}}}
\def\kwempty{\mathbf{empty}}
\def\kwwhere{\mathbf{where}}
\def\kwdo{\mathbf{do}}
\def\kwindex{\mathit{index}}
\DeclareSymbolFont{symbolsC}{U}{txsyc}{m}{n}
\DeclareMathSymbol{\strictfun}{\mathrel}{symbolsC}{74}
\newcommand{\graph}[2]{\mathcal{G}(#1;#2)}
\newcommand{\setgraph}[2]{\mathcal{G}^{\mathsf{set}}(#1;#2)}
\newcommand{\baggraph}[2]{\mathcal{G}^{\mathsf{bag}}(#1;#2)}
\def\gapp{\varoast} 
\def\fdsy@scale{1}
\newcommand\fdsy@mweight@normal{Book}
\newcommand\fdsy@mweight@small{Book}
\newcommand\fdsy@bweight@normal{Medium}
\newcommand\fdsy@bweight@small{Medium}
\DeclareFontFamily{U}{FdSymbolC}{}
\DeclareFontShape{U}{FdSymbolC}{m}{n}{
    <-7.1> s * [\fdsy@scale] FdSymbolC-\fdsy@mweight@small
    <7.1-> s * [\fdsy@scale] FdSymbolC-\fdsy@mweight@normal
}{}
\DeclareFontShape{U}{FdSymbolC}{b}{n}{
    <-7.1> s * [\fdsy@scale] FdSymbolC-\fdsy@bweight@small
    <7.1-> s * [\fdsy@scale] FdSymbolC-\fdsy@bweight@normal
}{}
\DeclareSymbolFont{fdarrows}{U}{FdSymbolC}{m}{n}
\DeclareMathSymbol{\leftpitchfork}{\mathrel}{fdarrows}{118}
\def\flat{\breve}
\def\shreds{\Mapsto}
\newcommand{\banana}[1]{\llparenthesis #1 \rrparenthesis}
\newcommand{\stitch}[2]{\banana{#2}#1}
\newcommand{\defun}[1]{\left\lfloor {#1} \right\rfloor}
\def\kwsel{{\color{blue}\mathtt{SELECT}}}
\def\kwapply{{\color{blue}\mathtt{APPLY}}}
\def\kwdist{{\color{blue}\mathtt{DISTINCT}}}
\def\kwfrom{{\color{blue}\mathtt{FROM}}}
\def\kwsqlwhere{{\color{blue}\mathtt{WHERE}}}
\def\kwas{{\color{blue}\mathtt{AS}}}
\def\kwunion{{\color{blue}\mathtt{UNION}}}
\def\kwexcept{{\color{blue}\mathtt{EXCEPT}}}
\def\kwlateral{{\color{blue}\mathtt{LATERAL}}}
\def\kwall{{\color{blue}\mathtt{ALL}}}
\def\kwnot{{\color{blue}\mathtt{NOT}}}
\def\kwtrue{{\mathsf{true}}}
\def\kwfalse{{\mathsf{false}}}
\def\kwex{{\color{blue}\mathtt{EXISTS}}}
\def\kwrownum{{\color{blue}\mathtt{ROW\_NUMBER}}}
\newcommand{\NF}{M}
\newcommand{\NNF}{N}
\newcommand{\atNF}{\NF_\mathsf{atom}}
\newcommand{\boolNF}{\NF_\mathsf{bool}}
\newcommand{\funNF}{\NF_\mathsf{fun}}
\newcommand{\tupNF}{\NF_\mathsf{tup}}
\newcommand{\setNF}{\NF_\mathsf{set}}
\newcommand{\bagNF}{\NF_\mathsf{bag}}
\newcommand{\boolNNF}[1]{\NNF_\mathsf{bool}^\mathsf{#1}}
\newcommand{\setNNF}[1]{\NNF_\mathsf{set}^\mathsf{#1}}
\newcommand{\bagNNF}[1]{\NNF_\mathsf{bag}^\mathsf{#1}}
\definecolor{dred}{RGB}{127,0,0}
\definecolor{dgreen}{RGB}{0,127,0}
\definecolor{wrbk}{rgb}{0.8,0.9,1}
\definecolor{jcbk}{rgb}{0.8,1,0.8}
\definecolor{wrtxt}{rgb}{0,0.5,0.8}
\definecolor{jctxt}{rgb}{0,0.6,0}
\newcommand{\wrnote}[1]{\colorbox{wrbk}{\parbox{0.9\textwidth}{\textcolor{wrtxt}{WR:} #1}}}
\newcommand{\jcnote}[1]{\colorbox{jcbk}{\parbox{0.9\textwidth}{\textcolor{jctxt}{JRC:} #1}}}
\begin{document}

%
\title{Query Lifting}
\subtitle{Language-integrated query 
\\
for heterogeneous nested collections}
\author{Wilmer Ricciotti\inst{1} (\Letter) \orcidID{0000-0002-2361-8538} \and
James Cheney\inst{1,2}\orcidID{0000-0002-1307-9286}}
\authorrunning{W. Ricciotti and J. Cheney}
%
\institute{
  Laboratory for Foundations of Computer Science\\
  University of Edinburgh, Edinburgh, United Kingdom\\
  \email{research@wilmer-ricciotti.net}\\
  \email{jcheney@inf.ed.ac.uk}
\and
The Alan Turing Institute, London, United Kingdom
}

\maketitle
\begin{abstract}
  Language-integrated query based on comprehension syntax is a
  powerful technique for safe database programming, and provides a
  basis for advanced techniques such as \emph{query shredding} or
  \emph{query flattening} that allow efficient programming with
  complex nested collections. However, the foundations of these
  techniques are lacking: although SQL, the most widely-used database
  query language, supports \emph{heterogeneous} queries that mix set
  and multiset semantics, these important capabilities are not
  supported by known correctness results or implementations that
  assume \emph{homogeneous} collections.  In this paper we study
  language-integrated query for a heterogeneous query language
  \NRClsb that combines set and multiset constructs.  We show how to
  normalize and translate queries to SQL, and develop a novel approach
  to querying heterogeneous nested collections, based on the insight
  that ``local'' query subexpressions that calculate nested
  subcollections can be ``lifted'' to the top level analogously to
  lambda-lifting for local function definitions.

    \keywords{language-integrated query \and nested relations \and multisets}
\end{abstract}

\section{Introduction}

Since the rise of relational databases as important software
components in the 1980s, it has been widely appreciated that database
programming is hard~\cite{copeland-maier:1984}.  Databases offer
efficient access to flat tabular data using declarative SQL queries, a
computational model very different from that of most general-purpose
languages.  To get the best performance from the database, programmers
typically need to formulate important parts of their program's logic
as queries, thus effectively programming in two languages: their usual
general-purpose language (e.g. Java, Python, Scala) and SQL, with the
latter query code typically constructed as unchecked, dynamic strings.
Programming in two languages is more than twice as difficult as
programming in one language~\cite{lindley10esop}.  The result is a hybrid
programming model where important parts of the program's functionality
are not statically checked and may lead to run-time failures, or
worse, vulnerabilities such as SQL injection attacks.  This
undesirable state of affairs was recognized by Copeland and
Maier~\cite{copeland-maier:1984} who coined the term \emph{impedance
  mismatch} for it.

Though higher-level wrapper libraries and tools such as
\emph{object-relational mappings} (ORM) can help ameliorate the
impedance mismatch, they often come at a price of performance and lack
of transparency, as high-level operations on in-memory objects
representing database data are not always mapped efficiently to
queries~\cite{russell08queue}.  An alternative approach, which has
almost as long a history as the impedance mismatch problem itself, is
to elevate queries in the host language from unchecked strings to a
typed, domain-specific sublanguage, whose interactions with the rest of
the program can be checked and which can be mapped to database queries
safely while providing strong guarantees.  This approach is nowadays
typically called \emph{language-integrated query} following
Microsoft's successful LINQ extensions to .NET languages such as C\#
and F\#~\cite{meijer:sigmod,Syme06}.  It is ultimately based on
Trinder and Wadler's insight that database queries can be modeled by a
form of monadic comprehension
syntax~\cite{trinder-wadler:comprehensions}.

Comprehension-based query languages were
placed on strong foundations in the database
community in the 1990s~\cite{buneman+:comprehensions,BNTW95,ParedaensG92,wong96jcss,LW97}.
A key insight due to Paredaens and van Gucht~\cite{ParedaensG92} is
that although comprehension-based queries can manipulate nested
collections, any expression whose input and output are \emph{flat}
collections (i.e. tables of records without other collections nested
inside field values) can always be translated to an equivalent query
only using flat relations (i.e. can be expressed in an SQL-like
language).  Wong~\cite{wong96jcss} subsequently generalized this
result and gave a constructive proof, in which the translation from
nested to flat queries is accomplished through a strongly normalizing
rewriting system.

Wong's work has informed a number of successful implementations, such
as the influential Kleisli system~\cite{Won00} for biomedical data
integration, and the Links programming language~\cite{CLWY06}.
Although the implementation of LINQ in C\# and F\# was not directly
based on normalization, Cheney et al.~\cite{cheney13icfp} showed that normalization can be performed
as a pre-processing step to improve both reliability and performance
of queries, and guarantee that a well-formed query expression
evaluates to (at most) one equivalent SQL expression at run
time.

Comprehension-based language-integrated query also forms the basis for
libraries such as Quill for Scala~\cite{quill} and Database-Supported
Haskell~\cite{dsh}.  Most recently, language-integrated query has been
extended further to support efficient execution of queries that
construct \emph{nested}
results~\cite{GrustMRS09,cheney14sigmod,dsh,SIGMOD2015UlrichG}, by
translating such queries to a bounded number of flat queries.  This
technique, currently implemented in Links and DSH, has several
benefits: for example to implement \emph{provenance-tracking}
efficiently in queries~\cite{fehrenbach18scp,stolarek18programming}.
Fowler et al.~\cite{fowler20ijdc} showed that in some cases, Links's
support for nested query results decreased both the number of
queries issued and the total query evaluation time by an order of
magnitude or more compared to a Java database application.
Unfortunately, there is still a gap between the theory and practice of
lan\-guage-integrated query.  Widely-used and practically important SQL
features that mix set and multiset collections, such as duplicate
elimination, are supported by some implementations, but without
guarantees regarding correctness or reliability.  So far, such results
have only been proved for special
cases~\cite{cheney13icfp,cheney14sigmod}, typically for
\emph{homogeneous} queries operating on one uniform collection type.
For example, in Links, queries have multiset semantics and cannot use
duplicate elimination or set-valued operations.  To the best of our
knowledge the questions of how to correctly translate flat or nested
\emph{heterogeneous} queries to SQL are open problems.

In this paper, we solve both open problems.  We study a
heterogeneous query language \NRClsb, which was introduced and
studied in our recent work~\cite{ricciotti19dbpl}.  We have previously
extended the key results on query normalization to
\NRClsb~\cite{ricciotti20fscd}, but unlike the homogeneous case, the
resulting normal forms do not directly correspond to SQL.  In this
paper, we first show how flat \NRClsb queries can be translated to
SQL, and we then develop a new approach for evaluating queries over
nested heterogeneous collections.  The key (and, to us at least,
surprising) insight is to recognize that these two subproblems are
really just different facets of one problem.  That is, when
translating flat \NRClsb queries to SQL, the main obstacle is how to
deal with query expressions that depend on local variables; when
translating nested \NRClsb queries to equivalent flat ones, the main
obstacle is also how to deal with query expressions that depend on
local variables.  We solve this problem by observing that such query
subexpressions can be \emph{lifted}, analogously to
\emph{lambda-lifting} of local function definitions in functional
programming~\cite{johnsson85fpca}, by abstracting over their free
variables.  Differently to lambda-lifting, however, we lift such
expressions by converting them to \emph{tabular functions}, or
\emph{graphs}, which can be calculated using database query
constructs.

The remainder of this paper presents our contributions as follows:
\begin{itemize}
\item In section~\ref{sec:overview} we review the most relevant prior
  work and present our approach at a high, and we hope accessible,
  level.
\item In sections~\ref{sec:background} and~\ref{sec:graph} we present the core languages \NRClsb
  and \NRCg which will be used in the rest of the paper.
\item Section~\ref{sec:delat} presents our results on translation of
  flat \NRClsb queries to SQL, via \NRCg.
\item Section~\ref{sec:shredding} presents our results on translation
  of \NRClsb queries that construct nested results to a bounded number
  of flat \NRCg queries.
\item Sections~\ref{sec:related} and~\ref{sec:concl} discuss related
  work and conclude.
\end{itemize}

\section{Overview}\label{sec:overview}

In this section we sketch our approach.  We use Links
syntax~\cite{CLWY06}, which differs in superficial respects from the
core calculus in the rest of the paper but is more readable.  We rely
without further comment on existing capabilities of
language-integrated query in Links, which are described
elsewhere~\cite{Cooper09,lindley12tldi,cheney14sigmod}.  Suppose,
hypothetically, we are interested in certain presidential candidates
and prescription drugs they may be taking\footnote{For example, to see whether drug
  interactions might explain erratic behavior such as rage
  tweeting, creeping authoritarianism, or creepiness more generally.}.  In Links,
an expression querying a small database of presidential candidates and
their drug prescriptions can be written as follows:
\begin{verbatim}
Q0 = for (c <- Cand, p <- Pres, d <- Drug) 
     where (c.cid == p.cid && p.did == d.did) 
     [(name=c.name,drug=d.drug)]
\end{verbatim}
Some (totally fictitious and not legally actionable) example data is
shown in Figure~\ref{fig:data1}; note that the prescriptions table
$Pres$ is a multiset containing duplicate entries.  Executing
this query in Links results in the following SQL query:
\begin{verbatim}
SELECT c.name, d.drug 
FROM Cand c, Pres p, Drug d  
WHERE c.cid = p.cid AND p.did = d.did
\end{verbatim}
In Links, query results from the database are mapped back to list
values nondeterministically, and the result of the above query $Q_0$
will be a list containing two copies of the tuple $(\mathtt{DJT},\mathtt{adderall})$ and
one copy of each of the tuples $(\mathtt{DJT},\mathtt{hydrochloroquine})$ and
$(\mathtt{JRB},\mathtt{caffeine})$.  If we are just interested in which candidates take
which drugs and not how many times each drug was taken, we want to
remove these duplicates.  This can be accomplished in a basic SQL
query using the $\kwdist$ keyword after $\kwsel$.  Currently, in Links
there is no way to generate queries involving $\kwdist$, and this
duplicate elimination can only be performed in-memory.  While this is
not hard to do when the duplicate elimination happens at the end of
the query, it is not as clear how to handle deduplication operations
correctly in arbitrary places inside queries.  Furthermore, SQL has several
other operations that can have either set or multiset semantics such
as $\kwunion$ and $\kwexcept$: how should they be handled?

To study this problem we introduced a core calculus
$\NRClsb$~\cite{ricciotti19dbpl} (reviewed in the next section) in
which there are two collection types, sets and multisets (or
\emph{bags}); duplicate elimination maps a multiset to a set with the
same elements, and \emph{promotion} maps a set to the least multiset
with the same elements.

\begin{figure}[tb]\centering
  \includegraphics[scale=0.3]{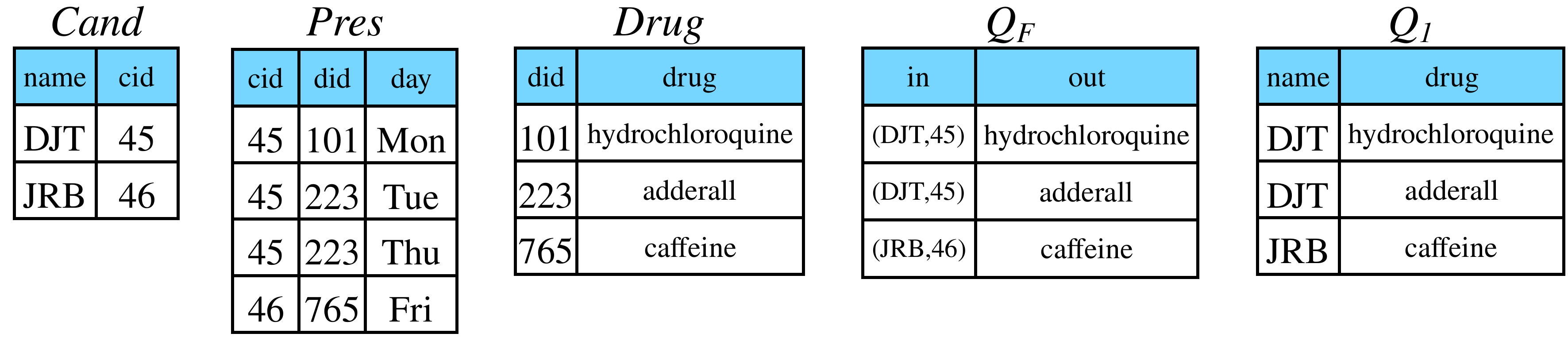} 
  \caption{Input tables $Cand,Pres,Drug$, intermediate result of $Q_F$
    and result of $Q_1$.}\label{fig:data1}
\end{figure}

We considered, but were not previously able to solve, two problems in
the context of $\NRClsb$ which are addressed in this paper.  First,
the fundamental results regarding normalization and translation to SQL
have been studied only for \emph{homogeneous} query languages with
collections consisting of either sets, bags, or lists.  We recently
extended the normalization results to
$\NRClsb$~\cite{ricciotti20fscd}, but the resulting normal forms do
not correspond directly to SQL queries if operations such as
deduplication, promotion, or bag difference are present.  Second,
query expressions that construct nested collections cannot be
translated directly to SQL and can be very expensive to execute
in-memory using nested loops, leading to the \emph{$N+1$ query
  problem} (or \emph{query avalanche problem}~\cite{grust10vldb}) in
which one query is performed for the outer loop and then another $N$
queries are performed, one per iteration of the inner loop.  Some
techniques have been developed for translating nested queries to a
fixed number of flat queries, but to date they either handle only
homogeneous set or bag collections~\cite{Bussche01,cheney14sigmod}, or
lack detailed correctness proofs~\cite{grust10vldb,ulrich19phd}.

Regarding the first problem, the closest work in this respect is by
Libkin and Wong~\cite{LW97}, who studied and related the
expressiveness of comprehension-based homogeneous set and bag query
languages but did not consider their heterogeneous combination or
translation to SQL.  The following query illustrates the fundamental
obstacle:
\begin{verbatim}
Q1 = for (c <- Cand)
     for (d <- dedup(for (p <- Pres, d <- Drug) 
                     where (c.cid == p.cid && p.did == d.did) 
                     [d.drug]))
     [(name=c.name, drug=d)]
\end{verbatim}
This query is similar to $Q_0$, but eliminates duplicates among the
drugs for each candidate.  The query contains a duplicate elimination
operation (\verb|dedup|) applied to another query subexpression that
refers to $c$, which is introduced in an earlier generator.  This is
not directly supported in classic SQL: by default the subqueries in
$\kwfrom$ clauses cannot refer to tuple variables introduced by
earlier parts of the $\kwfrom$ clause.  In fact, this query is
expressible in SQL:1999 using the $\kwlateral$ keyword, which does
allow such sideways information-passing:
\begin{verbatim}
SELECT c.name,d.drug
FROM Cand c, LATERAL (SELECT DISTINCT d.drug 
                      FROM Pres p, Drug d 
                      WHERE p.cid = c.cid AND p.did = d.did) d
\end{verbatim}
(Without the $\kwlateral$ keyword, this query is not well-formed SQL.)
However, such queries have only recently become widely supported, so
are not available on legacy databases, and even when supported, are
not typically optimized effectively; for example PostgreSQL will
evaluate it as a nested loop, with quadratic complexity or worse.

Regarding the second problem, Van den Bussche~\cite{Bussche01} showed that
any query returning nested set collections can be simulated by $n$
flat queries, where $n$ is the number of occurrences of the set
collection type in the result.  However, this translation has not been
used as the basis for a practical system to our knowledge, and does
not respect multiset semantics.  Cheney et
al.~\cite{cheney14sigmod} provided an analogous \emph{shredding}
translation for nested multiset queries, but translated to a richer
target language (including SQL:1999 features such as $\kwrownum$) and
did not handle operations such as multiset difference or duplicate
elimination.  Thus, neither approach handles the full expressiveness
of a heterogeneous query language over bags and sets.  The following
query illustrates the fundamental obstacle:
\begin{verbatim}
Q2 = for (x <- Cand)
     [(name=x.name, drugs=dedup(for (p <- Pres, d <- Drug) 
                             where (x.cid == p.cid and p.did == d.did) 
                             [d.drug]))]
\end{verbatim}
Much like $Q_1$, $Q_2$ builds a multiset of pairs $(name,drugs)$ but here
$drugs$ is a \emph{set} of all of the drugs taken by candidate $name$.  Such a query is, of course, not even
syntactically expressible in SQL because it returns a nested
collection; it is not expressible in previous work on nested query
evaluation either, because the result is a multiset of records, one
component of which is a set.

We will now illustrate how to translate $Q_1$ to a plain SQL query
(not using $\kwlateral$), and how to translate $Q_2$ to two flat
queries such that the nested result can be constructed easily from
their flat results.  First, note that we can rewrite both queries as
follows, introducing an abbreviation $F(x)$ for a query subexpression
parameterized by $x$:
\begin{verbatim}
F(x) = for (p <- Pres, d <- Drug) 
       where (x.cid == p.cid and p.did == d.did) 
       [d.drug]
Q1   = for (c <- Cand) for (d <- dedup(F(c))) [(name=c.name, drug=d)]
Q2   = for (c <- Cand) [(name=c.name, drugs=dedup(F(c)))]
\end{verbatim}
Next, observe that the set of all possible values for $x$ appearing in
some call to $F(x)$ is finite, and can even be computed by a query.
Therefore, we can write a \emph{closed} query $Q_F$ that builds a
lookup table that calculates the \emph{graph} of $F$ (or at least, as
much of it as is needed to evaluate the queries) as follows:
\begin{verbatim}
Q_F = dedup(for (x <- Cand, y <- F(x)) [(in=x,out=y))]
\end{verbatim}
Notice that the use of deduplication here is really essential to
define $Q_F$ correctly: if we did not deduplicate then there would be repeated
tuples in $Q_F$, leading to incorrect results
later.  If we inline and simplify $F(x)$ in the above query, we get
the following:
\begin{verbatim}
Q_F' = dedup(for (x <- Cand, y <- Pres, z <- Drug) 
             where (x.cid == y.cid && y.did = z.did) 
             [(in=x,out=z.drug)])
\end{verbatim}
Finally we may replace the call to $F(x)$ in $Q_1$ with a
lookup to $Q_F'$, as follows:
\begin{verbatim}
Q1' = for (c <- Cand, f <- Q_F') where (c == f.in) 
      [(name=c.name, drug=f.out)]
\end{verbatim}
This expression may now be translated
directly to SQL, because the
argument to \verb|dedup| is now closed:
\begin{verbatim}
SELECT c.name,f.drug
FROM Cand c, (SELECT DISTINCT x.name,x.cid,z.drug 
              FROM Cand x, Pres y, Drug z
              WHERE x.cid = y.cid AND y.did = z.did) f
WHERE c.cid = f.cid AND c.name = f.name
\end{verbatim}
Although this query looks a bit more complex than the one given
earlier using $\kwlateral$, it can be optimized more effectively, for
example PostgreSQL generates a query plan that uses a hash join,
giving quasi-linear complexity.  

\begin{figure}[tb]\centering
  \includegraphics[scale=0.3]{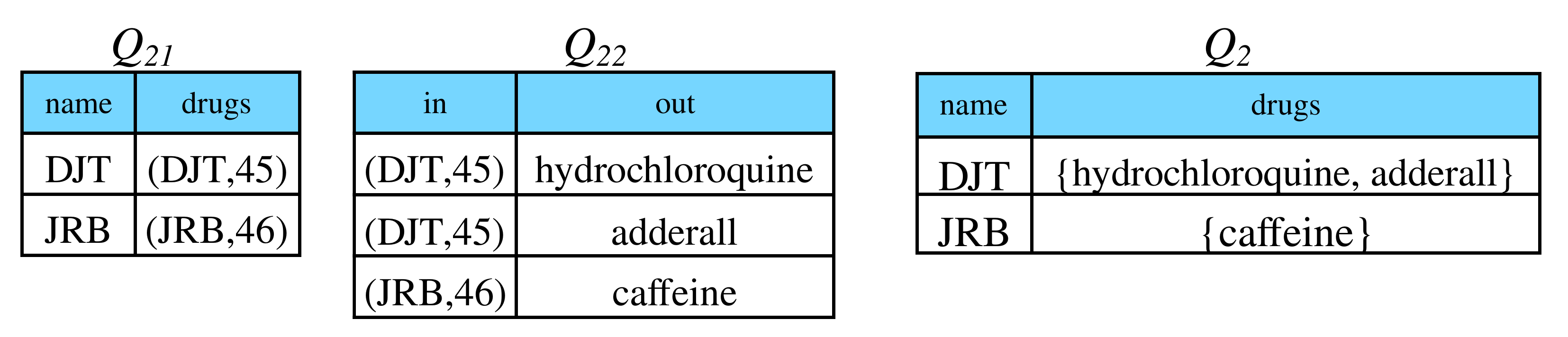} 
  \caption{Intermediate results of $Q_{21},Q_{22}$
    and result of $Q_2$.}\label{fig:data2}
\end{figure}
 
On the other hand, to deal with $Q_2$, we refactor it into two closed,
flat queries $Q_{21},Q_{22}$ and an expression $Q_2'$ that builds the
nested result from their flat results (illustrated in
Figure~\ref{fig:data2}):
\begin{verbatim}
Q_21 = for (x <- Cand) [(name=x.name, drugs=x)]
Q_22 = Q_F
Q2'  = for (x <- Q21) 
       [(name=x.name, 
         drugs=for (y <- Q_22) where(x.drugs == y.in) [y.out])]
\end{verbatim}
Notice that in $Q_{21}$ we replaced the call to $F$ with the argument
$x$, while $Q_{22}$ is just  $Q_F$ again.
The final expression $Q_2'$ builds the nested result (in the host
language's memory) by traversing $Q_{21}$ and computing the set value
of each $cs$ field by looking up the appropriate values from $Q_{22}$.
Thus, the original query result can be computed by first evaluating
$Q_{21}$ and $Q_{22}$ on the database, and then evaluating
the final \emph{stitching} query expression in-memory.  (In practice,
as discussed in Cheney et al.~\cite{cheney14sigmod}, it is important
for performance to use a more sophisticated stitching algorithm than
the above naive nested loop, but in this paper we are primarily
concerned with the correctness of the transformation.)

The above examples are a bit simplistic, but illustrate the key idea
of \emph{query lifting}.  In the rest of this paper we place this
approach on a solid foundation, and (partially inspired by Gibbons et
al.~\cite{gibbons18icfp}), to help clarify the reasoning we extend the
calculus with a type of \emph{tabulated functions} or \emph{graphs}
$\vect{\sigma} \strictfun \setlit{\tau}$, with \emph{graph
  abstraction} introduction form $\graph{-}{-}$ and \emph{graph
  application} $M \gapp \tuple{\vect{x}}$. In our running example we
could define $Q_F = \graph{x \gets R}{F(x)}$, and we would use the
application operation $M\gapp \tuple{\vect{x}}$ to extract the set of
elements corresponding to $x$ in $Q_F$.  We will also consider tabular
functions that return multisets rather than sets, in order to deal
with queries that return nested multisets.

\section{Background}\label{sec:background}
We recap the main points from~\cite{ricciotti19dbpl}, which introduced
a calculus 
\linebreak
\NRClsb with the following syntax:

\begin{tabular}{l@{\hspace{1em}}rcl}
  \textbf{Types} & $\sigma, \tau$ & $::=$ & $b \orelse \tuple{\vect{\ell : \sigma}} \orelse \setlit{\sigma} \orelse \msetlit{\sigma} 
    \orelse \sigma \to \tau$
  \\
  \textbf{Terms} & $M, N$ & $::=$ & $x \orelse t \orelse c(\vect{M}) \orelse \tuple{\vect{\ell = M}} \orelse M.\ell \orelse \lambda x.M \orelse M~N$ \\
  & & $\orelse$ & $\emptyset \orelse \setlit{M} \orelse M \cup N \orelse \bigcup\setlit{M | \Theta} $\\
  & & $\orelse$ & $\emptymset \orelse \msetlit{M} \orelse M \uplus N
                  \orelse M - N
                  \orelse \biguplus\msetlit{M | \Theta}$\\
  & & $\orelse$ & $\distinct M \orelse \promote M \orelse M~\setwhere~N \orelse M~\bagwhere~N$ \\
  & & $\orelse$ & $\setempty(M) \orelse \bagempty(M)$ \\
  \textbf{Generators} & $\Theta$ & $::=$ & $\vect{x \gets M}$
\end{tabular}

We distinguish between (local) variables $x$ and (global) table names
$t$, and assume standard primitive types $b$ and primitive operations 
$c(\vect{M})$
including respectively Booleans $\boolty$ and equality at every base type.  The syntax for records and
record projection $ \tuple{\vect{\ell = M}}, M.\ell$, and for
lambda-abstraction and application $\lambda x.M, M~N$ is
standard; as usual, let-binding is definable. Set operations include
empty set $\emptyset$, singleton construction $\setlit{M}$, union
$M \cup N$, one-armed conditional $M~\setwhere~N$, emptiness test
$\setempty(M)$, and comprehension $\bigcup\setlit{M\mid \Theta}$,
where $\Theta$ is a sequence of generators $x \leftarrow M$.
Similarly, multiset operations include empty bag $\emptymset$,
singleton $\msetlit{M}$, bag union $M \uplus N$, bag difference
$M - N$, conditional $M~\bagwhere~N$, emptiness test $\bagempty(M)$. The syntax is completed by duplicate elimination $\distinct M$ (converting a bag $M$ into a set with the same object type) and promotion $\promote M$ (which produces the bag containing all the elements of the set $M$, with multiplicity 1). 

The one-way conditional operations $M~\setwhere~N$ and $M~\bagwhere~N$
evaluate Boolean test $N$, and return collection $M$ if $N$ is true,
otherwise the empty set/bag; two-way conditionals can supported
without problems.  Other set operations, such as intersection,
membership, subset, and equality are also definable, as are bag
operations such as intersection~\cite{BNTW95,LW97}.  Also, we may
define $\bagempty(M)$ as $\setempty(\distinct(M))$ and
$M~ \setwhere~ N$ as $\distinct(\promote(M)~ \bagwhere~ N)$, but we
prefer to include these constructs as primitives for
symmetry. Generally, we will allow ourselves to write $M~\kwwhere~N$
and $\kwempty(M)$ without subscripts if the collection kind of these
operations is irrelevant or made clear by the context.  We freely use
syntax for unlabeled tuples $\tuple{\vect{M}}, M.i$ and tuple types
$\vect{\sigma}$ and consider them to be syntactic sugar for labeled
records.

The typing rules for the calculus are standard and provided 
\begin{techreport}
in an appendix.
\end{techreport}
\begin{cameraready}
in the full version of this paper~\cite{ricciotti21lifting}.
\end{cameraready}
For the purposes of this discussion, we will highlight two features of the type system. The first is that the calculus used here differs from our previous work by using constants and table names, whose types are described by a fixed signature $\Sigma$:

\begin{center}
    \AxiomC{$\Sigma(c) = \vect{b} \to b$}
    \AxiomC{$(\Gamma \vdash M_i : \sigma_i)_{i=1,\ldots,n}$}
    \BinaryInfC{$\Gamma \vdash c(\vect{M}) : \tau$}
    \DisplayProof
  \hspace{1cm}
    \AxiomC{$\Sigma(t) = \vect{\ell : b}$}
    \UnaryInfC{$\Gamma \vdash t : \msetlit{\tuple{\vect{\ell : b}}}$}
    \DisplayProof
\end{center}

As usual, a typing judgment $\Gamma \vdash M : \sigma$ states that a term $M$ is well-typed of type $\sigma$, assuming that its free variables have the types declared in the typing context $\Gamma = x_1 : \sigma_1, \ldots, x_k : \sigma_k$.
For the two rules above, note in particular that the primitive functions $c$ can only take
inputs of base type and produce results at base type, and table
constants $t$ are always multisets of records where the fields are of
base type.  We refer to a type of the form $\tuple{\vect{\ell:b}}$ as
\emph{flat}; if $\sigma$ is flat, we refer to $\setlit{\sigma}$ and $\msetlit{\sigma}$ as \emph{flat collection types}.

The second is that our type system uses an approach à la Church,
meaning that
variable abstractions (in lambdas/comprehensions), empty sets and empty bags are annotated with their type in order to ensure the uniqueness of typing.

\begin{lemma}
In $\NRClsb$, if $\Gamma \vdash M : \sigma$ and $\Gamma \vdash M : \tau$, then $\sigma = \tau$.
\end{lemma}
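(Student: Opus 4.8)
The plan is to argue by structural induction on the term $M$, using the fact that the typing rules are syntax-directed: the head constructor of $M$ determines which rule can conclude $\Gamma \vdash M : \sigma$, so any two derivations of $\Gamma \vdash M : \sigma$ and $\Gamma \vdash M : \tau$ must end with the same rule applied to the same immediate subterms. It then suffices to check, case by case, that $\sigma$ is a fixed function of the types assigned to those subterms (unique by the induction hypothesis) together with the syntactic annotations carried by $M$.

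Most cases are routine. For a variable $x$ both derivations read the type off the context, so $\sigma = \tau = \Gamma(x)$; for a table name $t$ or a primitive application $c(\vect{M})$ the result type is forced by the fixed signature $\Sigma$, with no freedom in either. For the compound formers --- records and projections, application, $\distinct M$, $\promote M$, the binary operations $M \cup N$, $M \uplus N$, $M - N$, the conditionals, and the emptiness tests --- the induction hypothesis pins down a unique type for each principal subterm, and the relevant rule reads off a unique result type from it (a field type, a codomain, the collection obtained by swapping the set/bag former, a Boolean, and so on), so uniqueness propagates upward.

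The only cases carrying genuine content are those in which a type would otherwise be underdetermined, and this is exactly where the à la Church annotations are needed. For $\emptyset$ and $\emptymset$ the element type cannot be recovered from the term, but the annotation supplies it; for $\lambda x.M$ the annotation on the bound variable fixes the domain while the induction hypothesis fixes the codomain; and for a comprehension $\bigcup\setlit{M \mid \Theta}$ the annotated generator variables fix the context under which the body is typed. I expect this to be the crux: the lemma is really the statement that these annotations remove precisely the sources of ambiguity --- empty collections and binder domains --- that would otherwise admit distinct typings, and the remainder of the proof is bookkeeping confirming that each rule is the unique one applicable to its term former.
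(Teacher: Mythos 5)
Your proof is correct and matches the paper's treatment: the paper states this lemma without giving a proof, attributing uniqueness of typing precisely to the Church-style annotations on binders and empty collections, which is exactly the crux you identify. The structural induction you outline --- syntax-directed rules, with the induction hypothesis pinning down subterm types and the annotations resolving the only otherwise-underdetermined cases ($\emptyset$, $\emptymset$, and binder domains) --- is the standard argument the paper leaves implicit.
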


In the context of a larger language implementation, most of these type
annotations can be elided and inferred by type inference. We have
chosen to dispense with these details in the main body of this paper
to avoid unnecessary syntactic cluttering.

We will use a largely standard denotational semantics for \NRClsb, in
which sets and multisets are modeled as finitely-supported functions
from their element types to Boolean values $\{0,1\}$ or natural
numbers respectively.  This approach follows the so-called
$K$-relation semantics for queries~\cite{green07pods,foster08pods} as
used for example in the HoTTSQL formalization~\cite{Chu17}.  The full
typing rules and semantics are included
\begin{techreport}
in the appendix.
\end{techreport}
\begin{cameraready}
in the full version of this paper~\cite{ricciotti21lifting}.
\end{cameraready}

\NRClsb subsumes previous systems including
\NRC~\cite{BNTW95,wong96jcss}, \BQL~\cite{LW97} and
\NRCl~\cite{Cooper09,cheney14sigmod}.  In this paper, we restrict
our attention to queries in which collection types taking part in
$\distinct$, $\promote$ or bag difference contain only flat records. There are various reasons for excluding function types from these operators: for starters, any concrete implementation that used function types in these positions would need to decide the equality of functions; secondly, our rewrite system can ensure that a term whose type does not contain function types has a normal form without lambda abstractions and applications only if any $\distinct$, $\promote$, or bag difference used in that term are applied to first-order collections. We thus want to exclude terms such as:
\[
\biguplus\msetlit{ x~\msetlit{1}~\msetlit{2} | x \gets \promote (\setlit{\lambda yz.y} \cup \setlit{\lambda yz.z})}  
\]
which do not have an SQL representation despite having a flat collection type.

In order to obtain simpler normal forms, in which comprehensions only reference generators with a flat collection type, we also disallow nested collections within $\distinct$, $\promote$, and bag difference. We believe this is without loss of generality because of Libkin and
Wong's results showing that allowing such operations at nested types
does not add expressiveness to \BQL.

We have extended Wong's normalizing rewrite rule system, so as to
simplify queries to a form that is close to SQL, with no intermediate
nested collections. Since our calculus is more liberal than Wong's, allowing queries to be defined by mixing sets and bags and also using bag difference, we have added non-standard rules to take care of unwanted situations. In particular, we use the following constrained eta-expansions for comprehensions:
\begin{align*}
  \comprehension{\distinct(M - N) | \Theta}
& \red \comprehension{\setlit{z} | \Theta, z \gets \distinct(M - N)}
\\
  \mcomprehension{\promote M | \Theta}
& \red \mcomprehension{\msetlit{z} | \Theta, z \gets \promote M}
\\
  \mcomprehension{ M - N | \Theta}
& \red \mcomprehension{\msetlit{z} | \Theta, z \gets M - N}
\end{align*}

The rationale of these rules is that in order to achieve, for comprehensions, a form that can be easily translated to an SQL select query, we need to move all the syntactic forms that are blocking to most normalization rules (i.e. promotion and bag difference) from the head of the comprehension to a generator. In order for this strategy to work out, we also need to know that the type of these subexpressions is flat, as we previously mentioned.

\begin{figure}[tb]
  \[
  \begin{array}{lrcl}
  \textbf{General normal forms} & M
  & ::= &
  X \orelse \tuple{\vect{\ell = M}} \orelse Q \orelse  R

\\
  \textbf{Base type terms}
  & X
  & ::= &
  x.\ell \orelse c(\vect{X}) \orelse \setempty(Q^*) \orelse \bagempty(R^*)
  
\\
  
  \textbf{Set normal forms} & Q
  & ::= &
  \bigcup \vect{C}
  
\\
  
  & C
  & ::= &
  \bigcup\setlit{\setlit{M}~\setwhere~X | \vect{x \gets F}}
  
\\
  
  & F
  & ::= &
 \distinct t \orelse \delta(R^*_1 - R^*_2)
  
\\
  
  \textbf{Bag normal forms} & R
  & ::= & 
  \biguplus\vect{D}
  
\\
  
  & D
  & ::= &
  \biguplus\msetlit{\msetlit{M}~\bagwhere~X | \vect{x \gets G}}
  
\\
  
  & G
  & ::= &
  t \orelse \promote Q^* \orelse R^*_1 - R^*_2
  
  \end{array}
  \]
  
  \caption{Nested relational normal forms.}
  \label{fig:nestrelNF}
\end{figure}

In Figure~\ref{fig:nestrelNF} we show the grammar for the normal forms for terms of \emph{nested relational types}, i.e. types of the following form:
\[
\sigma ::= b \orelse \tuple{\vect{\ell : \sigma}} \orelse \setlit{\sigma} \orelse \msetlit{\sigma}  
\]

For ease of presentation, the grammar actually describes a ``standardized'' version of the normal forms in which:
\begin{itemize}
  \item $\emptyset$ is represented as the trivial union $\bigcup~\vect{C}$ where $\vect{C}$ is the empty sequence; $\emptymset$ has a similar representation using a trivial disjoint union;
  \item comprehensions without a guard are considered to be the same as those with a trivial $\kwtrue$ guard:
  \[
    \comprehension{ \setlit{M} | \Theta} = \comprehension{ \setlit{M}~\kwwhere~\kwtrue \mid \Theta}
  \]
  \item singletons that do not appear as the head of a comprehension are represented as trivial comprehensions:
  \[
    \setlit{M} = \comprehension{ \setlit{M} \mid ~}
  \]
\end{itemize}

Each normal form $M$ can be either a term of base type $X$, a tuple $\tuple{\vect{\ell=M}}$, a set $Q$, or a bag $R$. The normal forms of sets and bags are rather similar, both being defined as unions of comprehensions with a singleton head. The generators for set comprehensions $F$ include deduplicated tables and deduplicated bag differences; the generators for bag comprehensions $G$ must be either tables, promoted set queries, or bag differences.

The non-terminals used as the arguments of emptiness tests, promotion, and bag difference have been marked with a star to emphasize the fact that they must have a flat collection type. The corresponding grammar can be obtained from the grammar for nested normal forms by replacing the rule for $M$ with the following:
\[
M^* ::= \tuple{\vect{\ell = X}}
\]

\begin{figure}[tb]
  \scriptsize{\[
    \begin{array}{rcl@{\quad}rcl}
      \tosql{\emptyset} &=& \kwsel~42~\kwsqlwhere~0=1
      & 
      \tosql{\emptymset} &=& \kwsel~42~\kwsqlwhere~0=1
      \\
      \tosql{x.\ell} &=& x.\ell
      &
      \tosql{c(\vect{X})} &=& \tosql{c}(\vect{\tosql{X}})
      \\
      \tosql{\tuple{\vect{\ell = X}}} & = & \multicolumn{4}{l}{\tosql{X_1} ~\kwas~\ell_1,\ldots,\tosql{X_n}~\kwas~\ell_n}
      \\
      \tosql{\setempty(Q^*)} &=& \kwnot~\kwex~\tosql{Q^*}
      &
      \tosql{\bagempty(R^*)} &=& \kwnot~\kwex~\tosql{R^*}
      \\
        \tosql{Q^*_1 \cup Q^*_2} &=& \tosql{Q^*_1} ~\kwunion~\tosql{Q^*_2}
        &
        \tosql{R^*_1 \uplus R^*_2} &=& \tosql{R^*_1} ~\kwunion~\kwall~\tosql{R^*_2}
        \\
        \tosql{t} &=& \kwsel~*~\kwfrom~t
        &
        \tosql{R^*_1 - R^*_2} &=& \tosql{R^*_1} ~\kwexcept~\kwall~\tosql{R^*_2}
        \\
        \tosql{\distinct t} &=& \kwsel~\kwdist~*~\kwfrom~t
        &
        \tosql{\promote(Q^*)} &=& \tosql{Q^*}
        \\
        \tosql{\distinct (R_1^*-R_2^*)} &=& \multicolumn{4}{l}{\kwsel~\kwdist~*~\kwfrom~(\tosql{R_1^*}~\kwexcept~\kwall~\tosql{R_2^*}s)~r}
        \\
      \tosql{x \leftarrow F} &=& 
        \multicolumn{4}{l}{
        \left\{
        \begin{array}{ll} 
        (\tosql{F})~x & \text{($x$ closed)}\\
        \kwlateral~(\tosql{F})~x & \text{(otherwise)}
        \end{array}\right.
        }
        \\
      \tosql{x \leftarrow G} &=& 
      \multicolumn{4}{l}{\left\{
      \begin{array}{ll} 
      (\tosql{G})~x & \text{($x$ closed)}\\
      \kwlateral~(\tosql{G})~x & \text{(otherwise)}
      \end{array}\right.
      }
        \\
      \multicolumn{6}{c}{
      \tosql{\bigcup\setlit{\setlit{M^*}~\setwhere~X \mid \vect{x \leftarrow
        F}}} = \kwsel~\kwdist~\tosql{M^*}~\kwfrom~\tosql{\vect{x \leftarrow
                F}}~\kwsqlwhere~\tosql{X}
      }
      \\
      \multicolumn{6}{c}{
      \tosql{\biguplus\msetlit{\msetlit{M^*}~ \bagwhere~X \mid \vect{x \leftarrow
        G}}} = \kwsel~\tosql{M^*}~\kwfrom~\vect{\tosql{ x \leftarrow
                G}}~\kwsqlwhere~\tosql{X}
      }
\end{array}
\]}
\caption{Translation to SQL}\label{fig:tosql}
\end{figure}


Normalized queries can be translated to SQL as shown in
Figure~\ref{fig:tosql} as long as they have a flat collection type. The translation uses $\kwsel~\kwdist$ and $\kwunion$ where a set semantics is needed, and $\kwsel$, $\kwunion~\kwall$ and $\kwexcept~\kwall$ in the case of bag semantics.  Note that promotion expressions $\promote Q^*$
are translated simply by translating $Q^*$, because in SQL there is no
type distinction between set and multiset queries: all query results
are multisets, and sets are considered to be multisets having no
duplicates.

The other main complication in this translation is in handling
generators $x \leftarrow F$, $x \leftarrow G$ where $F$ or $G$ may be a non-closed expression
$\promote(Q^*)$, $R^*_1-R^*_2$, or $\distinct(R^*_1 - R^*_2)$ containing references to other
locally-bound variables.  To deal with the resulting lateral variable
references, we add the $\kwlateral$ keyword to such queries.  As
explained earlier, the use of $\kwlateral$ can be problematic and we
will return to this issue in Section~\ref{sec:delat}.

\begin{remark}[Record flattening]
  The above translations handle queries that take flat tables as input
  and produce flat results (collections of flat records
  $\tuple{\vect{\ell:b}}$).  It is straightforward to support queries
  that return nested records (i.e. records containing other records,
  but not collections).  For example, a query
  \linebreak
  $M : \msetlit{\tuple{b_1,\tuple{b_2,b_3}}}$ can be handled by
  defining both directions of the obvious isomorphism
  $N : \msetlit{\tuple{b_1,\tuple{b_2,b_3}}} \cong
  \msetlit{\tuple{b_1,b_2,b_3}}: N^{-1}$,
  normalizing the flat query $N \circ M$, evaluating the
  corresponding SQL, and applying the inverse $N^{-1}$ to the results.
  Such
  \emph{record flattening} is described in detail by Cheney et
  al.~\cite{cheney14tr} and is implemented in Links, so we will use it
  from now on without further discussion.
\end{remark}

\section{A relational calculus of tabular functions}\label{sec:graph}
We now introduce $\NRCg$, an extension of the calculus $\NRClsb$
providing a new type of finite tabular function graphs (in the
remainder of this paper, also called simply ``graphs''; they are similar
to the finite maps and tables of Gibbons et
al.~\cite{gibbons18icfp}). The syntax of $\NRCg$ is defined as
follows:

\begin{tabular}{l@{\hspace{1em}}rcl}
  \textbf{Types} & $\sigma, \tau$ & $::=$ & $\cdots  \orelse
                                            \vect{\sigma} \strictfun
                                            \tau$
  \\
  \textbf{Terms} & $M, N$ & $::=$ & $\cdots \orelse \setgraph{\Theta}{N} \orelse \baggraph{\Theta}{N} \orelse
                                    M \gapp(\vect{N})$
\end{tabular}

Semantically, the type of graphs $\vect{\sigma} \strictfun \tau$ will be interpreted as the set of finite functions from sequences of values of type $\vect{\sigma}$ to values in $\tau$: such functions can return non-trivial values only for a finite subset of their input type. In our settings, we will require the output type of graphs to be a collection type (i.e. $\tau$ shall be either $\setlit{\tau'}$ or $\msetlit{\tau'}$ for some $\tau'$), and we will use $\emptyset$ or $\emptymset$ as the trivial value. The typing rules involving graphs are shown in Figure~\ref{fig:graphtyping}.

\begin{figure}[tb]
  \begin{center}
    \begin{tabular}{c}
      \rVS
        \AxiomC{$(\Gamma, \vect{x_{i-1} : \sigma_{i-1}} \vdash L_i : \setlit{\sigma_i})_{i=1,\ldots,n}$}
        \noLine
        \UnaryInfC{$\Gamma, \vect{x : \sigma} \vdash M : \setlit{\tau}$}
        \UnaryInfC{$\Gamma \vdash \setgraph{\vect{x \gets L}}{M} : \vect{\sigma} \strictfun \setlit{\tau}$}
        \DisplayProof
      \qquad
        \AxiomC{$(\Gamma, \vect{x_{i-1} : \sigma_{i-1}} \vdash L_i : \setlit{\sigma_i})_{i=1,\ldots,n}$}
        \noLine
        \UnaryInfC{$\Gamma, \vect{x : \sigma} \vdash M : \msetlit{\tau}$}
        \UnaryInfC{$\Gamma \vdash \baggraph{\vect{x \gets L}}{M} : \vect{\sigma} \strictfun \msetlit{\tau}$}
        \DisplayProof
      \\
      \rVS
        \AxiomC{$\Gamma \vdash M : \vect{\sigma} \strictfun \tau$}
        \AxiomC{$(\Gamma \vdash N_i : \sigma_i)_i$}
        \BinaryInfC{$\Gamma \vdash M \gapp(\vect{N}) : \tau$}
        \DisplayProof
      \qquad
        \AxiomC{$\Gamma \vdash M : \vect{\sigma} \strictfun \msetlit{\tau}$}
        \noLine
        \UnaryInfC{$\Gamma \vdash N : \vect{\sigma} \strictfun \msetlit{\tau}$}
        \UnaryInfC{$\Gamma \vdash M - N : \vect{\sigma} \strictfun \msetlit{\tau}$}
        \DisplayProof
      \\
      \rVS
        \AxiomC{$\Gamma \vdash M : \vect{\sigma} \strictfun \setlit{\tau}$}
        \noLine
        \UnaryInfC{$\Gamma \vdash N : \vect{\sigma} \strictfun \setlit{\tau}$}
        \UnaryInfC{$\Gamma \vdash M \cup N : \vect{\sigma} \strictfun \setlit{\tau}$}
        \DisplayProof
      \qquad
        \AxiomC{$\Gamma \vdash M : \vect{\sigma} \strictfun \msetlit{\tau}$}
        \noLine
        \UnaryInfC{$\Gamma \vdash N : \vect{\sigma} \strictfun \msetlit{\tau}$}
        \UnaryInfC{$\Gamma \vdash M \uplus N : \vect{\sigma} \strictfun \msetlit{\tau}$}
        \DisplayProof
      \\
      \rVS
        \AxiomC{$\Gamma \vdash M : \vect{\sigma} \strictfun \msetlit{\tau}$}
        \UnaryInfC{$\Gamma \vdash \distinct M : \vect{\sigma} \strictfun \setlit{\tau}$}
        \DisplayProof
      \qquad
        \AxiomC{$\Gamma \vdash M : \vect{\sigma} \strictfun \setlit{\tau}$}
        \UnaryInfC{$\Gamma \vdash \promote M : \vect{\sigma} \strictfun \msetlit{\tau}$}
        \DisplayProof


    \end{tabular}
  \end{center}
  \caption{$\NRCg$ additional typing rules.}\label{fig:graphtyping}
\end{figure}

Graphs are created using the \emph{graph abstraction} operations $\setgraph{\Theta}{N}$ and $\baggraph{\Theta}{N}$, where $\Theta$ is a sequence of generators in the form $\vect{x \gets M}$; the dual operation of \emph{graph application} is denoted by $M \gapp(\vect{N})$. An expression of the form $\setgraph{\vect{x \gets M}}{N}$ is used to construct a (finite) tabular function mapping 
each sequence of values $R_1,\ldots,R_n$ in the sets $M_1,\ldots,M_n$ to the set $N\subst{\vect{x}}{\vect{R}}$.
If each $M_i$ has type $\setlit{\sigma_i}$ and $N$ has type $\setlit{\tau}$, then the graph
has type $\vect{\sigma} \strictfun \setlit{\tau}$. Similarly, if $N$ has type $\msetlit{\tau}$, $\baggraph{\vect{x \gets M}}{N}$ has type $\vect{\sigma} \strictfun \msetlit{\tau}$. The terms $M_1,\ldots,M_n$ constitute the (finite) \emph{domain} of this graph. When the kind of graph application (set-based or bag-based) is clear from the context or unimportant, we will allow ourselves to write $\graph{-}{-}$ instead of $\setgraph{-}{-}$ or $\baggraph{-}{-}$.

A graph $G$ of type $\vect{\sigma} \strictfun \tau$ can be applied to a sequence of terms $N_1,\ldots, N_n$ of type $\sigma_1, \ldots, \sigma_n$ to obtain a term of type $\tau$. If $G = \graph{\vect{x \gets L}}{M}$, then we will want the semantics of $\graph{\vect{x \gets L}}{M} \gapp(\vect{N})$ to be the same as that of $M\subst{\vect{x}}{\vect{N}}$, provided that each of the $N_i$ is in the corresponding element of the domain of the graph. The typing rule does \emph{not}
enforce this requirement and if any of the $N_i$ is not an element of $L_i$, the graph application will evaluate to an empty set or bag (depending on $\tau$).

Graphs can also be merged by union, using $\cup$ or $\uplus$ depending on their output collection kind. Furthermore, graphs that return bags can be subtracted from one another using bag difference; the deduplication and promotion operations also extend to graphs in the obvious way.

\begin{lemma}
In \NRCg, $\Gamma \vdash M : \sigma$ and $\Gamma \vdash M : \tau$, then $\sigma = \tau$.
\end{lemma}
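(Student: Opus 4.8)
The plan is to prove the statement by structural induction on the term $M$, reusing the earlier uniqueness lemma for $\NRClsb$ as the basis and extending it to the new constructs of $\NRCg$. For the fragment inherited from $\NRClsb$ the argument is exactly as before: the type annotations à la Church on empty collections and on $\lambda$-binders, together with the fact that comprehension and generator binders inherit their types from their generator sources, make every inherited term form syntax-directed, so the (unique, by the induction hypothesis) type of each immediate subterm determines the type of $M$ uniquely.

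For the genuinely new forms I would treat graph abstraction, graph application, and the graph-lifted operations in turn. For $\setgraph{\vect{x \gets L}}{M}$ and $\baggraph{\vect{x \gets L}}{M}$ there is a single applicable rule, which reads off the domain types $\sigma_i$ from the (unique, by the induction hypothesis) types $\setlit{\sigma_i}$ of the $L_i$ and the codomain from the unique type of $M$; hence the graph type $\vect{\sigma} \strictfun \setlit{\tau}$, respectively $\vect{\sigma} \strictfun \msetlit{\tau}$, is forced. For graph application $M \gapp(\vect{N})$ the induction hypothesis fixes the type of $M$ to a unique $\vect{\sigma} \strictfun \tau$, and the result type is exactly its codomain $\tau$.

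The one point requiring care, and the only place where the proof genuinely departs from the $\NRClsb$ case, is that the operations $\cup$, $\uplus$, $-$, $\distinct$ and $\promote$ are now each governed by two rules: the original one acting on a collection type and a new one acting on a graph type. The type system is therefore no longer literally syntax-directed. I would resolve this by observing that the two rules for each such operator are disambiguated by the type of the operand: the collection rule requires a set or bag type, whereas the graph rule requires a graph type of the form $\vect{\sigma} \strictfun -$, and these type forms are syntactically disjoint. Concretely, given two typing derivations for, say, $\distinct M'$, the induction hypothesis forces the operand $M'$ to receive the same type $\rho$ in both; since $\rho$ is either a bag type or a graph type but not both, both derivations must end with the same rule, whose conclusion type is a function of $\rho$ alone, so the two types of $\distinct M'$ coincide. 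The same dichotomy handles $\cup$, $\uplus$, $-$ and $\promote$.

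I expect this disambiguation step to be the main, and essentially the only, obstacle, as it is where the heterogeneity introduced by graphs interacts with the reuse of the collection-operator syntax; every remaining case is a routine extension of the inherited syntax-directed argument.
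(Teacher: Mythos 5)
Your proposal is correct, but there is nothing in the paper to compare it against: the paper states this lemma (like its \NRClsb counterpart) without proof, treating it as a routine consequence of the Church-style annotations. Your argument is the standard one and fills that gap soundly. In particular, you correctly identify the only point at which the extension to \NRCg is not entirely mechanical, namely that $\cup$, $\uplus$, $-$, $\distinct$ and $\promote$ are each governed by two rules (one at collection type, one at graph type), and your disambiguation via the induction hypothesis is right: the operand's unique type is either a collection type or a graph type $\vect{\sigma} \strictfun \tau$, these syntactic forms are disjoint, so exactly one rule can conclude the derivation and its conclusion is a function of the operand's type. The graph cases are also handled correctly: the domain types of $\setgraph{\vect{x \gets L}}{M}$ and $\baggraph{\vect{x \gets L}}{M}$ are read off the (unique) types of the $L_i$, so no annotation is needed on the bound variables, and for $M \gapp(\vect{N})$ the result type is the codomain of the unique graph type of $M$. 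One small technicality worth making explicit: the induction must be on the term $M$ with the statement universally quantified over $\Gamma$ (and $\sigma$, $\tau$), so that the induction hypothesis can be instantiated at the extended contexts $\Gamma, \vect{x : \sigma}$ arising under graph abstractions, comprehensions and lambdas; as you state it, this quantification is implicit but needed.
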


Whenever $M$ is well typed and its typing environment is made clear by the context, we will allow ourselves to write $ty(M)$ for the type of $M$. Furthermore, given a sequence of generators $\Theta = x_1 \gets L_1, \ldots x_n \gets L_n$, such that for $i = 1, \ldots, n$ we have $x_1 : \sigma_1, \ldots, x_{i-1} : \sigma_{i-1} \vdash L_i : \sigma_i$, we will write $ty(\Theta)$ to denote the associated typing context:
\[
  ty(\Theta) := x_1 : \sigma_1, \ldots, x_n : \sigma_n
\]

\subsection{Semantics and translation to \NRClsb}
The semantics of \NRClsb is extended to \NRCg as follows: 

\[
  \begin{array}{l}
    \sem{\setgraph{\vect{x \gets L}}{M}}\rho(\vect{u},v)
    \\
    \quad = \left(\bigwedge_i \sem{L_i}\rho[x_1 \mapsto u_1,\ldots,x_{i-1} \mapsto u_{i-1}] u_i\right) \land \sem{M}\rho[\vect{x \mapsto u}] v
    \\
    \sem{\baggraph{\vect{x \gets L}}{M}}\rho(\vect{u},v)
    \\
    \quad = \left(\bigwedge_i \sem{L_i}\rho[x_1 \mapsto u_1,\ldots,x_{i-1} \mapsto u_{i-1}] u_i\right) \times \sem{M}\rho[\vect{x \mapsto u}] v
    \\
    \sem{M \gapp(\vect{N})}\rho v
    = \sem{M}\rho~(\vect{\sem{N}\rho}, v)
  \end{array}
\]

In this definition, graph abstractions are interpreted as collections of pairs of values $(\vect{u}, v)$ where the $\vect{u}$ represent the input and $v$ the corresponding output of the graph; consequently, the semantics of a graph $\setgraph{\vect{x \gets L}}{M}$ states that the multiplicity of $(\vect{u}, v)$ is equal to the multiplicity of $v$ in the semantics of $M$ (where each $x_i$ is mapped to $u_i$) if each $u_i$ is in the semantics of $L_i$, and zero otherwise. The semantics of bag graph abstractions is similar, with $\times$ substituted for $\land$ to allow multiplicities greater than one in the graph output.

For graph applications $M \gapp(\vect{N})$, the multiplicity of $v$ is obtained as the multiplicity of $(\vect{\sem{N}\rho},v)$ in the semantics of $M$. The semantics of set and bag union, bag difference, bag deduplication, and set promotion, as defined in 
\linebreak
\NRClsb, are extended to graphs and remain otherwise unchanged in \NRCg.

In fact (as noted for example by Gibbons et al.~\cite{gibbons18icfp}),
the graph constructs of \NRCg are just a notational convenience: we
can translate \NRCg back to \NRClsb by translating types
$\vect{\sigma} \strictfun \setlit{\tau}$ and
$\vect{\sigma} \strictfun \msetlit{\tau}$ to
$\setlit{\tuple{\vect{\sigma},\tau}}$ and
$\msetlit{\tuple{\vect{\sigma},\tau}}$ respectively, and the term
constructs are rewritten as follows:
\begin{eqnarray*}
  \setgraph{\vect{x\leftarrow L}}{M} & \leadsto &
                                                  \bigcup\setlit{\setlit{\tuple{\vect{x},y}
                                                  } \mid \vect{x
                                                  \leftarrow L}, y
                                                  \leftarrow M}\\
  \baggraph{\vect{x\leftarrow L}}{M} & \leadsto &
                                                  \biguplus\msetlit{\msetlit{\tuple{\vect{x},y}
                                                  } \mid \vect{x
                                                  \leftarrow \promote(L)}, y
                                                  \leftarrow M}\\
M \gapp \tuple{\vect{N}} &\leadsto & \bigcup\setlit{\setlit{y}
                                           ~\setwhere \vect{x} = \vect{N}
                                           \mid \tuple{\vect{x},y}
                                           \leftarrow M} \quad (M : \vect{\sigma} \strictfun \setlit{\tau})\\
M \gapp \tuple{\vect{N}} &\leadsto & \biguplus\msetlit{\msetlit{y}
                                           ~\bagwhere \vect{x} = \vect{N}
                                           \mid \tuple{\vect{x},y}
                                           \leftarrow M} \quad (M : \vect{\sigma} \strictfun \msetlit{\tau})
\end{eqnarray*}

\section{Delateralization}\label{sec:delat}
As explained at the end of section~\ref{sec:background}, if a
subexpression of the form $\promote(N)$ or $N_1-N_2$ contains free
variables introduced by other generators in the query (i.e. not
globally-scoped table variables), such queries cannot be translated
directly to SQL, unless the SQL:1999 $\kwlateral$ keyword is used. 

More precisely, we can give the following definition of lateral variable occurrence.
\begin{definition}
  Given a query containing a comprehension $\comprehension{M \mid \Theta, x \gets N, \Theta'}$ or $\mcomprehension{M \mid \Theta, x \gets N, \Theta'}$ as a subterm, we say that $x$ \emph{occurs laterally} in $\Theta'$ if, and only if, there is a binding $y \gets N'$ in $\Theta'$ such that $x \in \FV(N')$.
\end{definition}

Since $\kwlateral$ is not implemented on all databases, and is sometimes
implemented inefficiently, we would still like to avoid it. In this section
we show how lateral occurrences can be eliminated even in the presence of bag promotion and bag difference, by means of a process we call \emph{delateralization}.

Using the \NRCg constructs, we can delateralize simple cases of
deduplication or multiset difference as follows:
{\small\[\begin{array}{l}
\biguplus\msetlit{M \mid x \leftarrow N, y \leftarrow \promote(P)} \leadsto
\biguplus\msetlit{M \mid x \leftarrow N, y \leftarrow 
  \promote(\graph{x \leftarrow \distinct N}{P}) \gapp x}
  \\
\biguplus\msetlit{M \mid x \leftarrow N, y \leftarrow P_1-P_2} \leadsto ~
\\
  \qquad
  \biguplus\msetlit{M \mid x \leftarrow N, y \leftarrow
(\graph{x \leftarrow \distinct{N}}{P_1} - \graph{x \leftarrow
                                                                          \distinct{N}}{P_2})\gapp
                                                                          x}
\\
\bigcup\setlit{M \mid x \leftarrow N, y \leftarrow \distinct(P_1-P_2)} \leadsto
\\
  \qquad
\bigcup\setlit{M \mid x \leftarrow N, y \leftarrow
\distinct(\graph{x \leftarrow N}{P_1} - \graph{x \leftarrow
                                                                          N}{P_2})\gapp x}
\end{array}\]}
It is necessary to deduplicate $N$ in the first two rules to ensure that the results correctly represent
finite maps from the distinct elements of $N$ to multisets of
corresponding elements
of $P$.  (In any case, $N$ needs to be deduplicated in order to be
used as a set in $\graph{x \leftarrow \distinct N}{\_}$).

Given a query expression in normal form, the above rules together with
standard equivalences (such as commutativity of independent
generators) can be used to delateralize it: that is, remove all
occurrences of free variables in subexpressions of the form
$\promote(N)$, $M_1-M_2$, or $\distinct(M_1-M_2)$.
\begin{theorem}
  If $M$ is a flat query in normal form, then there exists $M'$
  equivalent to $M$ with no lateral variable occurrences.
\end{theorem}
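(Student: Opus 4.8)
The plan is to prove the theorem by giving an explicit delateralization procedure built from the three displayed rewrite rules (extended to \NRCg and then translated back to \NRClsb), and then showing that the procedure terminates and preserves the denotational semantics. The starting point is a purely syntactic observation about Figure~\ref{fig:nestrelNF}: in a flat normal form the only generator bodies that can carry a free local variable are the \emph{blocking} forms $\promote Q^*$, $R_1^* - R_2^*$, and $\distinct(R_1^* - R_2^*)$, since every table generator $t$ and $\distinct t$ is closed. Consequently every lateral occurrence of $M$ is an occurrence of an earlier-bound variable inside one of these three forms, and it suffices to remove exactly those.

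First I would generalise the three rules from a single preceding generator $x \gets N$ to an arbitrary tuple of abstracted variables. Inside any comprehension, pick the \emph{leftmost} generator $x_j \gets N_j$ whose body $N_j$ is a blocking form containing a lateral variable. By minimality of $j$, every preceding body $N_i$ (for $i<j$) binds no variable of this comprehension: a table generator is closed outright, and a blocking generator with no lateral variable has no free local variable of the comprehension at all. In particular the generators $N_{i_1},\dots,N_{i_m}$ that bind the lateral variables $\vect{x}$ of $N_j$ are closed (with respect to this comprehension's scope) and pairwise independent, so I can form the \emph{closed} graph $\graph{\vect{x} \gets \distinct\vect{N}}{P}$ — where $P$ is the content of $N_j$ ($P$ itself for $\promote P$, and the two operands for a difference) — and replace the generator by the graph application $\gapp \tuple{\vect{x}}$ exactly as in the displayed rules. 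Translating the \NRCg scaffolding back to \NRClsb as in Section~\ref{sec:graph} then turns this generator into a range over a closed collection of pairs, relocating the dependence on $\vect{x}$ into a $\setwhere$ or $\bagwhere$ guard; since the definition of lateral occurrence only inspects generator bodies, such a guard is harmless, which matches the fact that SQL $\kwfrom$ subqueries may not refer to sibling variables but $\kwsqlwhere$ clauses may.

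The termination of this procedure is the point that requires care, and the step I expect to be the main obstacle. A single rule application does not reduce the naive count of lateral occurrences: the content $P$ still mentions $\vect{x}$, and after translation these variables are bound by the fresh domain generator $\vect{x} \gets \distinct\vect{N}$, so the occurrence reappears one level deeper, now lateral with respect to the graph rather than to the original comprehension. Progress is instead visible in the \emph{sizes} of the offending subexpressions. I would assign to a term the finite multiset $\mu(M)$ collecting the sizes of all \emph{maximal} blocking subexpressions that contain a lateral occurrence (ranging over all nesting levels of the whole term, so that references to variables of an enclosing comprehension are handled when that comprehension is processed). A rewrite targets one such subexpression $N_j$ of size $s$: its lateral occurrence at the current level is removed, and the only blocking subexpressions that newly carry a lateral occurrence are proper subterms of the content of $N_j$, now sitting inside the freshly created closed graph and hence of size strictly less than $s$; the maximal blocking subexpressions occurring inside the closed domains $\distinct N_i$ are left untouched. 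Thus $\mu(M)$ strictly decreases in the well-founded multiset order over $\bbN$, the procedure terminates, and at termination no blocking subexpression contains a lateral occurrence, which is precisely the absence of lateral variable occurrences.

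Finally I would discharge the semantic obligation that each rewrite preserves the denotation, which follows directly from the graph semantics of Section~\ref{sec:graph}. The graph $\graph{\vect{x}\gets\distinct\vect{N}}{P}$ denotes the finite map sending each distinct tuple $\vect{u}$ drawn from $\vect{N}$ to $\sem{P}$ under $\vect{x}\mapsto\vect{u}$, and applying it to $\tuple{\vect{x}}$ recovers exactly $\sem{P}$ whenever $\vect{x}$ ranges over the distinct elements of $\vect{N}$ — which holds here because the abstracted variables are bound by the very generators $N_i$ over which the domain is taken. The deduplication of each $N_i$ in the domain is essential rather than cosmetic: as already noted in the text, it is what makes the graph a genuine finite map instead of a multiset of entries, so that a lookup returns the intended collection with the correct multiplicities. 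Composing the per-comprehension rewrites over the whole term then yields the required $M'$, equivalent to $M$ and free of lateral variable occurrences.
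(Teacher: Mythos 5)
Your proposal follows essentially the same route as the paper's proof: read off from the grammar of Figure~\ref{fig:nestrelNF} that lateral variables can only live in the blocking generator bodies $\promote Q^*$, $R_1^*-R_2^*$, $\distinct(R_1^*-R_2^*)$; repeatedly apply the three graph-based rewrite rules (whose semantic correctness, including the essential deduplication of the domain, is what the paper's appendix lemmas establish); and terminate via a well-founded measure. The paper measures progress by the sum, over all $\promote$/difference subexpressions, of the number of their free variables, selecting an \emph{outermost} offending subexpression and using commutativity of independent generators; you select the \emph{leftmost} offending generator, abstract all its lateral variables at once, and use a multiset of sizes. Up to these interchangeable bookkeeping choices the approaches coincide --- but your termination step, which you yourself identify as the crux, contains a genuine gap.

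The false claim is that ``the maximal blocking subexpressions occurring inside the closed domains $\distinct N_i$ are left untouched.'' The rewrite does not leave the binding generators untouched: it \emph{copies} each $N_i$ into the domain of the new graph, while the original generator $x_i \gets N_i$ also stays in place. Leftmost-ness guarantees only that $N_i$ has no lateral variable \emph{of the current comprehension}; it does not prevent $N_i$ from containing, at a deeper nesting level, an offending blocking subexpression of its own. Take for instance
\[
\biguplus\msetlit{\msetlit{M}~\bagwhere~X_0 \mid x \gets \promote(Q_1),\ y \gets \promote(P)}
\]
with $x \in \FV(P)$, where $Q_1 = \bigcup\setlit{\setlit{\tuple{\ell = z.\ell}}~\setwhere~X \mid z \gets \distinct t,\ w \gets \distinct(R_1 - R_2)}$ and $z \in \FV(R_1)$. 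Here $\promote(Q_1)$ is closed, so your procedure rewrites the generator of $y$, and in doing so copies $\promote(Q_1)$ --- together with its internal offending subexpression $\distinct(R_1-R_2)$, of some size $s_1$ --- into the graph domain. The multiset $\mu$ loses $s = |\promote(P)|$ and gains elements smaller than $s$ \emph{plus a fresh copy of $s_1$}; if $s_1 \ge s$ (make $R_1,R_2$ large), $\mu$ has not decreased in the multiset order, and the induction does not go through. (Under the literal reading of ``blocking subexpression containing a lateral occurrence'' matters are worse: the new generator body $\promote(\cdots)$ itself contains lateral occurrences and is larger than what was removed, so you must intend the ``offending generator body'' reading, and even then duplication breaks the argument.) The missing idea is an ordering discipline: a generator may be copied into a graph domain only after it has itself been fully delateralized at every depth --- e.g.\ process comprehensions inside-out and generator lists left to right --- after which copies contribute nothing new to $\mu$ and your multiset argument becomes sound. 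In fairness, the paper's own proof is terse on exactly the same point: its metric gains $\|N_i\|$ for the copied generator, so it strictly decreases only when that contribution is already zero, which is precisely the invariant such a discipline maintains.
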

The proof of correctness of the basic delateralization rules and the
above correctness theorem are
\begin{techreport}
in the appendix.
\end{techreport}
\begin{cameraready}
in the full version of this paper~\cite{ricciotti21lifting}.
\end{cameraready}  
\if 0
\begin{proof}
We establish a metric on expressions that counts the number of
variables that occur laterally in each subexpression $\iota P$ or
$Q_1-Q_2$.  We can always locate and eliminate an outermost occurrence
of such a variable by permuting independent generators to create a
subexpression matching the simple delateralization rules above.  After
applying the rule, the resulting \NRCg query can be translated to plain
\NRClsb queries and renormalized,
in such a way that the metric is non-increasing, and the process is
repeated until no lateral variable occurrences remain.\qed
\end{proof}
\fi

To illustrate some subtleties of the translation, here is a
trickier example:
\[\biguplus\msetlit{M \mid x \leftarrow N, y \leftarrow
  Q - \promote(P)}\]
where $Q,P$ both depend on $x$.  We proceed from the outside in, first
delateralizing the difference:
\[\biguplus\msetlit{M \mid x \leftarrow N, y \leftarrow
  (\graph{x \leftarrow \distinct(N)}{Q} - \graph{x
    \leftarrow\distinct(N)}{\promote(P)})\gapp x}\]
Note that this still contains a lateral subquery, namely
$\promote(P)$ depends on $x$.  After translating back to
$\NRClsb$, and delateralizing $\promote(P)$, the query normalizes to:
{\small\[\begin{array}{rcl}
Q_1 &=& \bigcup\setlit{(x,z) \mid x \in \distinct(N),z
  \leftarrow P}\\
 Q_2 &=& (\biguplus\msetlit{(x,z) \mid x \in \promote\distinct(N),z \leftarrow
  Q}) - (\biguplus\msetlit{(x,z) \mid x \in \promote\distinct(N),(x',z)
  \leftarrow \promote(Q_1),x=x'})\\
&& \biguplus\msetlit{M \mid x \leftarrow N, (x',y) \leftarrow Q_2
  , x=x'}
\end{array}\]}

\section{Query lifting and shredding}\label{sec:shredding}
  In the previous sections, we have discussed how to translate queries with flat collection input and output to SQL. The shredding technique, introduced in~\cite{cheney14sigmod}, can be used to convert queries with \emph{nested} output (but flat input) to multiple flat queries that can be independently evaluated on an SQL database, then stitched together to obtain the required nested result. This section provides an improved version of shredding, extended to a more liberal setting mixing sets and bags and allowing bag difference operations, and described using the graph operations we have introduced, allowing an easier understanding of the shredding process.

  We introduce, in Figure~\ref{fig:jshredding}, a \emph{shredding judgment} to denote the process by which, given a normalized \NRClsb query, each of its subqueries having a nested collection type is lifted (in a manner analogous to lambda-lifting~\cite{johnsson85fpca}) to an independent graph query: more specifically, shredding will produce a \emph{shredding environment} (denoted by $\Phi, \Psi, \ldots$), which is a finite map associating special \emph{graph variables} $\varphi, \psi$ to $\NRCg$ terms:
  \[
    \Phi, \Psi, \ldots ::= [\vect{\varphi \mapsto M}]
  \]
  The shredding judgment has the following form:
  \[
    \Phi; \Theta \vdash M \shreds \flat{M} \orelse \Psi
  \]
  where the $\shreds$ symbol separates the input (to the left) from the output (to the right).
  The normalized \NRClsb term $M$ is the query that is being considered for shredding; $M$ may contain free variables declared in $\Theta$, which must be a sequence of $\NRClsb$ set comprehension bindings. $\Theta$ is initially empty, but during shredding it is extended with parts of the input that have already been processed. Similarly, the input shredding environment $\Phi$ is initially empty, but will grow during shredding to collect shredded queries that have already been generated. It is crucial, for our algorithm to work, that $M$ be in the form previously described in Figure~\ref{fig:nestrelNF}, as this allows us to make assumptions on its shape: in describing the judgment rules, we will use the same metavariables as are used in that grammar.

  The output of shredding consists of a shredded term $\flat{M}$ and an output shredding environment $\Psi$. $\Psi$ extends $\Phi$ with the new queries obtained by shredding $M$; $\flat{M}$ is an output \NRCg query obtained from $M$ by lifting its collection typed subqueries to independent queries defined in $\Psi$.

  \begin{figure}[t]
    \scriptsize{
      \begin{center}
      \begin{tabular}{c}
        \rVS
          \AxiomC{$X$ is a base term}
          \UnaryInfC{$\Phi; \Theta \vdash X \shreds X \orelse \Phi$}
          \DisplayProof
        \qquad
            \AxiomC{$(\Phi_{i-1}; \Theta \vdash M_i \shreds \flat{M}_i \orelse \Phi_i)_{i=1,\ldots,n}$}
            \UnaryInfC{$\Phi_0; \Theta \vdash \tuple{\vect{\ell = M}} \shreds \tuple{\vect{\ell = \flat{M}}} \orelse \Phi_n$}
            \DisplayProof
        \\
\if 0
        \wrnote{we can use separate rules for the various cases of $X$}
        \\
        \rVS
          \AxiomC{\phantom{$A$}}
          \UnaryInfC{$\Phi; \Theta \vdash x \shreds x \orelse \Phi$}
          \DisplayProof
        \qquad
          \AxiomC{\phantom{$A$}}
          \UnaryInfC{$\Phi;\Theta \vdash x.\ell \shreds x.\ell \orelse \Phi$}
          \DisplayProof
        \\
        \rVS
          \AxiomC{\phantom{$A$}}
          \UnaryInfC{$\Phi;\Theta \vdash c(\vect{X}) \shreds c(\vect{X}) \orelse \Phi$}
          \DisplayProof
        \qquad
          \AxiomC{\phantom{$A$}}
          \UnaryInfC{$\Phi;\Theta \vdash \kwempty(Q) \shreds \kwempty(Q) \orelse \Phi$}
          \DisplayProof
        \\
        \rVS
        \wrnote{however, I think the above is a bit pointless}
        \\
          \AxiomC{$\Phi; \Theta \vdash Q \shreds \varphi \gapp (\dom(\Theta)) \orelse \Psi$}
          \UnaryInfC{$
            \begin{array}{rl}
              \Phi; \Theta \vdash & \promote Q \shreds \varphi \gapp (\dom(\Theta))
              \\
              \orelse & (\Psi \setminus \varphi)[\varphi \mapsto \promote(\Psi(\varphi))]
            \end{array}$}
          \DisplayProof      
\fi
        \\
        \rVS
          \AxiomC{$\varphi \notin \dom(\Phi_n)$}
          \noLine
          \UnaryInfC{$(\Phi_{i-1}; \Theta \vdash C_i \shreds \psi_i \gapp \dom(\Theta) \orelse \Phi_i)_{i=1,\ldots,n}$}
          \UnaryInfC{$
            \begin{array}{rl}
              \Phi_0; \Theta \vdash & \bigcup \vect{C} \shreds \varphi \gapp \dom(\Theta)
              \\
              \orelse & (\Phi_n \setminus \vect{\psi})[\varphi \mapsto \bigcup\vect{\Phi_n(\psi)}]
            \end{array}$}
          \DisplayProof
        \;
          \AxiomC{$\varphi \notin \dom(\Phi_n)$}
          \noLine
          \UnaryInfC{$(\Phi_{i-1};\Theta \vdash D_i \shreds \psi_i \gapp \dom(\Theta) \orelse \Phi_i)_{i=1,\ldots,n}$}
          \UnaryInfC{$
            \begin{array}{rl}
              \Phi_0;\Theta \vdash & \biguplus \vect{D} \shreds \varphi  \gapp \dom(\Theta) 
              \\
              \orelse & (\Phi_n \setminus \vect{\psi})[\varphi \mapsto \biguplus\vect{\Phi_n(\psi)}]
            \end{array}$}
          \DisplayProof
        \\
        \rVS
          \AxiomC{$\varphi \notin \dom(\Psi)$}
          \noLine
          \UnaryInfC{$\Phi; \Theta, \vect{x \gets F} \vdash M \shreds \flat{M} \orelse \Psi$}
          \UnaryInfC{$
            \begin{array}{rl}
              \Phi; \Theta \vdash & \bigcup\setlit{\setlit{M}~\plwhere~X| \vect{x \gets F}} \shreds \varphi \gapp \dom(\Theta)
              \\
              \orelse & \Psi[\varphi \mapsto \graph{\Theta}{\bigcup\setlit{\setlit{\flat{M}}~\plwhere~X| \vect{x \gets F}}}]
            \end{array}$
          }
          \DisplayProof
        \\
        \rVS
          \AxiomC{$\varphi \notin \dom(\Psi)$}
          \noLine
          \UnaryInfC{$\Phi; \Theta, \vect{x \gets G^\distinct} \vdash M \shreds \flat{M} \orelse \Psi$}
          \UnaryInfC{$
            \begin{array}{rl}
              \Phi_0; \Theta \vdash & \biguplus\msetlit{\msetlit{M}~\plwhere~X | \vect{x \gets G}} \shreds \varphi \gapp \dom(\Theta)
              \\
              \orelse & \Psi[\varphi \mapsto \graph{\Theta}{\biguplus\msetlit{\msetlit{\flat{M}}~\plwhere~X | \vect{x \gets G}}}]
            \end{array}$
          }
          \DisplayProof
      \end{tabular}
      \[
      \begin{array}{c@{\qquad}c}
      G^\distinct \triangleq \left\{
        \begin{array}{l@{\quad}l}
          Q^* & \text{if $G = \promote Q^*$}
          \\
          \distinct G & \text{otherwise}
        \end{array}
        \right.
      &
      \Phi \setminus \vect{\psi} \triangleq [ (\varphi \mapsto N) \in \Phi \mid \varphi \notin \vect{\psi} ]  
      \end{array}
      \]
    \end{center}}
    \caption{Shredding rules.}\label{fig:jshredding}
  \end{figure}

  The rules for the shredding judgment operate as follows: the first rule expresses the fact that a normalized base term $X$ does not contain
  subexpressions with nested collection type, therefore it can be
  shredded to itself, leaving the shredding environment $\Phi$ unchanged; in the case of tuples, we
  perform shredding pointwise on each field, connecting the input and output shredding environments in a pipeline, and finally combining together the shredded subterms in the obvious way.
  
  The shredding of collection terms (i.e. unions and comprehensions) is performed by means of \emph{query lifting}: we turn the collection into a globally defined (graph) query, which will be associated to a fresh name $\varphi$ and instantiated to the local comprehension context by graph application. This operation is reminiscent of the lambda lifting and closure conversion techniques used in the implementation of functional languages to convert local function definitions into global ones. Thus, when shredding a collection, besides processing its subterms recursively, we will need to extend the output shredding environment with a definition for the new global graph $\varphi$. In the interesting case of comprehensions, $\varphi$ is defined by graph-abstracting over the comprehension context $\Theta$; notice that, since we are only shredding normalized terms, we know that they have a certain shape and, in particular, the judgment for bag comprehensions must ensure that generators $\vect{G}$ be converted into sets.

  The shredding of set and bag unions is performed by recursion on the subterms, using the same plumbing technique we employed for tuples; additionally, we optimize the output shredding environment by removing the graph queries $\vect{\psi}$ resulting from recursion, since they are absorbed into the new graph $\varphi$.

  Notice that since the comprehension generators of our normalized queries must have a flat collection type, they do not need to be processed recursively. Furthermore, since our normal forms ensure that promotion and bag difference terms can only appear as comprehension generators, we do not need to provide rules for these cases.

  \begin{figure}[tb]
    \begin{center}
        \AxiomC{\phantom{$A$}}
        \UnaryInfC{$\vdash \cdot : \cdot$}
        \DisplayProof
      \hspace{0.5cm}
        \AxiomC{$\vdash \Phi : \Gamma$}
        \AxiomC{$\Gamma \vdash M : \vect{\sigma} \strictfun \tau$}
        \AxiomC{$\varphi \notin \dom(\Gamma)$}
        \TrinaryInfC{$\vdash \Phi[\varphi \mapsto M] : (\Gamma, \varphi:\vect{\sigma} \strictfun \tau)$}
        \DisplayProof
    \end{center}
    \caption{Typing rules for shredding environments.}\label{fig:typPhi}
  \end{figure}

  The shredding environments used by the shredding judgment must be well typed, in the sense described by the rules of Figure~\ref{fig:typPhi}: the judgment $\vdash \Phi : \Gamma$ means that the graph variables of $\Phi$ are mapped to terms whose type is described by $\Gamma$. Whenever we add a mapping $[\varphi \mapsto M]$ to $\Phi$, we must make sure that $M$ is well typed (of graph type) in the typing environment $\Gamma$ associated to $\Phi$.

  If $\vdash \Phi : \Gamma$, we will write $ty(\Phi)$ to refer to the typing environment $\Gamma$ associated to $\Phi$. The following result states that shredding preserves well-typedness:

  \begin{theorem}
    Let $\Theta$ be well-typed and $ty(\Theta) \vdash M : \sigma$. If $\Theta \vdash M \shreds \flat{M} \orelse \Phi$, then:
    \begin{itemize}
    \item $\Phi$ is well-typed
    \item $ty(\Phi), ty(\Theta) \vdash \flat{M} : \sigma$
    \end{itemize}
  \end{theorem}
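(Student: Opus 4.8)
The natural approach is induction on the derivation of the shredding judgment, but the statement must first be generalised to an arbitrary well-typed input environment $\Phi$ and strengthened with two auxiliary invariants. Precisely, I would prove: whenever $\vdash\Phi:\Gamma$, $\Theta$ is well-typed, $ty(\Theta)\vdash M:\sigma$ and $\Phi;\Theta\vdash M\shreds\flat{M}\orelse\Psi$, then (i) $\Psi$ extends $\Phi$ monotonically, keeping all existing mappings intact, so $ty(\Psi)\supseteq ty(\Phi)$; (ii) (\emph{locality}) every graph variable occurring in $\flat{M}$, and every graph variable occurring in $\Psi(\varphi)$ for any $\varphi\in\dom(\Psi)\setminus\dom(\Phi)$, already lies in $\dom(\Psi)\setminus\dom(\Phi)$ --- freshly produced material never refers back to the pre-existing environment; (iii) $\Psi$ is well-typed; and (iv) $ty(\Psi),ty(\Theta)\vdash\flat{M}:\sigma$. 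Taking $\Phi=\emptyset$ (trivially well-typed) then yields the theorem, with the output $\Psi$ playing the role of its $\Phi$. Clauses (i) and (ii) are dead weight in most cases but are exactly what makes the union rules go through. Throughout, I write $\vect{\sigma_\Theta}$ for the list of types that $ty(\Theta)$ assigns to $\dom(\Theta)$.

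The base and tuple cases are routine: for a base term the environment is unchanged and $ty(\Phi),ty(\Theta)\vdash X:\sigma$ follows from $ty(\Theta)\vdash X:\sigma$ by weakening with the (irrelevant) graph variables of $ty(\Phi)$; for tuples I apply the induction hypothesis to each field along the pipeline $\Phi_{i-1}\to\Phi_i$, use monotonicity to retype every $\flat{M}_i$ in the final $ty(\Phi_n)$, and reassemble with the record rule. In the two comprehension cases I invoke the induction hypothesis on the head $M$ in the extended context $\Theta,\vect{x\gets F}$ (resp.\ $\Theta,\vect{x\gets G^\distinct}$), which is well-typed because the generators come from a well-typed normalised comprehension; for the bag case I note that $(-)^\distinct$ turns a generator of bag type $\msetlit{\sigma_x}$ into one of set type $\setlit{\sigma_x}$ but leaves the bound variable's type $\sigma_x$ unchanged, so the head is typed in the same context whether the generator is $G$ or $G^\distinct$. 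The freshly built graph body is then a comprehension of the same collection type as the original, and since the (set-typed) generators of $\Theta$ are abstracted, the set- or bag-graph rule of Figure~\ref{fig:graphtyping} assigns the graph type $\vect{\sigma_\Theta}\strictfun\setlit{\tau}$ or $\vect{\sigma_\Theta}\strictfun\msetlit{\tau}$. The graph is closed with respect to local variables --- $\vect{x}$ is bound by the inner comprehension and $\dom(\Theta)$ by the abstraction --- so extending $\Psi$ with the fresh $\varphi$ is well-typed, and $\varphi\gapp\dom(\Theta):\sigma$ since $\dom(\Theta):\vect{\sigma_\Theta}$ under $ty(\Theta)$. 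Re-establishing (i) and (ii) here is immediate, as the output term mentions only the fresh $\varphi$ and the new entry inlines only $\flat{M}$, whose graph variables are fresh by the hypothesis.

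I expect the two union rules to be the main obstacle, because their output environment is not simply an extension of the premises': it \emph{deletes} the auxiliary graphs $\vect{\psi}$ obtained from shredding the components and replaces them by a single $\varphi$ bound to the inlined union $\bigcup\vect{\Phi_n(\psi)}$ (resp.\ $\biguplus\vect{\Phi_n(\psi)}$). The delicate point is that deleting entries from an ordered, well-typed environment is only sound if no surviving entry mentions a deleted one. This is precisely where locality (ii) pays off: applied to each subderivation $\Phi_{i-1}\to\Phi_i$ it tells us that every graph variable introduced while shredding the $i$-th component --- including $\psi_i$ --- has a definition mentioning only variables introduced in that same step, hence none of the earlier $\psi_j$ ($j<i$), and by well-foundedness none of the later ones either. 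Thus no entry of $\Phi_n\setminus\vect{\psi}$ refers to any deleted variable, and the truncation remains a valid well-typed environment. From the induction hypothesis for the $i$-th component together with monotonicity, each $\Phi_n(\psi_i)$ carries the common graph type $\vect{\sigma_\Theta}\strictfun\setlit{\tau}$ (resp.\ $\vect{\sigma_\Theta}\strictfun\msetlit{\tau}$), and its free graph variables survive the deletion; the graph-union rule then types the new entry at that same type and gives $\varphi\gapp\dom(\Theta):\sigma$. The empty-union subcase ($\vect{C}$ empty) is covered by the Church-style type annotations, which fix $\tau$ even in the absence of components. Re-establishing (i) and (ii) for the new environment is again immediate, as $\varphi$ is fresh and its definition inlines only material already local to the current step.
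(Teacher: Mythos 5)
Your proposal is correct, but there is nothing in the paper to compare it against: the paper states this type-preservation theorem without proof, both in the main text and in the appendix (the appendix proves only the delateralization results, the monotonicity Lemma~\ref{lem:shredmono}, the weakening Lemma~\ref{lem:substweak}, and the stitching correctness Theorem~\ref{thm:stitchsound}). Your induction is the natural argument, and your strengthening is well judged. Clause (i) of your invariant is exactly the paper's Lemma~\ref{lem:shredmono} ($\Psi \supseteq \Phi$), which the paper records only as an ingredient of the stitching proof; clause (ii), locality, is your own addition, and it is genuinely needed: in the union rules the output environment $(\Phi_n \setminus \vect{\psi})[\varphi \mapsto \bigcup\vect{\Phi_n(\psi)}]$ deletes the entries $\vect{\psi}$, and without knowing that neither the surviving entries nor the inlined bodies $\Phi_n(\psi_i)$ mention any deleted $\psi_j$, one cannot re-establish well-typedness of the environment in the sense of Figure~\ref{fig:typPhi}, where each entry must be typable from the entries to its left. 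Plain monotonicity is too weak there, so an unstrengthened induction would stall exactly where you predicted. Two details are worth making explicit if you write this out in full. First, locality alone still permits a step-$i$ entry to mention $\psi_i$ itself; you must combine it with the left-to-right ordering of well-typed environments (your appeal to ``well-foundedness'') to conclude that $\Phi_i(\psi_i)$ mentions neither $\psi_i$ nor any later entry. Second, in the comprehension cases the new graph must be typable in $ty(\Psi)$ alone, per Figure~\ref{fig:typPhi}; this relies on the generators of $\Theta$ containing no graph variables and no free term variables other than earlier-bound ones, which is what your remark that the graph is ``closed with respect to local variables'' amounts to. Your handling of $G^\distinct$ (identical bound-variable types, set-typed generators as demanded by the graph rules of Figure~\ref{fig:graphtyping}) and of the empty union via the Church-style annotations is also correct.
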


  We now intend to prove the correctness of shredding: first, we state a lemma which we can use to simplify certain expressions involving the semantics of graph application:

  \begin{definition}
    Let $\Theta$ be a closed, well-typed sequence of generators. A substitution $\rho$ is a model of $\Theta$ (notation: $\rho \vDash \Theta$) if, and only if, for all $x \in \dom(\Theta)$, we have $\sem{\Theta(x))}\rho(x) > 0$.
  \end{definition}
  
  \begin{lemma}\label{lem:grapheqn}
    \begin{enumerate}
      \item $\sem{(\bigcup \vect{G}) \gapp(\vect{N})}\rho = \bigvee_i \sem{G_i \gapp(\vect{N})}\rho$
      \item If $\rho \vDash \Theta$, then for all $M$ we have $\sem{\graph{\Theta}{M} \gapp(\dom(\Theta))}\rho = \sem{M}\rho$.
    \end{enumerate}
  \end{lemma}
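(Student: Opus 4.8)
The plan is to prove both parts by unfolding the denotational semantics of graph application and graph abstraction given in Section~\ref{sec:graph}, reasoning pointwise by applying each equation to an arbitrary output value $v$. Part (1) is a pure definition chase and contains no subtlety: applying both sides to $v$ and unfolding the clause $\sem{M \gapp(\vect{N})}\rho~v = \sem{M}\rho~(\vect{\sem{N}\rho}, v)$ reduces the left-hand side to $\sem{\bigcup \vect{G}}\rho~(\vect{\sem{N}\rho}, v)$. Since set union of graphs inherits the \NRClsb\ semantics $\sem{G_1 \cup G_2}\rho~(\vect{u},v) = \sem{G_1}\rho~(\vect{u},v) \lor \sem{G_2}\rho~(\vect{u},v)$, a trivial induction on the length of $\vect{G}$ rewrites this as $\bigvee_i \sem{G_i}\rho~(\vect{\sem{N}\rho}, v)$, and folding the application clause back in yields $\bigvee_i \sem{G_i \gapp(\vect{N})}\rho~v$, as required. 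No hypothesis on $\rho$ is needed here.

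For part (2), I would write $\Theta = \vect{x \gets L}$ so that $\dom(\Theta) = \vect{x}$, and first treat the set-graph case. Unfolding the application clause with $\vect{N} = \vect{x}$ and using $\sem{x_i}\rho = \rho(x_i)$ rewrites $\sem{\graph{\Theta}{M}\gapp(\dom(\Theta))}\rho~v$ into $\sem{\setgraph{\vect{x \gets L}}{M}}\rho~(\vect{\rho(x)}, v)$. I then unfold the graph-abstraction clause at the point $u_i := \rho(x_i)$. The crucial observation is that the updated environments collapse: $\rho[x_1 \mapsto \rho(x_1), \ldots, x_{i-1} \mapsto \rho(x_{i-1})] = \rho$ and $\rho[\vect{x \mapsto \rho(x)}] = \rho$, since reassigning a variable to its current value is the identity (and each $L_i$ only mentions $x_1,\ldots,x_{i-1}$). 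Hence the abstraction clause simplifies to $\bigl(\bigwedge_i \sem{L_i}\rho~\rho(x_i)\bigr) \land \sem{M}\rho~v$.

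Finally I would discharge the membership guard using the hypothesis $\rho \vDash \Theta$. By definition this gives $\sem{L_i}\rho~\rho(x_i) > 0$ for every $i$; since each $L_i$ has a \emph{set} type, its semantics ranges over $\{0,1\}$, so in fact $\sem{L_i}\rho~\rho(x_i) = 1$ and the conjunction equals $1$. The guard therefore vanishes, leaving $\sem{M}\rho~v$; as $v$ was arbitrary, this establishes $\sem{\graph{\Theta}{M}\gapp(\dom(\Theta))}\rho = \sem{M}\rho$. The bag-graph case is the identical computation with $\times$ in place of $\land$ in the abstraction clause: because the $L_i$ are still sets, the guard again evaluates to the Boolean $1$, and $1 \times \sem{M}\rho~v = \sem{M}\rho~v$.

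The only genuine care is in the bookkeeping of the dependent generators and in justifying the environment collapse, so that the domain-membership guard can be read off directly from $\rho \vDash \Theta$. This is precisely where the set-typedness of the $L_i$ enforced by the graph typing rules of Figure~\ref{fig:graphtyping} is essential: it is what lets a positive multiplicity be promoted to the Boolean $1$ that makes the guard disappear in both the set and bag cases. I do not expect any deeper obstacle, since both parts reduce to unfolding two semantic clauses and one algebraic simplification.
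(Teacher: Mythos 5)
Your proof is correct, and it is essentially the intended argument: the paper states Lemma~\ref{lem:grapheqn} without an explicit proof (the appendix proves only the analogous graph lemmas for delateralization, such as Lemma~\ref{lem:graph-eta}, by exactly this kind of semantic unfolding), treating it as routine. Your three steps---unfolding the application and abstraction clauses pointwise, collapsing the environment updates $\rho[x_i \mapsto \rho(x_i)] = \rho$, and discharging the domain guard from $\rho \vDash \Theta$ together with the set-typedness of the generators $L_i$ (so that a positive multiplicity is the Boolean $1$, absorbed by both $\land$ and $\times$)---are precisely the bookkeeping the paper elides, so there is nothing to correct.
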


 To state the correctness of shredding, we need the following notion of shredding environment substitution.
 \begin{definition}
    For every well-typed shredding environment $\Phi$, the substitution of $\Phi$ into an \NRCg term $M$ (notation: $M\Phi$) is defined as the operation replacing within $M$ every free variable $\varphi \in \dom(\Phi)$ with $(\Phi(\varphi))\Phi$ (i.e.: the value assigned by $\Phi$ to $\varphi$, after recursively substituting $\Phi$).
  \end{definition}
  We can easily show that the above definition is well posed for well-typed $\Phi$.

  We now show that shredding preserves the semantics of the input term, in the sense that the term obtained by substituting the output shredding environment into the output term is equivalent to the input.

  \begin{theorem}[Correctness of shredding]\label{thm:shredsound}
    Let $\Theta$ be well-typed and $ty(\Theta) \vdash M : \sigma$. If $\Phi; \Theta \vdash M \shreds \flat{M} \orelse \Psi $, 
    then, for all $\rho \vDash \Theta$, we have $\sem{M}\rho = \sem{\flat{M}\Psi}\rho$.
  \end{theorem}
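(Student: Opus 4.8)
The plan is to prove the statement by induction on the derivation of the shredding judgment $\Phi; \Theta \vdash M \shreds \flat{M} \orelse \Psi$, using the two parts of Lemma~\ref{lem:grapheqn} as the principal tools: part~(2) to collapse a graph abstraction applied to its own domain whenever $\rho \vDash \Theta$, and part~(1) to push graph application through a union of graphs. Before the induction I would isolate two bookkeeping facts about shredding environments. First, a \emph{stability} lemma: if $\Phi$ and $\Phi'$ assign the same value to every graph variable transitively reachable from a term $N$, then $N\Phi$ and $N\Phi'$ have the same denotation. Second, the observation that shredding only ever \emph{adds} bindings (or, in the union rules, renames the top-level results $\vect{\psi}$ into a single fresh $\varphi$), so that a binding, once installed, is never overwritten; in particular $\Phi_i(\psi_i) = \Phi_n(\psi_i)$, and the graph stored there refers only to graph variables freshly introduced while shredding the $i$-th component, none of which lie in $\vect{\psi}$.

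With these in hand the base and tuple cases are routine. A base term $X$ is shredded to itself with $\Psi = \Phi$ and contains no graph variables, so $\flat{M}\Psi = X$ and the claim reduces to $\sem{X}\rho = \sem{X}\rho$. For a tuple the shredded fields $\flat{M}_i$ are produced against the intermediate environments $\Phi_i$, whereas the statement substitutes the final $\Phi_n$; since $\Phi_n$ extends $\Phi_i$ without disturbing the variables reachable from $\flat{M}_i$, the stability lemma lets me replace $\Phi_i$ by $\Phi_n$ and conclude pointwise from the induction hypotheses.

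For the comprehension cases the shredded term is $\varphi \gapp \dom(\Theta)$ and the output environment binds $\varphi$ to $\graph{\Theta}{\bigcup\setlit{\setlit{\flat{M}}~\plwhere~X \mid \vect{x \gets F}}}$ (and dually for bags). Since $\varphi$ is fresh it plays no role inside the graph body, so substituting the final environment agrees with substituting the body's environment $\Psi$, and Lemma~\ref{lem:grapheqn}(2) reduces $\sem{(\varphi \gapp \dom(\Theta))\Psi}\rho$, under $\rho \vDash \Theta$, to the denotation of the comprehension with head $\flat{M}\Psi$. It then remains to match this against $\sem{M}\rho$ generator by generator: for every tuple $\vect{u}$ contributing a nonzero multiplicity through $\vect{x \gets F}$ we have $\rho[\vect{x \mapsto u}] \vDash \Theta, \vect{x \gets F}$, so the induction hypothesis equates the two heads, while non-contributing $\vect{u}$ are killed by the guard $X$. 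In the bag rule the body is shredded against $\vect{x \gets G^\distinct}$ rather than $\vect{x \gets G}$; this is harmless because $\distinct$ preserves support, so membership in $G$ still yields a model of $G^\distinct$ and the hypothesis applies to exactly the contributing tuples.

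The union cases carry the real weight, and I expect them to be the main obstacle. Here each component $C_i$ (resp. $D_i$) is shredded to $\psi_i \gapp \dom(\Theta)$, and the rule garbage-collects $\vect{\psi}$ and installs a single $\varphi$ bound to the graph union $\bigcup \vect{\Phi_n(\psi)}$. Substituting the output environment and distributing with Lemma~\ref{lem:grapheqn}(1) rewrites $\sem{(\varphi \gapp \dom(\Theta))\Psi}\rho$ into $\bigvee_i \sem{((\Phi_n(\psi_i))\Psi) \gapp \dom(\Theta)}\rho$. The delicate point is to identify each disjunct with $\sem{(\psi_i \gapp \dom(\Theta))\Phi_i}\rho$, which the induction hypothesis supplies for $C_i$: using immutability of bindings I replace $\Phi_n(\psi_i)$ by $\Phi_i(\psi_i)$, and using the stability lemma I replace the outer substitution $\Psi$ by $\Phi_i$, the two environments agreeing on all variables reachable from $\Phi_i(\psi_i)$ precisely because those variables are internal to the $i$-th shredding and are therefore untouched by the removal of $\vect{\psi}$. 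Assembling the disjuncts through the semantics of $\bigcup$ (resp. $\biguplus$) then yields $\sem{\bigcup \vect{C}}\rho$, completing the case. The crux throughout is thus not the denotational calculation, which Lemma~\ref{lem:grapheqn} handles cleanly, but the verification that the environment surgery performed by the union rules leaves the denotation of every already-shredded subterm invariant.
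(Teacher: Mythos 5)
Your proposal is correct and takes essentially the same route as the paper's proof: induction on the shredding derivation, Lemma~\ref{lem:grapheqn}(2) to collapse $\varphi \gapp \dom(\Theta)$ in the comprehension cases and Lemma~\ref{lem:grapheqn}(1) plus environment bookkeeping in the union cases, and your two preliminary facts are precisely the paper's appendix lemmas on monotonicity of shredding environments and weakening of substitution. One cosmetic slip worth noting: the non-contributing tuples $\vect{u}$ vanish because the generator factors $\sem{F_i}$ are zero (so those disjuncts are zero on both sides), not because of the guard $X$, but this does not affect the validity of the argument.
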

  \begin{proof}
    By induction on the shredding judgment. We comment two
    representative cases:
    \begin{itemize}
      \item in the set comprehension case, we want to prove 
      \[
        \begin{array}{l}
        \sem{\bigcup\setlit{\setlit{M}~\kwwhere~X| \vect{x \gets F]}}}\rho~v= 
        \\
        \quad \sem{(\varphi \gapp(\dom(\Theta)))\Psi[\varphi \mapsto \bigcup\setlit{\graph{\Theta}{\bigcup\setlit{\setlit{\flat{M}}~\kwwhere~X|\vect{x \gets F}}}}]}\rho~v
        \end{array}
      \]
      where $\rho \vDash \Theta$. We rewrite the lhs as follows:
      \[
      \begin{array}{l}
        \sem{\bigcup\setlit{\setlit{M}~\kwwhere~X | \vect{x \gets F]}}}\rho~v
        \\
        = \bigvee_{\vect{u}} (\sem{M}\rho_n = v) \land (\sem{X} \rho_n) \land (\sem{F_i}\rho_{i-1}~u_i))_{i=1,\ldots,n}
      \end{array}
      \]
      where $\rho_i = \rho[x_1 \mapsto u_1, \ldots, x_i \mapsto u_i] \vDash \Theta, x_1 \gets F_1, \ldots, x_i \gets F_i$ for all $i = 1, \ldots, n$, and $u_i$ s.t. $\sem{F_i}\rho_{i-1} u_i$. 
      By the definition of substitution and by Lemma~\ref{lem:grapheqn}, we rewrite the rhs:
      \[
        \begin{array}{l}
          \sem{(\varphi \gapp(\dom(\Theta)))\Psi[\varphi \mapsto \graph{\Theta}{\bigcup\setlit{\setlit{\flat{M}}~\kwwhere~X|\vect{x \gets F}}}]}\rho~v
          \\
          = \sem{(\graph{\Theta}{\bigcup\setlit{\setlit{\flat{M}\Psi}~\kwwhere~X|\vect{x \gets F}}}) \gapp(\dom(\Theta))}\rho~v
          \\
          = \sem{\bigcup\setlit{\setlit{\flat{M}\Psi}~\kwwhere~X|\vect{x \gets F}}}\rho~v
          \\
          = \bigvee_{\vect{u}} (\sem{\flat{M}\Psi}\rho_n = v) \land (\sem{F_i}\rho_{i-1}~u_i))_{i=1,\ldots,n} \land (\sem{X} \rho')
          \end{array}
      \]
      We can prove that for all $\vect{u}$ such that $\rho_n \nvDash \Theta, \vect{x \gets F}$, $(\sem{F_i}\rho_{i-1}~u_i)_{i=1,\ldots,n} = 0$. Therefore, we only need to consider those $\vect{u}$ such that $\rho_n \vDash \Theta, \vect{x \gets F}$.

      Then, to prove the thesis, we only need to show:
      \[
        \sem{M}\rho_n = \sem{\flat{M}\Phi}\rho_n
      \]
      which follows by induction hypothesis, for $\rho_n \vDash \Theta, \vect{x \gets F}$.
      \item in the set union case, we want to prove
      \[
        \sem{\bigcup\vect{C}}\rho~v
        = 
        \sem{(\varphi \gapp (\dom(\Theta)))(\Psi \setminus \vect{\psi})[\varphi \mapsto \bigcup \vect{\Psi(\psi))}]}\rho~v
      \]
      where $\rho \vDash \Theta$. We rewrite the lhs as follows:
      \[
        \sem{\bigcup\vect{C}}\rho~v
        = \bigvee_i \sem{C_i}\rho~v
      \] 
      By the definition of substitution and by Lemma~\ref{lem:grapheqn}, we rewrite the rhs:
      \[
        \begin{array}{l}
          \sem{(\varphi \gapp (\dom(\Theta)))(\Psi \setminus \vect{\psi})[\varphi \mapsto \bigcup \vect{\Psi(\psi))}]}\rho~v 
          \\
          = \sem{(\bigcup\vect{(\Psi(\psi))\Psi})\gapp(\dom(\Theta))}\rho~v
          \\
          = \bigvee_i \sem{(\Psi(\psi_i))\Psi\gapp(\dom(\Theta))}\rho~v
        \end{array}
      \]
      By induction hypothesis and unfolding of definitions, we know for all $i$:
      \[
      \sem{C_i}\rho = \sem{(\psi_i \gapp(\vect{\dom(\Theta)}))\Psi}\rho
      = \sem{(\Psi(\psi_i))\Psi \gapp(\vect{\dom(\Theta)})}\rho
      \]
      which proves the thesis.\qed
    \end{itemize}
  \end{proof}

  \subsection{Reflecting shredded queries into \NRClsb}
  The output of the shredding judgment is a stratified version of the input term, where each element of the output shredding environment provides a layer of collection nesting; furthermore, the output is ordered so that each element of the shredding environment only references graph variables defined to its left, which is convenient for evaluation. Our goal is to evaluate each shredded item as an independent query: however, these items are not immediately convertible to flat queries, partly because their type is still nested, and also due to the presence of graph operations introduced during shredding. We thus need to provide a translation operation capable of converting the output of shredding into independent flat terms of \NRClsb. This translation uses two main ingredients:
  \begin{itemize}
    \item an $\kwindex$ function to convert graph variable references
      to a flat type $\bbI$ of indices, such that $\phi,\vect{x}$ are
      recoverable from $\kwindex(\phi,\vect{x})$;
    \item a technique to express graphs as standard \NRClsb relations.
  \end{itemize}

  The resulting translation, denoted by $\defun{\cdot}$, is shown in in Figure~\ref{fig:defun}. Let us remark that the translation need be defined only for term forms that can be produced as the output of shredding: this allows us, for instance, not to consider terms such as $\promote M$ or $M - N$, which can only appear as part of flat generators of comprehensions or graphs.
  
  We discuss briefly the interesting cases of the definition of the flattening translation. Base expressions $X$ are expressible in $\NRClsb$, therefore they can be mapped to themselves (this is also true for $\kwempty(M)$, since normalization ensures that the type of $M$ be a flat collection). Graph applications $\varphi \gapp(\vect{x})$, as we said, are translated with the help of an $\kwindex$ abstract operation: this is where the primary purpose of the translation is accomplished, by flattening a collection type to the flat type $\bbI$, making it possible for a shredded query to be converted to SQL; although we do not specify the concrete implementation of $\kwindex$, it is worth noting that it must store the arguments of the graph application along with the (quoted) \emph{name} of the graph variable $\varphi$. Tuples, unions, and comprehensions only require a recursive translation of their subterms: however the generators of comprehensions must have a flat collection type, so no recursion is needed there. Finally, we translate graphs as collections of the pairs obtained by associating elements of the domain of the graph to the corresponding output; it is simple to come up with a comprehension term building such a collection: set-valued graphs are translated using set comprehension, while bag-valued ones use bag comprehension (this also means that in the latter case the generators for the domain of the graph, which are set-typed, must be wrapped in a $\promote$).

  \begin{figure}[tb]
    \scriptsize{
    \begin{align*}
        \defun{X} 
        & = X
      &
        \defun{\tuple{\vect{\ell = M}}}
        & = \tuple{\vect{\ell = \defun{M}}}
        \\
        \defun{\bigcup \vect{C}}
        & = \bigcup \vect{\defun{C}}
      &
        \defun{\biguplus \vect{D}}
        & = \biguplus \vect{\defun{D}}
      \\
        \defun{\varphi \gapp(\vect{x})}
        & = \kwindex(\varphi,\vect{x})
    \end{align*}
    \begin{align*}
        \defun{\bigcup\setlit{\setlit{M}~\plwhere~X| \vect{x \gets F}}}
        & = \bigcup\setlit{\setlit{\defun{M}}~\plwhere~X | \vect{x \gets F}}
      \\
        \defun{\biguplus\msetlit{\msetlit{M}~\plwhere~X | \vect{x \gets G}}}
        & = \biguplus\msetlit{\msetlit{\defun{M}}~\plwhere~X | \vect{x \gets G}}
      \\
        \defun{\setgraph{\vect{x \gets F}}{M}}
        & = \bigcup\setlit{\tuple{x,y} | \vect{x \gets F}, y \gets \defun{M}}
      \\
        \defun{\baggraph{\vect{x \gets F}}{M}}
        & = \biguplus\msetlit{\tuple{x,y} | \vect{x \gets \promote F}, y \gets \defun{M}}
    \end{align*}
    }
    \caption{Flattening embedding of shredded queries into \NRClsb.}\label{fig:defun}
  \end{figure}

  We can prove that the flattening embedding produces flat-typed terms, as expected.
  \begin{definition}
    A well-typed set comprehension generator $\Theta$ is flat-typed if, and only if, for all $x \in \dom(\Theta)$, there exists a flat type $\sigma$ such that $ty(\Theta(x)) = \setlit{\sigma}$.

    A well-typed shredding environment $\Phi$ is flat-typed if, and only if, for all $\varphi \in \dom(\Phi)$, we have that $ty(\defun{\Phi(\varphi)})$ is a flat collection type.
  \end{definition}
  \begin{lemma}
    Suppose $\Phi;\Theta \vdash M \shreds \flat{M} \orelse \Psi$, where $\Phi$ and $\Theta$ are flat-typed. Then, $\flat{M}$ and $\Psi$ are also flat-typed.
  \end{lemma}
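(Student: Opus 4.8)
The plan is to argue by induction on the derivation of $\Phi;\Theta \vdash M \shreds \flat{M} \orelse \Psi$, reading the conclusion ``$\flat{M}$ is flat-typed'' as the statement that $ty(\defun{\flat{M}})$ is a flat type (a base type, the index type $\bbI$, or a flat record/collection type) and ``$\Psi$ is flat-typed'' in the sense of the preceding definition. The single invariant that makes the induction go through is that the context $\Theta$ stays a flat-typed set-comprehension generator: every time a rule extends it, the new bindings come from the normal-form grammar of Figure~\ref{fig:nestrelNF}, i.e. $\vect{x\gets F}$ with $F ::= \distinct t \orelse \delta(R^*_1 - R^*_2)$ in the set case, or $\vect{x\gets G^\distinct}$ in the bag case. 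In each case the generator has type $\setlit{\sigma}$ with $\sigma$ flat --- for bags this is exactly the purpose of $G^\distinct$, which replaces $\promote Q^*$ by $Q^*$ and any other (flat) $G$ by $\distinct G$. So $\Theta,\vect{x\gets F}$ and $\Theta,\vect{x\gets G^\distinct}$ are again flat-typed, which is precisely the hypothesis the inductive call needs.

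First I would clear the two base cases. For a base term we have $\flat{M}=X$ and $\Psi=\Phi$; since $\defun{X}=X$ is of base type (this holds even for the emptiness tests, because normalization guarantees their arguments are flat collections) and $\Phi$ is flat-typed by assumption, both conclusions hold immediately. For a tuple the premises thread the environment through $\Phi_0=\Phi,\Phi_1,\ldots,\Phi_n=\Psi$, so chaining the induction hypothesis along this pipeline shows each $\Phi_i$ is flat-typed and each $\defun{\flat{M}_i}$ is flat; hence $\defun{\flat{M}}$ is a record all of whose fields are flat.

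The comprehension case is the heart of the proof (set and bag are symmetric). The shredded term is $\flat{M}=\varphi\gapp\dom(\Theta)$, and its embedding $\defun{\varphi\gapp\dom(\Theta)}=\kwindex(\varphi,\dom(\Theta))$ has the flat type $\bbI$, so the term half is immediate. For the environment the new entry is $\varphi\mapsto\graph{\Theta}{\bigcup\setlit{\setlit{\flat{M}}~\plwhere~X\mid\vect{x\gets F}}}$; unfolding $\defun{\cdot}$ on the graph and then on its comprehension body rewrites it to $\bigcup\setlit{\tuple{z,y}\mid\Theta,\,y\gets\bigcup\setlit{\setlit{\defun{\flat{M}}}~\plwhere~X\mid\vect{x\gets F}}}$. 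Here the variables $z$ bound by $\Theta$ range over flat types (by the invariant), while $y$ ranges over $\defun{\flat{M}}$, which is flat by the induction hypothesis applied under the extended --- still flat --- context; thus every pair $\tuple{z,y}$ is a flat record and the whole comprehension has a flat collection type. The bag subcase differs only in that $\defun{\cdot}$ wraps the domain generators in $\promote$, which preserves flatness.

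The union cases remove the auxiliary graphs $\vect{\psi}$ and install $\varphi\mapsto\bigcup\vect{\Phi_n(\psi)}$ (resp. with $\biguplus$). Since the induction hypothesis makes each $\Phi_n(\psi_i)$ flat-typed and $\defun{\cdot}$ distributes over the union, the merged entry embeds to a union of flat collections, which is again a flat collection, while the output $\varphi\gapp\dom(\Theta)$ again embeds to $\kwindex$. I expect the real work to be bookkeeping rather than mathematics: one has to keep the flat-typedness of $\Theta$ in step with the context extensions so that the induction hypothesis is always applicable, and one has to state explicitly that $\defun{\cdot}$ commutes with graph union (a small extension of the clauses of Figure~\ref{fig:defun}) so that the merged entries in the union cases flatten as claimed.
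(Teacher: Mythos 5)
The paper never actually proves this lemma: it is stated without proof in Section~\ref{sec:shredding}, and no argument for it appears even in the appendix, so there is no official proof to compare against. Your proposal supplies the argument the paper clearly intends — a structural induction on the shredding judgment, in the same style as the paper's proofs of Theorems~\ref{thm:shredsound} and~\ref{thm:stitchsound}, with the correct key invariant that every context extension ($\vect{x \gets F}$ in the set case, $\vect{x \gets G^\distinct}$ in the bag case) keeps $\Theta$ a flat-typed set-comprehension generator — and each case is handled soundly. You also rightly flag the one genuine bookkeeping gap in the paper itself: Figure~\ref{fig:defun} gives no clause for $\defun{\cdot}$ on unions of \emph{graphs}, which the union rules' output entries $\varphi \mapsto \bigcup\vect{\Phi_n(\psi)}$ require, so spelling out that $\defun{\cdot}$ distributes over graph union is exactly the right repair.
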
 

  It is important to note that the composition of shredding and $\defun{\cdot}$ does not produce normalized \NRClsb terms: when we shred a comprehension, we add to the output shredding environment a graph returning a comprehension, and when we translate this to \NRClsb we get two nested comprehensions:
  \[
    \defun{\graph{x \gets \distinct t}{\biguplus\msetlit{\msetlit{\flat{M}}| y \gets \promote Q^*}}} = \biguplus\msetlit{\tuple{x,z} | x \gets \promote\distinct t, z \gets \biguplus\msetlit{\msetlit{\defun{\flat{M}}} | y \gets \promote Q^*}}
  \]
  In fact, not only is this term not in normal form, but it may even
  contain, within $Q^*$, a lateral reference to $x$; thus, after a
  flattening translation, we will always require the resulting queries
  to be renormalized and, if needed, delateralized.

  Let $\norm$ denote \NRClsb normalization, and $\cS$ denote the evaluation of relational normal forms: we define the shredded value set $\Xi$ corresponding to a shredding environment $\Phi$ as follows:
  \[
  \Xi \triangleq \setlit{\varphi \mapsto \cS(\norm(\defun{M})) | [\varphi \mapsto M] \in \Phi}  
  \]
  The evaluation $\cS$ is ordinarily performed by a DBMS after
  converting the 
  \linebreak
  \NRClsb query to SQL, as described
  in~Section~\ref{sec:delat}. The result of this evaluation is reflected in a programming language such as Links as a list of records.

  \subsection{The stitching function}

  Given a \NRClsb term with nested collections, we have first shredded it, obtaining a shredded \NRCg term $\flat{M}$ and a shredding environment $\Phi$ containing \NRCg graphs; then we have used a flattening embedding to reflect both $\flat{M}$ and $\Phi$ back into the flat fragment of \NRClsb; next we used normalization and DBMS evaluation to convert the shredding environment into a shredded value set $\Xi$.
  As the last step to evaluate $M : \tau$, we need to combine
  $\defun{\flat{M}}$ and $\Xi$ together to reconstruct the correct
  nested value $\stitch{\Xi}{\defun{\flat{M}}:\tau}$ by \emph{stitching} together partial flat values.

  \begin{figure}[tb]
    \[
    \scriptstyle{
  \begin{array}{rcl}
      \stitch{\Xi}{X:b}
    & \triangleq &
    X \qquad \text{(if $X$ is not an index)}
    \\
      \stitch{\Xi}{\tuple{\vect{\ell = \flat{N}}} : \tuple{\vect{\ell:\tau}}} 
    & \triangleq &
      \tuple{\vect{\ell = \stitch{\Xi}{\flat{N}:\tau}}}
    \\
      \stitch{\Xi}{\tuple{\vect{\ell = \flat{N}}}.\ell_i : \tau}
    & \triangleq &
      \stitch{\Xi}{N_i : \tau}
    \\
      \stitch{\Xi}{\kwindex(\varphi,\vect{V}):\setlit{\tau}}
    & \triangleq &
      \bigcup\setlit{\setlit{ \stitch{\Xi}{p.2:\tau}}\mid p \gets \Xi(\varphi), p.1 = \tuple{\vect{V}} }
    \\
      \stitch{\Xi}{\kwindex(\varphi,\vect{V}):\msetlit{\tau}}
    & \triangleq &
      \biguplus\msetlit{\msetlit{ \stitch{\Xi}{p.2:\tau}}\mid p \gets \Xi(\varphi), p.1 = \tuple{\vect{V}} }
  \end{array}
    }
    \]
  \caption{The stitching function.}\label{fig:stitching}
  \end{figure}

  The stitching function is shown in Figure~\ref{fig:stitching}: its
  job is to visit all the components of tuples and collections,
  ignoring atomic values other than indices along the way.  The real
  work is performed when an $\kwindex(\varphi,\vect{V})$ is found:
  conceptually, the index should be replaced by the result of the
  evaluation of $\varphi \gapp(\vect{V})$. Remember that $\Xi$
  contains the result of the evaluation of the graph function
  $\varphi$ after translation to \NRClsb, i.e. a collection of pairs
  associating each input of $\varphi$ to the corresponding output:
  then, to obtain the desired result, we can take $\Xi(\varphi)$,
  filter all the pairs $p$ whose first component is
  $\tuple{\vect{V}}$, and return the second component of $p$ after a
  recursive stitching.  Finally, observe that we track the result type
  argument in order to disambiguate whether to construct a set or
  multiset when we encounter an index.

  \begin{theorem}[Correctness of stitching]\label{thm:stitchsound}
    Let $\Theta$ be well-typed and $ty(\Theta) \vdash M : \sigma$.
    Let $\Phi$ be well-typed, and  suppose $\Phi;\Theta \vdash M \shreds \flat{M} \orelse \Psi$. Let $\Xi$ be the result of evaluating
    the flattened queries in $\Psi$ as above.  Then
    $\sem{\flat{M}\Psi}\rho = \sem{\stitch{\Xi}{\defun{\flat{M}}:\tau}}\rho$.
  \end{theorem}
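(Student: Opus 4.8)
The plan is to prove the statement by induction on the derivation of the shredding judgment $\Phi; \Theta \vdash M \shreds \flat{M} \orelse \Psi$, carrying the argument out for $\rho \vDash \Theta$ (the same restriction under which Theorem~\ref{thm:shredsound} holds, and which makes Lemma~\ref{lem:grapheqn}(2) directly applicable). The one fact that ties the two sides of the equation together is the following reading of $\Xi$: since $\Xi(\varphi) = \cS(\norm(\defun{\Psi(\varphi)}))$ and both normalization (with delateralization) and the relational evaluation $\cS$ are semantics-preserving, the value $\Xi(\varphi)$ denotes exactly $\sem{\defun{\Psi(\varphi)}}$. Thus each $\Xi(\varphi)$ is, up to denotation, the flattened graph $\defun{\Psi(\varphi)}$ evaluated; by the flattening rules for $\setgraph{-}{-}$ and $\baggraph{-}{-}$ its elements are pairs $\tuple{\vect{u},w}$ associating a domain tuple $\vect{u}$ with a \emph{flat value carrying indices} $w$, exactly as produced by the graph body after $\defun{\cdot}$.

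Before the main induction I would isolate a sub-lemma stating that \emph{stitching commutes with evaluation}: for every flat-with-index term $T$ of element type $\tau$, $\sem{\stitch{\Xi}{T : \tau}}\rho = \sem{\stitch{\Xi}{\sem{T}\rho : \tau}}$, where $\sem{T}\rho$ is read on the right as a closed flat-with-index value. This is a routine structural induction on $T$: base subterms and tuples are immediate, while for an index $\kwindex(\varphi,\vect{V})$ both sides unfold to the same comprehension over $\Xi(\varphi)$ with the same recursive $\stitch{\Xi}{p.2:\tau}$, the only difference being that the filter $p.1 = \tuple{\vect{V}}$ is read under $\rho$ on the left and already evaluated on the right. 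I also need the easy fact that if $\varphi$ is fresh for a term then $\stitch{\Xi}{\cdot}$ does not depend on $\Xi(\varphi)$, which lets me identify the $\Xi$ built from a sub-derivation's output environment with the $\Xi$ built from the final $\Psi$. With these in hand the base and tuple cases are direct: a base term $X$ contains no graph variables, so both $\flat{X}\Psi$ and $\stitch{\Xi}{\defun{X}:b}$ reduce to $X$, and the tuple case follows componentwise from the induction hypothesis.

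The set-comprehension case is representative of the real content. Here $M = \comprehension{\setlit{N}~\kwwhere~X \mid \vect{x\gets F}}$, $\flat{M} = \varphi \gapp \dom(\Theta)$, and $\Psi(\varphi) = \graph{\Theta}{\comprehension{\setlit{\flat{N}}~\kwwhere~X \mid \vect{x\gets F}}}$, where the head $N$ is shredded to $\flat{N}$ by the sub-derivation. On the left I would unfold $(\varphi \gapp \dom(\Theta))\Psi = (\graph{\Theta}{(\comprehension{\setlit{\flat{N}}~\kwwhere~X \mid \vect{x\gets F}})\Psi}) \gapp \dom(\Theta)$ and, using $\rho \vDash \Theta$ and Lemma~\ref{lem:grapheqn}(2), collapse it to $\sem{(\comprehension{\setlit{\flat{N}}~\kwwhere~X \mid \vect{x\gets F}})\Psi}\rho$, which unfolds to the disjunction, over generator tuples $\vect{u'}$ satisfying the domain and guard, of $\sem{\flat{N}\Psi}\rho''$ with $\rho'' = \rho[\vect{x\mapsto u'}]$. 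On the right I would invoke the reading of $\Xi(\varphi)$ above: for the domain value $\vect{u} = \vect{\sem{x}\rho}$ the stored second components $w$ range exactly over $\sem{\defun{N}}\rho''$ as $\vect{u'}$ varies, so $\stitch{\Xi}{\kwindex(\varphi,\dom(\Theta)):\setlit{\tau}}$ evaluates to the disjunction over the same $\vect{u'}$ of $\sem{\stitch{\Xi}{w:\tau}}$ with $w = \sem{\defun{\flat{N}}}\rho''$. The two disjunctions then agree term-by-term: the induction hypothesis on the head gives $\sem{\flat{N}\Psi}\rho'' = \sem{\stitch{\Xi}{\defun{\flat{N}}:\tau}}\rho''$, and the sub-lemma rewrites the right-hand side as $\sem{\stitch{\Xi}{\sem{\defun{\flat{N}}}\rho'':\tau}} = \sem{\stitch{\Xi}{w:\tau}}$.

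The union cases and the bag cases are where I expect the main obstacle. For unions, $\Psi(\varphi) = \bigcup\vect{\Phi_n(\psi)}$ with the intermediate heads $\vect{\psi}$ deleted, so I must show that $\Xi(\varphi)$ decomposes as the union of the relations denoted by the $\Phi_n(\psi_i)$ (using that $\defun{\cdot}$, $\norm$, and $\cS$ all commute with $\bigcup$), split the graph application on the left by Lemma~\ref{lem:grapheqn}(1), distribute the stitching comprehension over the union on the right, and then realign each deleted $\psi_i$ with its sub-derivation environment $\Phi_i$ via the freshness fact before applying the induction hypothesis for $C_i$; the bookkeeping around the deleted variables is delicate but mechanical. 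The bag-comprehension case additionally forces careful multiplicity tracking: the graph domain is deduplicated (the body is shredded under $\vect{x\gets G^\distinct}$) while the graph body iterates over the original $\vect{x\gets G}$, and $\defun{\cdot}$ inserts a $\promote$ on the domain generators of a bag graph. I expect the crux to be verifying that these design choices make the multiplicity arithmetic line up — the $\times$ of the bag-graph semantics versus the $\bigwedge$ of the set case, together with the inserted $\promote$ — so that $\Xi(\varphi)$ records precisely the intended multiset multiplicities, after which the argument mirrors the set case step for step.
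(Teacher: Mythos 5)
Your proposal is correct and follows essentially the same route as the paper's proof: induction on the shredding judgment with $\rho \vDash \Theta$, reading $\Xi(\varphi)$ as the (semantics-preserved) evaluation of $\defun{\Psi(\varphi)}$, collapsing graph applications via Lemma~\ref{lem:grapheqn}, and using freshness/weakening (the paper's Lemma~\ref{lem:substweak} together with monotonicity of shredding) to realign intermediate environments in the union case before applying the induction hypothesis. Your explicit ``stitching commutes with evaluation'' sub-lemma is just a named version of a step the paper performs inline in its calculation (where it pushes $\stitch{\Xi'}{\cdot}$ under the comprehension built by $\defun{\cdot}$, handling variable capture by $\alpha$-renaming the graph's domain variables), so the two proofs coincide in substance.
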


  The full correctness result follows by combining the Theorems~\ref{thm:shredsound} and~\ref{thm:stitchsound}.
  \begin{corollary}
    For all $M$ such that $\vdash M : \tau$, suppose $\vdash M \shreds \flat{M'} \mid \Psi$, and let $\Xi$ be the shredded value set obtained by evaluating the flattened queries in $\Psi$. Then $\sem{M} = \sem{\stitch{\Xi}{\defun{\flat{M}}:\tau}}$.
  \end{corollary}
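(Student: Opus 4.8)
The plan is to derive this corollary as an immediate consequence of the two preceding correctness results, Theorems~\ref{thm:shredsound} and~\ref{thm:stitchsound}, specialized to the closed case in which both the ambient comprehension context and the input shredding environment are empty. First I would instantiate both theorems with $\Theta = \cdot$ and $\Phi = \cdot$, so that the generic shredding judgment $\Phi;\Theta \vdash M \shreds \flat{M} \orelse \Psi$ becomes exactly the hypothesis $\vdash M \shreds \flat{M} \orelse \Psi$ of the corollary. The premise $\vdash M : \tau$ supplies precisely the required typing $ty(\Theta) \vdash M : \tau$ (since $ty(\cdot) = \cdot$), and the empty context is trivially well-typed, so all the hypotheses of both theorems are discharged.

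Next I would observe that for the empty generator sequence $\Theta = \cdot$ the side condition $\rho \vDash \Theta$ holds vacuously for \emph{every} environment $\rho$, because there is no $x \in \dom(\Theta)$ to test. Consequently both theorems apply for an arbitrary $\rho$: Theorem~\ref{thm:shredsound} gives $\sem{M}\rho = \sem{\flat{M}\Psi}\rho$, and Theorem~\ref{thm:stitchsound} gives $\sem{\flat{M}\Psi}\rho = \sem{\stitch{\Xi}{\defun{\flat{M}}:\tau}}\rho$, where $\Xi$ is the shredded value set obtained by evaluating the flattened queries in $\Psi$. Chaining these two identities yields $\sem{M}\rho = \sem{\stitch{\Xi}{\defun{\flat{M}}:\tau}}\rho$ for every $\rho$, which is already the pointwise form of the claim.

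Finally I would reconcile this with the $\rho$-free statement of the corollary. Because $M$ is closed, $\sem{M}\rho$ does not depend on $\rho$ and denotes a single value $\sem{M}$; by the shredding type-preservation result (with $ty(\Theta)$ empty) the free variables of $\flat{M}$ all lie in $\dom(\Psi)$, so $\flat{M}\Psi$ and the stitched term $\stitch{\Xi}{\defun{\flat{M}}:\tau}$ are closed as well, making the right-hand side likewise independent of $\rho$. Evaluating the pointwise identity at any single environment (e.g. the empty one) therefore gives the denotational equality $\sem{M} = \sem{\stitch{\Xi}{\defun{\flat{M}}:\tau}}$.

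I do not expect a genuine obstacle here, since all the real work is done in Theorems~\ref{thm:shredsound} and~\ref{thm:stitchsound}; the only points needing care are the trivial discharge of $\rho \vDash \cdot$ and the observation that closedness of $M$ lets the chained pointwise identity be read as an equality of denotations. The sole cosmetic subtlety is reconciling the corollary's $\flat{M'}$ with the $\flat{M}$ of the two theorems, which denote the same shredded output term.
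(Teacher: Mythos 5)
Your proposal is correct and takes essentially the same approach as the paper: the paper derives the corollary simply ``by combining Theorems~\ref{thm:shredsound} and~\ref{thm:stitchsound}'', which is exactly your instantiation with $\Theta = \cdot$ and $\Phi = \cdot$ followed by chaining the two pointwise identities. The details you spell out (vacuous satisfaction of $\rho \vDash \cdot$, and closedness of $M$ making the equality independent of $\rho$) are left implicit in the paper but are precisely the right justification.
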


\section{Related work}\label{sec:related}

Work on language-integrated query and comprehension syntax has taken
place over several decades in both the database and programming
language communities.  We discuss the most closely related work below.

\paragraph{Comprehensions, normalization and language integration}
The database community had already begun in the late 1980s to explore
proposals for so-called \emph{non-first-normal-form} relations in
which collections could be nested inside other
collections~\cite{schek86is}, but following Trinder and Wadler's initial
work connecting database queries with monadic
comprehensions~\cite{trinder-wadler:comprehensions}, query languages
based on these foundations were studied extensively, particularly by
Buneman et al.~\cite{BNTW95,buneman+:comprehensions}.  For our
purposes, Wong's work on query normalization and translation to
SQL~\cite{wong96jcss} is the most important landmark; this work
provided the basis for practical implementations such as Kleisli and
later Links.  Almost as important is the later work by Libkin and
Wong~\cite{LW97}, studying the questions of expressiveness of bag
query languages via a language \BQL that extended basic \NRC with
deduplication and bag difference operators.  They related this
language to \NRC with set semantics extended with aggregation
(count/sum) operations, but did not directly address the question of
normalizing and translating \BQL queries to SQL. Grust and
Scholl~\cite{grust99jiis} were early advocates of the use of
comprehensions mixing set, bag and other monadic collections for query
rewriting and optimization, but did not study normalization or
translatability properties.

Although comprehension-based queries began to be used in
general-purpose programming languages with the advent of Microsoft
LINQ~\cite{meijer:sigmod} and Links~\cite{CLWY06},
Cooper~\cite{Cooper09} made the next important foundational
contribution by extending Wong's normalization result to queries
containing higher-order functions and showing that an effect system
could be used to safely compose queries using higher-order functions
even in an ambient language with side-effects and recursive functions
that cannot be used in queries.  This work provided the basis for
subsequent development of language-integrated query in
Links~\cite{lindley12tldi} and was later adapted for use in
F\#~\cite{cheney13icfp}, Scala~\cite{quill}, and by Kiselyov et
al.~\cite{suzuki16pepm} in the OCaml library \QueL.  However, on
revisiting Cooper's proof to extend it to heterogeneous queries, we
found a subtle gap in the proof, which was corrected in a recent
paper~\cite{ricciotti20fscd}; the original result was correct.  As a
result, in this paper we focus on first-order fragments of these
languages without loss of generality.

Giorgidze et al.~\cite{giorgidze13ddfp} have shown how to support
non-recursive datatypes (i.e. sums) and Grust and
Ulrich~\cite{grust13dbpl} built on this to show how to support
function types in query results using
defunctionalization~\cite{grust13dbpl}.  We considered using sums to
support  a
defunctionalization-style strategy for query lifting, but Giorgidze et
al.~\cite{giorgidze13ddfp} map sum types to nested collections, which
makes their approach unsuitable to our setting.  Wong's original
normalization result also considered sum types, but to the best of our
knowledge normalization for \NRClsb extended with sum types has not
yet been proved.

Recent work by Suzuki et al.~\cite{suzuki16pepm} have outlined further
extensions to lan\-guage-integrated query in the \QueL system, which is
based on finally-tagless syntax~\cite{carette09jfp} and employs Wong's
and Cooper's rewrite rules; Katsushima and Kiselyov's subsequent short
paper~\cite{katsushima17pepm} outlined extensions to handling ordering
and grouping. Kiselyov and Katsushima~\cite{kiselyov17aplas} present
an extension to \QueL called \SQUR to handle ordering based on effect
typing, and they provide an elegant translation from \SQUR queries to
SQL based on normalization-by-evaluation.  Okura and
Kameyama~\cite{okura20flops} outline an extension to handle SQL-style
grouping and aggregation operators in \QueLG; however, their approach
potentially generates lateral variable occurrences inside grouping
queries.  These systems \QueL, \SQUR and \QueLG consider neither
heterogeneity nor nested results.  

Our adoption of tabulated functions (\emph{graphs}) is inspired in
part by Gibbons et al.~\cite{gibbons18icfp}, who provided an elegant
rational reconstruction of relational algebra showing how standard
principles for reasoning about queries arise from adjunctions.  They
employed types for (finite) maps and tables to show how joins can be
implemented efficiently, and observed that such structures form a
\emph{graded monad}.  We are interested in further exploring these
structures and extending our work to cover ordering, grouping and
aggregation.

\paragraph{Query decorrelation and delateralization}
There is a large literature on \emph{query decorrelation}, for example
to remove aggregation operations from $\kwsel$ or $\kwsqlwhere$
clauses (see e.g.~\cite{neumann15btw,cao07tods} for further
discussion).  Delateralization appears related to decorrelation, but
we are aware of only a few works on this problem, perhaps because most
DBMSs only started to support $\kwlateral$ in the last few years.
(Microsoft SQL Server has supported similar functionality for much
longer through a keyword $\kwapply$.)  Our delateralization
technique appears most closely related to Neumann and Kemper's work on
query unnesting~\cite{neumann15btw}.  In this context, unnesting
refers to removal of ``dependent join'' expressions in a relational
algebraic query language; such joins appear to correspond to lateral
subqueries.  This approach is implemented in the HyPER database
system, but is not accompanied by a proof of correctness, nor does it
handle nested query results.  It would be interesting to formalize
this approach (or others from the decorrelation literature) and relate
it to delateralization.  

\paragraph{Querying nested collections}

Our approach to querying nested heterogeneous collections clearly
specializes to the homogeneous cases for sets and multisets
respectively, which have been studied separately.  Van den Bussche's
work on simulating queries on nested sets using flat
ones~\cite{Bussche01} has also inspired subsequent work on query
shredding, flattening and (in this paper) lifting, though the
simulation technique itself does not appear practical (as discussed in
the extended version of Cheney et al.~\cite{cheney14tr}).  More
recently, Benedikt and Pradic~\cite{benedikt21popl} presented results
on representing queries on nested collections using a bounded number
of \emph{interpretations} (first-order logic formulas corresponding to
definable flat query expressions) in the context of their work on
\emph{synthesizing} \NRC queries from proofs.  This approach considers
set-valued \NRC only, and its relationship to our approach should be
investigated further.

Cheney et al.'s previous work on query shredding for multiset
queries~\cite{cheney14sigmod} is different in several important respects.  In
that work we did not consider deduplication and bag difference
operations from \BQL, which Libkin and Wong showed cannot be expressed
in terms of other \NRC operations.  The shredding translation was given
in several stages, and while each stage is individually
comprehensible, the overall approach is not easy to understand.
Finally, the last stages of the translation relied on SQL features not
present (or expressible) in the source language, such as ordering and
the SQL:1999 $\kwrownum$ construct, to synthesize uniform integer
keys.  Our approach, in contrast, handles set, bag, and mixed queries,
and does not rely on any SQL:1999 features.

In a parallel line of work, Grust et
al.~\cite{grust10vldb,dsh,Ulrich11,SIGMOD2015UlrichG,ulrich19phd} have
developed a number of approaches to querying nested \emph{list} data
structures, first in the context of XML processing~\cite{grust10edbt}
and subsequently for \NRC-like languages over lists. The earlier
approach~\cite{grust10vldb}, named \emph{loop-lifting} (not to be
confused with \emph{query lifting}!) made heavy use of SQL:1999
capabilities for numbering and indexing to decouple nested collections
from their context, and was implemented in both Links~\cite{Ulrich11}
and earlier versions of the Database Supported Haskell
library~\cite{dsh}, both of which relied on an advanced query
optimizer called \emph{Pathfinder}~\cite{GrustRT08} to optimize these queries.
The more recent approach, implemented by Ulrich in the current version
of DSH and described in detail in his thesis~\cite{ulrich19phd}, is
called \emph{query flattening} and is instead based on techniques from
\emph{nested data parallelism}~\cite{blelloch90}.  Both loop-lifting
and query flattening are very powerful, and do not rely on an initial
normalization stage, while supporting a rich source language with list
semantics, ordering, grouping, aggregation, and deduplication which
can in principle emulate set or multiset semantics. However, to the
best of our knowledge no correctness proofs exist for either
technique.  We view finding correctness results for richer query
languages as an important challenge for future work.

Another parallel line of work started by Fegaras and
Maier~\cite{DBLP:journals/tods/FegarasM00,fegaras17jfp} considers
heterogeneous query languages based on \emph{monoid} comprehensions,
with set, list, and bag collections as well as grouping, aggregation
and ordering operations, in the setting of object-oriented databases,
and forms the basis for complex object database systems such as
$\lambda$DB~\cite{fegaras00sigmod} and Apache
MRQL~\cite{fegaras17jfp}.  However, Wong-style normalization results
or translations from flat or nested queries to SQL are not known for
these calculi.

\paragraph{Lambda-lifting and closure conversion} Since Johnsson's
original work~\cite{johnsson85fpca},
\linebreak
lambda-lifting and closure
conversion have been studied extensively for functional languages,
with Minamide et al.'s \emph{typed closure
  conversion}~\cite{minamide96popl} of particular interest in compilers
employing typed intermediate languages.  We plan to study whether
known optimizations in the lambda-lifting and closure conversion
literature offer advantages for query lifting.
The immediate important next step is to implement our approach and
compare it empirically with previous techniques such as query
shredding and query flattening.  By analogy with lambda-lifting and
closure conversion, we expect additional optimizations to be possible
by a deeper analysis of how variables/fields are used in lifted
subqueries.  Another problem we have not resolved is how to deal with
deduplication or bag difference at nested collection types in
practice.  Libkin and Wong~\cite{LW97} showed that such nesting can be
eliminated from \BQL queries, but their results do not provide a
constructive algorithm for eliminating the nesting.

\section{Conclusions}\label{sec:concl}

Monadic comprehensions have proved to be a remarkably durable
foundation for database programming and language-integrated query, and
has led to language support (LINQ for .NET, Quill for Scala) with
widespread adoption.  Recent work has demonstrated that techniques for
evaluating queries over nested collections, such as query shredding or
query flattening, can offer order-of-magnitude speedups in database
applications~\cite{fowler20ijdc} without sacrificing declarativity or
readability.  However, query shredding lacks the ability to express
common operations such as deduplication, while query flattening is
more expressive but lacks a detailed proof of correctness, and both
techniques are challenging to understand, implement, or extend.  We provide
the first provably correct approach to querying nested heterogeneous
collections involving both sets and multisets.

Our most important insight is that working in a heterogeneous
language, with both set and multiset collection types, actually makes
the problem easier, by making it possible to calculate finite maps
representing the behavior of nested query subexpressions under all of
the possible environments encountered at run time.  Thus, instead of
having to maintain or synthesize keys linking inner and
outer collections, as is done in all previous approaches, we can
instead use the values of variables in the closures of nested query
expressions themselves as the keys.  The same approach can be used to
eliminate sideways information-passing.  This is analogous to
lambda-lifting or closure conversion in compilation of functional
languages, but differs in that we lift local queries to (queries that
compute) finite maps rather than ordinary function abstractions.  We
believe this idea may have broader applications and will next
investigate its behavior in practice and applications to other query
language features.

\paragraph{Acknowledgments}
This work was supported by ERC Consolidator Grant Skye (grant number
682315), and by an ISCF Metrology Fellowship grant provided by the UK
government's Department for Business, Energy and Industrial Strategy
(BEIS). We are grateful to Simon Fowler for feedback and to anonymous
reviewers for constructive comments.

\bibliography{paper}

\begin{thebibliography}{10}
\providecommand{\url}[1]{\texttt{#1}}
\providecommand{\urlprefix}{URL }
\providecommand{\doi}[1]{https://doi.org/#1}

\bibitem{benedikt21popl}
Benedikt, M., Pradic, P.: Generating collection transformations from proofs.
  Proc. ACM Program. Lang.  \textbf{5}(POPL) (Jan 2021),
  \url{https://doi.org/10.1145/3434295}

\bibitem{blelloch90}
Blelloch, G.E.: Vector Models for Data-Parallel Computing. {MIT} Press (1990)

\bibitem{buneman+:comprehensions}
Buneman, P., Libkin, L., Suciu, D., Tannen, V., Wong, L.: Comprehension syntax.
  SIGMOD Record  \textbf{23} (1994)

\bibitem{BNTW95}
Buneman, P., Naqvi, S., Tannen, V., Wong, L.: Principles of programming with
  complex objects and collection types. Theor. Comput. Sci.  \textbf{149}(1)
  (1995). \doi{10.1016/0304-3975(95)00024-Q}

\bibitem{cao07tods}
Cao, B., Badia, A.: {SQL} query optimization through nested relational algebra.
  ACM Trans. Database Syst.  \textbf{32}(3),  18–es (Aug 2007).
  \doi{10.1145/1272743.1272748}

\bibitem{carette09jfp}
Carette, J., Kiselyov, O., Shan, C.: Finally tagless, partially evaluated:
  Tagless staged interpreters for simpler typed languages. J. Funct. Program.
  \textbf{19}(5),  509--543 (2009). \doi{10.1017/S0956796809007205}

\bibitem{cheney13icfp}
Cheney, J., Lindley, S., Wadler, P.: A practical theory of language-integrated
  query. In: ICFP (2013). \doi{10.1145/2500365.2500586}

\bibitem{cheney14sigmod}
Cheney, J., Lindley, S., Wadler, P.: Query shredding: efficient relational
  evaluation of queries over nested multisets. In: SIGMOD. pp. 1027--1038. ACM
  (2014). \doi{10.1145/2588555.2612186}

\bibitem{cheney14tr}
Cheney, J., Lindley, S., Wadler, P.: Query shredding: Efficient relational
  evaluation of queries over nested multisets (extended version). CoRR
  \textbf{abs/1404.7078} (2014), \url{http://arxiv.org/abs/1404.7078}

\bibitem{Chu17}
Chu, S., Weitz, K., Cheung, A., Suciu, D.: {HoTTSQL}: Proving query rewrites
  with univalent {SQL} semantics. In: PLDI. pp. 510--524. ACM (2017).
  \doi{10.1145/3062341.3062348}

\bibitem{Cooper09}
Cooper, E.: The script-writer's dream: How to write great {SQL} in your own
  language, and be sure it will succeed. In: DBPL (2009).
  \doi{10.1007/978-3-642-03793-1\_3}

\bibitem{CLWY06}
Cooper, E., Lindley, S., Wadler, P., Yallop, J.: Links: web programming without
  tiers. In: FMCO (2007). \doi{10.1007/978-3-540-74792-5\_12}

\bibitem{copeland-maier:1984}
Copeland, G., Maier, D.: Making {S}malltalk a database system. SIGMOD Rec.
  \textbf{14}(2) (1984)

\bibitem{fegaras17jfp}
Fegaras, L.: An algebra for distributed big data analytics. J. Funct. Program.
  \textbf{27}, ~e27 (2017). \doi{10.1017/S0956796817000193}

\bibitem{DBLP:journals/tods/FegarasM00}
Fegaras, L., Maier, D.: Optimizing object queries using an effective calculus.
  ACM Trans. Database Syst.  \textbf{25}(4),  457--516 (2000)

\bibitem{fegaras00sigmod}
Fegaras, L., Srinivasan, C., Rajendran, A., Maier, D.: lambda-{DB}: An
  {ODMG}-based object-oriented {DBMS}. In: Chen, W., Naughton, J.F., Bernstein,
  P.A. (eds.) SIGMOD. p.~583. {ACM} (2000). \doi{10.1145/342009.335494}

\bibitem{fehrenbach18scp}
Fehrenbach, S., Cheney, J.: Language-integrated provenance. Science of Computer
  Programming  \textbf{155},  103--145 (2018)

\bibitem{foster08pods}
Foster, J.N., Green, T.J., Tannen, V.: Annotated {XML}: queries and provenance.
  In: PODS. pp. 271--280 (2008)

\bibitem{fowler20ijdc}
Fowler, S., Harding, S., Sharman, J., Cheney, J.: Cross-tier web programming
  for curated databases: a case study. International Journal of Digital
  Curation  \textbf{15}(1) (2020). \doi{10.2218/ijdc.v15i1.717}, pre-print
  presented at IDCC 2020

\bibitem{gibbons18icfp}
Gibbons, J., Henglein, F., Hinze, R., Wu, N.: Relational algebra by way of
  adjunctions. Proc. ACM Program. Lang.  \textbf{2}(ICFP) (Jul 2018).
  \doi{10.1145/3236781}

\bibitem{dsh}
Giorgidze, G., Grust, T., Schreiber, T., Weijers, J.: {Haskell} boards the
  {Ferry} - database-supported program execution for {Haskell}. In: IFL. pp.
  1--18. No.~6647 in LNCS, Springer-Verlag (2010)

\bibitem{giorgidze13ddfp}
Giorgidze, G., Grust, T., Ulrich, A., Weijers, J.: Algebraic data types for
  language-integrated queries. In: DDFP. pp. 5--10 (2013)

\bibitem{green07pods}
Green, T.J., Karvounarakis, G., Tannen, V.: Provenance semirings. In: PODS
  (2007)

\bibitem{grust10edbt}
Grust, T., Mayr, M., Rittinger, J.: Let {SQL} drive the {XQuery} workhorse
  ({XQuery} join graph isolation). In: EDBT. pp. 147--158 (2010).
  \doi{10.1145/1739041.1739062}

\bibitem{GrustMRS09}
Grust, T., Mayr, M., Rittinger, J., Schreiber, T.: Ferry: Database-supported
  program execution. In: SIGMOD (June 2009)

\bibitem{grust10vldb}
Grust, T., Rittinger, J., Schreiber, T.: Avalanche-safe {LINQ} compilation.
  PVLDB  \textbf{3}(1) (2010)

\bibitem{GrustRT08}
Grust, T., Rittinger, J., Teubner, J.: Pathfinder: {XQuery} off the relational
  shelf. IEEE Data Eng. Bull.  \textbf{31}(4) (2008)

\bibitem{grust99jiis}
Grust, T., Scholl, M.H.: How to comprehend queries functionally. J. Intell.
  Inf. Syst.  \textbf{12}(2-3),  191--218 (1999). \doi{10.1023/A:1008705026446}

\bibitem{grust13dbpl}
Grust, T., Ulrich, A.: First-class functions for first-order database engines.
  In: DBPL (2013), \texttt{http://arxiv.org/abs/1308.0158}

\bibitem{johnsson85fpca}
Johnsson, T.: Lambda lifting: Treansforming programs to recursive equations.
  In: FPCA. pp. 190--203 (1985). \doi{10.1007/3-540-15975-4\_37}

\bibitem{katsushima17pepm}
Katsushima, T., Kiselyov, O.: Language-integrated query with ordering, grouping
  and outer joins (poster paper). In: PEPM. pp. 123--124 (2017)

\bibitem{kiselyov17aplas}
Kiselyov, O., Katsushima, T.: Sound and efficient language-integrated query -
  maintaining the {ORDER}. In: APLAS 2017. pp. 364--383 (2017).
  \doi{10.1007/978-3-319-71237-6\_18}

\bibitem{LW97}
Libkin, L., Wong, L.: Query languages for bags and aggregate functions. J.
  Comput. Syst. Sci.  \textbf{55}(2) (1997). \doi{10.1006/jcss.1997.1523}

\bibitem{lindley12tldi}
Lindley, S., Cheney, J.: Row-based effect types for database integration. In:
  TLDI (2012). \doi{10.1145/2103786.2103798}

\bibitem{lindley10esop}
Lindley, S., Wadler, P.: The audacity of hope: Thoughts on reclaiming the
  database dream. In: ESOP (2010)

\bibitem{meijer:sigmod}
Meijer, E., Beckman, B., Bierman, G.M.: {LINQ}: reconciling object, relations
  and {XML} in the {.NET} framework. In: SIGMOD (2006).
  \doi{10.1145/1142473.1142552}

\bibitem{minamide96popl}
Minamide, Y., Morrisett, J.G., Harper, R.: Typed closure conversion. In: POPL.
  pp. 271--283 (1996). \doi{10.1145/237721.237791}

\bibitem{neumann15btw}
Neumann, T., Kemper, A.: Unnesting arbitrary queries. In: Datenbanksysteme
  f{\"{u}}r Business, Technologie und Web (BTW). pp. 383--402 (2015)

\bibitem{okura20flops}
Okura, R., Kameyama, Y.: Language-integrated query with nested data structures
  and grouping. In: FLOPS. pp. 139--158 (2020).
  \doi{10.1007/978-3-030-59025-3\_9}

\bibitem{ParedaensG92}
Paredaens, J., {Van Gucht}, D.: Converting nested algebra expressions into flat
  algebra expressions. ACM Trans. Database Syst.  \textbf{17}(1) (1992).
  \doi{10.1145/128765.128768}

\bibitem{quill}
Quill: Compile-time language integrated queries for {Scala}. Open source
  project, https://github.com/getquill/quill

\bibitem{ricciotti19dbpl}
Ricciotti, W., Cheney, J.: Mixing set and bag semantics. In: DBPL. pp. 70--73
  (2019). \doi{10.1145/3315507.3330202}

\bibitem{ricciotti20fscd}
Ricciotti, W., Cheney, J.: Strongly normalizing higher-order relational
  queries. In: FSCD. pp. 28:1--28:22 (2020). \doi{10.4230/LIPIcs.FSCD.2020.28}

\bibitem{russell08queue}
Russell, C.: Bridging the object-relational divide. Queue  \textbf{6} (May
  2008). \doi{10.1145/1394127.1394139}

\bibitem{schek86is}
Schek, H., Scholl, M.H.: The relational model with relation-valued attributes.
  Inf. Syst.  \textbf{11}(2),  137--147 (1986).
  \doi{10.1016/0306-4379(86)90003-7}

\bibitem{stolarek18programming}
Stolarek, J., Cheney, J.: Language-integrated provenance in {Haskell}. The Art,
  Science, and Engineering of Programming  \textbf{2}(3), ~A11 (2018)

\bibitem{suzuki16pepm}
Suzuki, K., Kiselyov, O., Kameyama, Y.: Finally, safely-extensible and
  efficient language-integrated query. In: PEPM. pp. 37--48 (2016).
  \doi{10.1145/2847538.2847542}

\bibitem{Syme06}
Syme, D.: Leveraging {.NET} meta-programming components from {F\#}: integrated
  queries and interoperable heterogeneous execution. In: ML Workshop (2006)

\bibitem{trinder-wadler:comprehensions}
Trinder, P., Wadler, P.: Improving list comprehension database queries. In:
  TENCON '89. (1989). \doi{10.1109/TENCON.1989.176921}

\bibitem{Ulrich11}
Ulrich, A.: A {Ferry}-based query backend for the {Links} programming language.
  Master's thesis, University of T\"ubingen (2011)

\bibitem{ulrich19phd}
Ulrich, A.: Query Flattening and the Nested Data Parallelism Paradigm. Ph.D.
  thesis, University of T{\"{u}}bingen, Germany (2019)

\bibitem{SIGMOD2015UlrichG}
Ulrich, A., Grust, T.: The flatter, the better: Query compilation based on the
  flattening transformation. In: SIGMOD. pp. 1421--1426. ACM (2015).
  \doi{10.1145/2723372.2735359}

\bibitem{Bussche01}
{Van den Bussche}, J.: Simulation of the nested relational algebra by the flat
  relational algebra, with an application to the complexity of evaluating
  powerset algebra expressions. Theor. Comput. Sci.  \textbf{254}(1-2) (2001)

\bibitem{wong96jcss}
Wong, L.: Normal forms and conservative extension properties for query
  languages over collection types. J. Comput. Syst. Sci.  \textbf{52}(3)
  (1996). \doi{10.1006/jcss.1996.0037}

\bibitem{Won00}
Wong, L.: Kleisli, a functional query system. J. Funct. Program.
  \textbf{10}(1) (2000). \doi{10.1017/S0956796899003585}

\end{thebibliography}

\begin{techreport}
\newpage
\appendix
\section{\NRClsb}

\subsection{Type system}
We give here the full set of typing rules for \NRClsb that we omitted from the main body of the paper: they are shown in Figure~\ref{fig:typing}.

\begin{figure}
  \begin{center}
  \AxiomC{$x : \sigma \in \Gamma$}
  \UnaryInfC{$\Gamma \vdash x : \sigma$}
  \DisplayProof
  \hspace{.5cm}
  \AxiomC{$\Sigma(c) = \vect{b} \to b$}
  \AxiomC{$(\Gamma \vdash M_i : b_i)_{i = 1,\ldots,n}$}
  \BinaryInfC{$\Gamma \vdash c(\vect{M_n}) : b$}
  \DisplayProof
  
  \medskip
  
  \AxiomC{$(\Gamma \vdash M_i : \sigma_i)_{i = 1, \ldots, n}$}
  \UnaryInfC{$\Gamma \vdash \tuple{\vect{\ell = M}} : \tuple{\vect{\ell : \sigma}}$}
  \DisplayProof
  \hspace{.5cm}
  \AxiomC{$\Gamma \vdash M : \tuple{\vect{\ell : \sigma}}$}
  \AxiomC{$i = 1,\ldots,n$}
  \BinaryInfC{$\Gamma \vdash M.\ell_i : \sigma_i$}
  \DisplayProof
  
  \medskip
  
  \AxiomC{$\Gamma, x : \sigma \vdash M : \tau$}
  \UnaryInfC{$\Gamma \vdash \lambda x^\sigma.M : \sigma \to \tau$}
  \DisplayProof
  \hspace{.5cm}
  \AxiomC{$\Gamma \vdash M : \sigma \to \tau$}
  \AxiomC{$\Gamma \vdash N : \sigma$}
  \BinaryInfC{$\Gamma \vdash (M~N) : \tau$}
  \DisplayProof
  
  \medskip
  
  \AxiomC{$\phantom{A}$}
  \UnaryInfC{$\Gamma \vdash \emptyoset^\sigma : \setlit{\sigma}$}
  \DisplayProof
  \hspace{.5cm}
  \AxiomC{$\Gamma \vdash M : \sigma$}
  \UnaryInfC{$\Gamma \vdash \setlit{M} : \setlit{\sigma}$}
  \DisplayProof
  \hspace{.5cm}
  \AxiomC{$\Gamma \vdash M : \setlit{\sigma}$}
  \AxiomC{$\Gamma \vdash N : \setlit{\sigma}$}
  \BinaryInfC{$\Gamma \vdash M \ocup N : \setlit{\sigma}$}
  \DisplayProof
  
  \medskip
  
  \AxiomC{$(\Gamma, x_1 : \sigma_1,\ldots,x_{i-1}:\sigma_{i-1} \vdash N_i : \setlit{\sigma_i})_{i=1,\ldots,n}$}
  \AxiomC{$\Gamma, \vect{x:\sigma} \vdash M : \setlit{\tau}$}
  \BinaryInfC{$\Gamma \vdash \comprehension{M | \vect{x \leftarrow N}} : \setlit{\tau}$}
  \DisplayProof
  
  \medskip
  
  \AxiomC{$\Gamma \vdash M : \setlit{\sigma}$}
  \UnaryInfC{$\Gamma \vdash \setempty(M) : \boolty$}
  \DisplayProof
  \hspace{.5cm}
  \AxiomC{$\Gamma \vdash M : \setlit{\sigma}$}
  \AxiomC{$\Gamma \vdash N : \boolty$}
  \BinaryInfC{$\Gamma \vdash M~\setwhere~N : \setlit{\sigma}$}
  \DisplayProof
  
  \medskip
  
  \AxiomC{$\phantom{A}$}
  \UnaryInfC{$\Gamma \vdash \emptymset^\sigma : \msetlit{\sigma}$}
  \DisplayProof
  \hspace{.5cm}
  \AxiomC{$\Gamma \vdash M : \sigma$}
  \UnaryInfC{$\Gamma \vdash \msetlit{M} : \setlit{\sigma}$}
  \DisplayProof
  \hspace{.5cm}
  \AxiomC{$\Gamma \vdash M : \msetlit{\sigma}$}
  \AxiomC{$\Gamma \vdash N : \msetlit{\sigma}$}
  \BinaryInfC{$\Gamma \vdash M \uplus N : \msetlit{\sigma}$}
  \DisplayProof
  
  \medskip
  
  \AxiomC{$(\Gamma, x_1 : \sigma_1,\ldots,x_{i-1}:\sigma_{i-1} \vdash N_i : \msetlit{\sigma_i})_{i=1,\ldots,n}$}
  \AxiomC{$\Gamma, \vect{x:\sigma} \vdash M : \msetlit{\tau}$}
  \BinaryInfC{$\Gamma \vdash \mcomprehension{M | \vect{x \leftarrow N}} : \msetlit{\tau}$}
  \DisplayProof
  
  \medskip
  
  \AxiomC{$\Gamma \vdash M : \msetlit{\sigma}$}
  \UnaryInfC{$\Gamma \vdash \bagempty(M) : \boolty$}
  \DisplayProof
  \hspace{.5cm}
  \AxiomC{$\Gamma \vdash M : \msetlit{\sigma}$}
  \AxiomC{$\Gamma \vdash N : \boolty$}
  \BinaryInfC{$\Gamma \vdash M~\bagwhere~N : \msetlit{\sigma}$}
  \DisplayProof

  \medskip

  \AxiomC{$\Gamma \vdash M : \msetlit{\sigma}$}
  \UnaryInfC{$\Gamma \vdash \distinct M : \setlit{\sigma}$}
  \DisplayProof
  \hspace{.5cm}
  \AxiomC{$\Gamma \vdash M : \setlit{\sigma}$}
  \UnaryInfC{$\Gamma \vdash \promote M : \msetlit{\sigma}$}
  \DisplayProof
  \end{center}

  \caption{Type system of \NRClsb.}\label{fig:typing}
  \end{figure}

\subsection{Normalization}
We show in Figure~\ref{fig:normNRClsb} the rewrite system used to normalize \NRClsb queries.

\begin{figure}
  \begin{center}\small
  \begin{tabular}{r@{$~\red~$}l}
  \multicolumn{2}{c}{$
  (\lambda x.M)~N \red M[N/x]
  \qquad
  \tuple{\ldots, \ell = M, \ldots}.\ell \red M
  $}
  \\
  \multicolumn{2}{c}{} 
  \\
  \multicolumn{2}{c}{$
  \comprehension{\emptyoset | \Theta} \red \emptyoset
  \qquad
  \comprehension{M | \Theta, x \leftarrow \emptyoset,\Theta'} \red \emptyoset
  $}
  \\
  $\comprehension{M | \Theta, x \leftarrow \setlit{N}, \Theta'}$ 
  & $\comprehension{M\subst{x}{N} | \Theta, \Theta'\subst{x}{N}}$
  \\
  $\comprehension{M \ocup N | \Theta}$ 
   & $\comprehension{M | \Theta} \ocup \comprehension{N | \Theta}$
  \\
  $\comprehension{M | \Theta,x \leftarrow N \ocup R,\Theta'}$
  & $\comprehension{M | \Theta,x \leftarrow N,\Theta'} \ocup \comprehension{M | \Theta,x \leftarrow R,\Theta'}$
  \\
  $\comprehension{M | \Theta, x \leftarrow \comprehension{R | \Theta' }, \Theta''}$
  & $\comprehension{ M | \Theta, \Theta', x \leftarrow R, \Theta''}$ (if $\dom(\Theta') \notin \FV(M,\Theta'')$)
  \\
  $\comprehension{\comprehension{M | \Theta'} | \Theta}$
  & $\comprehension{M | \Theta,\Theta'}$
  \\
  $\comprehension{M | \Theta,x \leftarrow R~\setwhere~N,\Theta'}$ 
  & $\comprehension{M~\setwhere~N | \Theta, x \leftarrow R, \Theta'}$ (if $x \notin \FV(N)$)
  \\
  $\comprehension{\distinct(M - N) | \Theta}$
  &
  $\comprehension{\setlit{z} | \Theta, z \gets \distinct(M - N)}$
  \\
  $\comprehension{\distinct(M - N)~\setwhere~R | \Theta}$
  &
  $\comprehension{\setlit{z}~\setwhere~R | \Theta, z \gets \distinct(M - N)}$ (if $z \notin \FV(R)$)
  \\
  \multicolumn{2}{c}{} 
  \\
  \multicolumn{2}{c}{$
  M~\setwhere~\kwtrue \red M
  \qquad
  M~\setwhere~\kwfalse \red \emptyoset
  \qquad
  \emptyoset~\setwhere~M \red \emptyoset
  $}
  \\
  $(N \ocup R)~\setwhere~M$ & $(N~\setwhere~M) \ocup (R~\setwhere~M)$
  \\
  $\comprehension{N | \Theta}~\setwhere~M$ 
  & $\comprehension{N~\setwhere~M | \Theta}$ (if $\dom(\Theta) \cap FV(M) = \emptyset$)
  \\
  $R~\setwhere~N~\setwhere~M$ & $R~\setwhere~(M \land N)$
  \\
  \multicolumn{2}{c}{} 
  \\
  \multicolumn{2}{c}{$
  \mcomprehension{\emptymset | \Theta} \red \emptymset
  \qquad
  \mcomprehension{M | \Theta, x \leftarrow \emptymset,\Theta'} \red \emptymset
  $}
  \\
  $\mcomprehension{M | \Theta, x \leftarrow \msetlit{N}, \Theta'}$ 
  & $\mcomprehension{M\subst{x}{N} | \Theta, \Theta'\subst{x}{N}}$
  \\
  $\mcomprehension{M \uplus N | \Theta}$ 
   & $\mcomprehension{M | \Theta} \uplus \mcomprehension{N | \Theta}$
  \\
  $\mcomprehension{M | \Theta,x \leftarrow N \uplus R,\Theta'}$
  & $\mcomprehension{M | \Theta,x \leftarrow N,\Theta'} \uplus \mcomprehension{M | \Theta,x \leftarrow R,\Theta'}$
  \\
  $\mcomprehension{M | \Theta, x \leftarrow \mcomprehension{R | \Theta' }, \Theta''}$
  & $\mcomprehension{ M | \Theta, \Theta', x \leftarrow R, \Theta''}$ (if $\dom(\Theta') \notin \FV(M,\Theta'')$)
  \\
  $\mcomprehension{\mcomprehension{M | \Theta'} | \Theta}$
  & $\mcomprehension{M | \Theta,\Theta'}$
  \\
  $\mcomprehension{M | \Theta, x \leftarrow R~\bagwhere~N, \Theta'}$ 
  & $\mcomprehension{M~\bagwhere~N | \Theta, x \leftarrow R, \Theta'}$ (if $x \notin \FV(N)$)
  \\
  $\mcomprehension{\promote M | \Theta}$
  & $\mcomprehension{\msetlit{z} | \Theta, z \gets \promote M}$
  \\
  $\mcomprehension{\promote M~\bagwhere~R | \Theta}$
  &
  $\mcomprehension{\msetlit{z}~\bagwhere~R | \Theta, z \gets \promote M}$ (if $z \notin \FV(R)$)
  \\
  $\mcomprehension{ M - N | \Theta}$
  & $\mcomprehension{\msetlit{z} | \Theta, z \gets M - N}$
  \\
  $\mcomprehension{(M - N)~\bagwhere~R | \Theta}$
  &
  $\mcomprehension{\msetlit{z}~\bagwhere~R | \Theta, z \gets M - B}$ (if $z \notin \FV(R)$)
  \\
  \multicolumn{2}{c}{} 
  \\
  \multicolumn{2}{c}{$
  M~\bagwhere~\kwtrue \red M
  \qquad
  M~\bagwhere~\kwfalse \red \emptymset
  \qquad
  \emptymset~\bagwhere~M \red \emptymset
  $}
  \\
  $(N \uplus R)~\bagwhere~M$ & $(N~\bagwhere~M) \uplus (R~\bagwhere~M)$
  \\
  $\mcomprehension{N | \Theta}~\bagwhere~M$ 
  & $\mcomprehension{N~\bagwhere~M | \Theta}$ (if $\dom(\Theta) \cap \FV(M) = \emptyset$)
  \\
  $R~\bagwhere~N~\bagwhere~M$ & $R~\bagwhere~(M \land N)$
  \\
  \multicolumn{2}{c}{} 
  \\
  \multicolumn{2}{c}{$
    \distinct\emptymset \red \emptyset
    \qquad
    \distinct\msetlit{M} \red \setlit{M}
    \qquad
    \distinct (M \uplus N) \red \distinct M \cup \distinct N
    $}
  \\
  \multicolumn{2}{c}{$
      \distinct\mcomprehension{M | \Theta} \red \comprehension{\distinct M | \Theta^\distinct}
      \qquad
      \distinct(M~\bagwhere~N) \red \distinct M~\setwhere~N
  $}
      \\
  \multicolumn{2}{c}{$
      \distinct\promote M \red M
      \qquad
      \promote \emptyset \red \emptymset
      \qquad
      \promote(M~\setwhere~N) \red \promote M~\bagwhere~N
  $}
  \\
  $\setempty(M)$ & $\setempty(\comprehension{\setlit{\tuple{}} | x \gets M})$ (if $M$ is not a flat set)
  \\
  $\bagempty(M)$ & $\bagempty(\comprehension{\msetlit{\tuple{}} | x \gets M})$ (if $M$ is not a flat bag)
  \end{tabular}

  \[
  \begin{array}{rl}
  (\vect{x \gets M})\subst{y}{N} & \triangleq \vect{x \gets M\subst{y}{N}}
  \qquad \text{(if $x \neq y, \FV(N) \cap \vect{x} = \emptyset$)}
  \\
  (\vect{x \gets M})^\distinct & \triangleq \vect{x \gets \distinct M}  
  \end{array}
  \]
  \end{center}
  \caption{Query normalization}\label{fig:normNRClsb}
  \end{figure}  

\subsection{Semantics}
We follow the $K$-relation style of semantics, as introduced by 
Green et al.~\cite{green07pods} and used for formalization by Chu et al.~\cite{Chu17}.

Basic types and records are represented by the usual interpretations
of such types, and the details are elided.  For set types, the
interpretation of a set $\setlit{A}$ is $\sem{A} \to_{\mathsf{fs}} \{0,1\}$.
Here $\to_{\mathsf{fs}}$ is the set of finitely-supported functions from
$\sem{A}$, here the support is the set of elements mapped to a nonzero
value.  We consider $\{0,1\}$ equipped with the usual structure of a
Boolean algebra, with operations $\wedge,\vee,\neg$, and we consider
equality and other meta-level predicates as functions returning
Boolean values.   Likewise, we
consider bag types $\msetlit{A}$ to be interpreted as
finitely-supported functions $\sem{A} \to_{\mathsf{fs}} \mathbb{N}$, where $\mathbb{N}$ is the set of natural
numbers, equipped with the usual arithmetic operations $+,\monus,\times$;
here $\monus$ is truncated subtraction $m \monus n = \max(m-n,0)$.

\begin{figure}[tb]
\begin{eqnarray*}
  \sem{\emptyset}\rho &=& \lambda u. 0
\\
\sem{\setlit{M}}\rho &=& \lambda u. \setlit{M}\rho = u
\\
\sem{M \cup N}\rho &=& \lambda u. \sem{M}\rho u \vee \sem{N}\rho u
\\
\sem{\bigcup\setlit{N \mid x \leftarrow M}}\rho &=& \lambda u. 
\bigvee_v \sem{M}\rho v \wedge \sem{N}\rho[x \mapsto v] u
\\
\sem{\setempty(M)}\rho &=& \lambda u. \neg(\bigvee_u \sem{M}\rho u)
\\
\sem{M~\setwhere~ N}\rho &=& \lambda u. \sem{M}\rho u \wedge \sem{N}\rho
\\
\sem{\delta(M)}\rho &=& \lambda u. \zeta(\sem{M}\rho u)
\\
  \sem{\emptymset}\rho &=& \lambda u. 0
\\
\sem{\msetlit{M}}\rho &=& \lambda u. \chi(\setlit{M}\rho = u)
\\
\sem{M \cup N}\rho &=& \lambda u. \sem{M}\rho u + \sem{N}\rho u
\\
\sem{\bigcup\setlit{N \mid x \leftarrow M}}\rho &=& \lambda u. 
\sum_v \sem{M}\rho v \times \sem{N}\rho[x \mapsto v] u
\\
\sem{M~\bagwhere~ N}\rho &=& \lambda u. \sem{M}\rho u \times \chi(\sem{N}\rho)
\\
\sem{\bagempty(M)}\rho &=& \lambda u. \neg(\zeta(\sum_u \sem{M}\rho u))
\\
\sem{M - N}\rho &=& \lambda u. \sem{M}\rho u - \sem{N}\rho u
\\
\sem{\promote(M)}\rho &=& \lambda u. \chi(\sem{M}\rho u)
\end{eqnarray*}
\caption{Semantics of set and multiset operations of
  \NRClsb}\label{fig:semantics}
\end{figure}

Finally to be explicit about the situations where we coerce a Boolean
value to a natural number or vice versa we introduce notation $\chi :
\{0,1\} \to \mathbb{N}$ for the ``characteristic function'' and $\zeta
: \mathbb{N} \to \{0,1\}$ for the ``nonzero test'' function $x \mapsto (x > 0)$ that maps $0$ to $0$ and
any nonzero value to $1$.  Note that $\zeta(\chi(n)) = n$.

Since we work with finitely-supported functions $f,p$, we write $\sum_u
f(u)$ (resp. $\bigvee_u p(u)$ for the summation (resp. disjunction)
over all possible $u$ of $f(u)$ (resp. $p(u)$).  Although this
summation or disjunction is infinite, the number of values of $u$ for
which $f$/$p$ can be nonzero is finite, so this is a finite sum or
disjunction and thus well-defined.  Finally, although \NRClsb also
includes function types, lambda abstraction, and application, but not
recursion, their addition poses no difficulty and since these features can be
normalized away prior to applying the results in this paper, we do not
explicitly discuss them in the semantics.

\section{Proofs for Section~\ref{sec:delat}}

\begin{lemma}
\begin{enumerate}
\item$\sum_t \chi(t=u) \times e(t,u) = e(t,t)$
\item $\chi(\zeta(e)) \times e  = \chi(e > 0) \times e = e$
\end{enumerate}
\end{lemma}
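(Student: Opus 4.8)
The plan is to treat both identities as elementary facts about the coercion functions $\chi$ and $\zeta$ that mediate between the Boolean semiring $\{0,1\}$ and the natural-number semiring $\mathbb{N}$ in the $K$-relation semantics of Figure~\ref{fig:semantics}; neither part requires induction on terms, only the defining equations $\zeta(n) = (n > 0)$, the fact that $\chi$ takes values in $\{0,1\}$, and the relation $\chi(\zeta(n)) = n$ recorded just before the lemma. I would prove the two parts independently: the first by a selection argument on the summation, and the second by a case split on whether the argument is zero.

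For part (1), I would first recall that $e$ is a finitely-supported function, so although $\sum_t$ formally ranges over all possible values of $t$, only finitely many summands are nonzero and the sum is well-defined. The factor $\chi(t = u)$ equals $1$ when $t = u$ and $0$ otherwise, so it acts as a selector: splitting the sum into the single diagonal summand at $t = u$ and the remaining summands, every term with $t \neq u$ contributes $0 \times e(t,u) = 0$, leaving exactly $\chi(u = u) \times e(u,u) = e(u,u)$. Thus the whole sum collapses to the value of $e$ on the diagonal picked out by the equality constraint (which I read as the intended right-hand side).

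For part (2), the first equality is immediate by unfolding the definition $\zeta(e) = (e > 0)$, so that $\chi(\zeta(e))$ and $\chi(e > 0)$ are literally the same expression. For the second equality I would case split on $e \in \mathbb{N}$: if $e = 0$ then $\chi(e > 0) = \chi(\mathrm{false}) = 0$, so the product is $0 \times 0 = 0 = e$; if $e > 0$ then $\chi(e > 0) = \chi(\mathrm{true}) = 1$, so the product is $1 \times e = e$. In both cases the product equals $e$, yielding the claim.

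I do not expect any genuine obstacle here, since both statements are routine semiring manipulations; the only points requiring a little care are ensuring that the sum in part (1) is finite (guaranteed by finite support) so that the splitting is legitimate, and making the case analysis in part (2) exhaustive over all of $\mathbb{N}$ rather than just over $\{0,1\}$. These are auxiliary arithmetic facts used to justify the delateralization rewrites of Section~\ref{sec:delat}, where summing out a generator variable constrained by an equality (part 1) and the idempotence of the nonzero test under multiplication (part 2) allow one to simplify the $K$-relation denotations of the rewritten queries.
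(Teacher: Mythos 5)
Your proof is correct and takes essentially the same approach as the paper's: part (1) by observing that every summand vanishes except (possibly) the one with $t = u$, and part (2) by a case split on whether $e = 0$ or $e > 0$. Your extra care points — the finite-support justification of the sum and reading the right-hand side as $e(u,u)$ (fixing the statement's loose use of the bound variable $t$) — are sound refinements of the same argument.
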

\begin{proof}
For part (1), all of the summands are zero except (possibly) when $t =
u$.  Part (2) follows by a simple case analysis on $e > 0$; if $e = 0$
then both sides are zero while if $e > 0$ then  $\chi(e > 0) \times e
= 1 \times e = e$.\qed
\end{proof}
\begin{lemma}[Commutativity]
Suppose $\{x,y\} \cap FV(M,N) = \emptyset$.  Then
  \[
  \biguplus\msetlit{M \mid x \leftarrow N, y \leftarrow P} \equiv 
  \biguplus\msetlit{M \mid y \leftarrow P, x \leftarrow N}
  \]
  \[
  \bigcup\setlit{M \mid x \leftarrow N, y \leftarrow P} \equiv 
  \bigcup\setlit{M \mid y \leftarrow P, x \leftarrow N}
  \]
\end{lemma}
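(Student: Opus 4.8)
The plan is to prove both equations directly from the denotational ($K$-relation) semantics of Figure~\ref{fig:semantics}, reducing each side to an iterated sum (for bags) or iterated disjunction (for sets) and then reordering. First I would read the two-generator comprehension as the nested comprehension it abbreviates, using the normalization equivalence $\mcomprehension{\mcomprehension{M \mid \Theta'} \mid \Theta} \equiv \mcomprehension{M \mid \Theta,\Theta'}$, so that the single-generator clause of the semantics applies twice. Unfolding the left-hand bag comprehension then gives, for every candidate output $u$,
\[
\sem{\mcomprehension{M \mid x \gets N, y \gets P}}\rho\,u = \sum_{v}\sum_{w} \sem{N}\rho\,v \times \sem{P}\rho[x \mapsto v]\,w \times \sem{M}\rho[x \mapsto v, y \mapsto w]\,u,
\]
and symmetrically the right-hand side unfolds to $\sum_{w}\sum_{v} \sem{P}\rho\,w \times \sem{N}\rho[y \mapsto w]\,v \times \sem{M}\rho[y \mapsto w, x \mapsto v]\,u$.

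The heart of the argument is then purely algebraic. Since each $\sem{\cdot}\rho$ is finitely supported and $\times$ distributes over the finite sums, the double sum is a finite sum of non-negative naturals, so its two iteration orders agree by commutativity and associativity of $+$ and $\times$ on $\mathbb{N}$ (a Fubini-style exchange, legitimate precisely because the factors are non-zero only on the finite support). To match the summands I would invoke the standard coincidence lemma — that $\sem{L}\rho$ depends only on $\rho$ restricted to $\FV(L)$ — to discharge the spurious environment dependencies: the freshness hypotheses give $\sem{N}\rho[y \mapsto w] = \sem{N}\rho$, and independence of the second generator from the first's bound variable gives $\sem{P}\rho[x \mapsto v] = \sem{P}\rho$; finally, since $x \neq y$, the two extended environments coincide, $\rho[x \mapsto v, y \mapsto w] = \rho[y \mapsto w, x \mapsto v]$. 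After these rewrites the two iterated sums are literally the same family of summands indexed by $(v,w)$, which proves the multiset equation.

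The set equation is proved in exactly the same way, replacing the bag clause of the semantics by its set counterpart: the two summations become disjunctions $\bigvee_{v}\bigvee_{w}$ and the products become conjunctions, and the exchange step now relies on commutativity, associativity and idempotency of $\vee$ and $\wedge$ on $\{0,1\}$ (again justified by finiteness of the support over which the disjuncts are non-zero). The coincidence-lemma bookkeeping is identical.

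I expect the only real obstacle to be this bookkeeping rather than any deep mathematics: one must check that the freshness side-conditions are exactly the ones consumed when stripping $\rho[x \mapsto v]$ from $\sem{P}$ and $\rho[y \mapsto w]$ from $\sem{N}$ — in other words, that the two generators are \emph{genuinely independent}, so that the permutation is capture-free, which is precisely what ``commutativity of independent generators'' refers to — and that the formally infinite sums and disjunctions are reindexed only over the finite support, so that the exchange of iteration order is well-defined.
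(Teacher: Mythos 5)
Your proof is correct and takes essentially the same route as the paper: the paper's entire proof is ``Straightforward by unfolding definitions,'' and your unfolding into iterated sums (resp.\ disjunctions), the finite Fubini-style exchange over the supports, and the coincidence-lemma bookkeeping is precisely that argument carried out in detail. One remark: the conditions you actually consume --- $x \notin \FV(P)$, $y \notin \FV(N)$, and $x \neq y$ --- do not all follow from the hypothesis as printed ($\{x,y\} \cap \FV(M,N) = \emptyset$, which omits $P$ and looks like a typo for $\FV(N,P)$), so your reading of the side-conditions is the intended one and is exactly what ``commutativity of independent generators'' requires.
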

\begin{proof}
  Straightforward by unfolding definitions.
\qed\end{proof}

\if 0
\jcnote{TODO: Restate/generalize this as correctness statement for
  translation of graph operation to plain NRC}
\begin{lemma}
For any $N,x,P,\rho,t,u$ we have:
  \[\sem{\graph{x\leftarrow N}{P}}\rho (t,u) = 
  \sem{P}\rho[x\mapsto t]u \times \sem{N}\rho t\]
\end{lemma}
\begin{proof}
  \begin{eqnarray*}
    && \sem{\graph{x \leftarrow N}{P}}\rho (t,u) \\
&=& \sem{\biguplus\msetlit{\msetlit{(x,y)} \mid x \leftarrow N, y
    \leftarrow P}}\rho(t,u)\\
&=& \sum_{t'}\sem{\biguplus\msetlit{\msetlit{(x,y)} \mid y
    \leftarrow P}}\rho[x\mapsto t'](t,u)\times \sem{N}\rho u\\
&=& \sum_{t'}\sum_{u'}\sem{\msetlit{(x,y)}}\rho[x\mapsto t',
    y\mapsto u'](t,u) \times \sem{P}\rho[x\mapsto t'](t,u)\times \sem{N}\rho u\\
&=& \sum_{t'}\sum_{u'}\chi((t,u) = (t',u')) \times \sem{P}\rho[x\mapsto t'](t,u)\times \sem{N}\rho u\\
&=& \sum_{t'}\sum_{u'} \chi(u=u') \times \chi(t = t') \times\sem{P}\rho[x\mapsto t'](t,u)\times \sem{N}\rho u\\
&=& \sum_{t'} \chi(t = t') \times\sem{P}\rho[x\mapsto t'](t,u)\times \sem{N}\rho u\\
&=& \sem{P}\rho[x\mapsto t](t,u)\times \sem{N}\rho u
  \end{eqnarray*}
\qed\end{proof}
\fi 

Recall (for example from Buneman et al.~\cite{BNTW95}) that set membership $M \in N$ is definable as
$\neg\setempty(\setlit{x \mid x \leftarrow N, x = M})$ It is
straightforward to show that
$\sem{M \in N}\rho v = \sem{N}\rho(\sem{M}\rho)$, that is, the result
is true iff the interpretation of $N$ returns true on the
interpretation of $M$.  We will use
this as a primitive in the following proofs.  First we observe that
when $x$ was introduced by a generator  $x \leftarrow N$, then it
is redundant to check that $x \in \distinct(N)$ (if $N$ is a bag) or
$x \in N$ (if $N$ is a set).

\begin{lemma}\label{lem:graph-elem}
  \[\biguplus\msetlit{M~\bagwhere ~x \in \distinct(N) \mid x \leftarrow N} \equiv
  \biguplus\msetlit{M \mid x \leftarrow N}\]
  \[\bigcup\setlit{M~\setwhere ~x \in N \mid x \leftarrow N} \equiv
  \bigcup\setlit{M \mid x \leftarrow N}\]

\end{lemma}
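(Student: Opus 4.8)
The plan is to prove both equivalences pointwise in the $K$-relation denotational semantics: for an arbitrary environment $\rho$ and value $u$, I would show that the two sides denote the same multiplicity (a natural number in the bag case, a Boolean in the set case). The side condition implicit in the statement, namely $x \notin \FV(N)$ --- which holds because $x$ is the variable bound by the generator $x \leftarrow N$ --- will be used to conclude $\sem{N}\rho[x \mapsto v] = \sem{N}\rho$ for every witness $v$.

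For the bag equation I would first unfold the semantics of the outer comprehension on both sides. The right-hand side yields $\sum_v \sem{N}\rho\,v \times \sem{M}\rho[x \mapsto v]\,u$, while the left-hand side carries the extra factor $\chi(\sem{x \in \distinct(N)}\rho[x \mapsto v])$ contributed by the $\bagwhere$ guard. Next I would simplify this guard: using the recalled identity $\sem{M \in N}\rho\,v = \sem{N}\rho(\sem{M}\rho)$ together with $\sem{\distinct(N)}\rho = \lambda u.\,\zeta(\sem{N}\rho\,u)$ and $x \notin \FV(N)$, the guard evaluates to $\chi(\zeta(\sem{N}\rho\,v))$. The proof then reduces to a term-by-term identity inside the summation: writing $e = \sem{N}\rho\,v$, I must check $e \times \sem{M}\rho[x \mapsto v]\,u \times \chi(\zeta(e)) = e \times \sem{M}\rho[x \mapsto v]\,u$, which is immediate from part~(2) of the preceding lemma, $\chi(\zeta(e)) \times e = e$.

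The set equation is handled analogously but more simply. After unfolding, the $\setwhere$ guard contributes a conjunct $\sem{x \in N}\rho[x \mapsto v]$, which by the membership identity and $x \notin \FV(N)$ equals $\sem{N}\rho\,v$; since this same Boolean already appears as the generator factor in the disjunction $\bigvee_v \sem{N}\rho\,v \wedge \sem{M}\rho[x \mapsto v]\,u$, idempotence of conjunction ($b \wedge b = b$) collapses the redundant copy. I expect no conceptual difficulty here; the only place demanding care is the bookkeeping of the $\chi/\zeta$ coercions in the bag case and the correct instantiation of the membership semantics under the free-variable side condition, so I would state those two ingredients explicitly before the calculation.
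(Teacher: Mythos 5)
Your proposal is correct and matches the paper's own proof essentially step for step: unfold the comprehension and guard semantics, reduce the membership test via $\sem{M \in N}\rho = \sem{N}\rho(\sem{M}\rho)$ and the semantics of $\distinct$ to $\chi(\zeta(\sem{N}\rho\,v))$, and conclude with the identity $\chi(\zeta(e)) \times e = e$ from the preceding lemma (the paper likewise dispatches the set case as ``similar, but simpler,'' which is exactly your idempotence-of-conjunction argument). Your write-up is, if anything, slightly more careful than the paper's, which conflates the bound summation variable with the outer value in its displayed calculation.
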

\begin{proof}
For the first equation we reason as follows:
\begin{eqnarray*}
&&  \sem{\biguplus\msetlit{M~\bagwhere ~x \in \distinct(N) \mid
      x \leftarrow N}}\rho u \\
&=& 
\sum_u\sem{M~\bagwhere ~x \in \distinct(N)}\rho[x\mapsto u]
\times \sem{N}\rho u \\
&=& 
\sum_u\sem{M}\rho[x\mapsto u] v \times \chi(\sem{x \in \distinct(N)}\rho[x\mapsto u])
\times \sem{N}\rho u \\
&=& 
\sum_u\sem{M}\rho[x\mapsto u] v \times
    \chi(\sem{\distinct(N)}\rho[x\mapsto u](\sem{x}\rho[x\mapsto u]))
\times \sem{N}\rho u\\
&=& 
\sum_u\sem{M}\rho[x\mapsto u] v \times
    \chi(\zeta(\sem{N}\rho u))
\times \sem{N}\rho u \\
&=& 
\sum_u \sem{M}\rho[x\mapsto u] v\times \sem{N}\rho u\\
&=&  
\sem{  \biguplus\msetlit{M \mid x \leftarrow N}}\rho v
\end{eqnarray*}
The proof of the second equation is similar, but simpler.
\qed\end{proof}

\begin{lemma}\label{lem:graph-eta}
  \begin{enumerate}
  \item
    $\baggraph{x \leftarrow N}{M}\gapp O \equiv M[O/x] ~\bagwhere~ O
    \in N$
    \item
    $\setgraph{x \leftarrow N}{M}\gapp O \equiv M[O/x] ~\setwhere~ O
    \in N$
  \end{enumerate}
\end{lemma}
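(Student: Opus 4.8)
The plan is to prove both equivalences purely by unfolding the denotational semantics on each side and matching the results pointwise; the two parts are essentially identical, differing only in whether the ambient semiring is the Boolean algebra $\{0,1\}$ (for sets) or the natural numbers $\mathbb{N}$ (for bags). Throughout I work with the single-generator instances of the graph-abstraction clauses, write $\rho$ for the ambient environment, and prove $\sem{\text{lhs}}\rho\,v = \sem{\text{rhs}}\rho\,v$ for every $\rho$ and every candidate output value $v$. I assume the standard substitution lemma $\sem{M[O/x]}\rho = \sem{M}\rho[x \mapsto \sem{O}\rho]$, and I use the membership identity established just before the lemma, namely $\sem{O \in N}\rho = \sem{N}\rho(\sem{O}\rho)$.

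For the set case (part 2), I first apply the graph-application clause to obtain $\sem{\setgraph{x \gets N}{M}\gapp O}\rho\,v = \sem{\setgraph{x \gets N}{M}}\rho\,(\sem{O}\rho, v)$, and then the set graph-abstraction clause to rewrite this as $\sem{N}\rho(\sem{O}\rho) \land \sem{M}\rho[x \mapsto \sem{O}\rho]\,v$. On the other side, the $\setwhere$ clause of Figure~\ref{fig:semantics} gives $\sem{M[O/x]~\setwhere~O \in N}\rho\,v = \sem{M[O/x]}\rho\,v \land \sem{O \in N}\rho$, which by the substitution lemma and the membership identity becomes $\sem{M}\rho[x \mapsto \sem{O}\rho]\,v \land \sem{N}\rho(\sem{O}\rho)$. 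The two expressions coincide by commutativity of $\land$.

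For the bag case (part 1) the same three steps (graph application, bag graph abstraction, then on the other side the $\bagwhere$ clause together with substitution and membership) reduce the left-hand side to $\sem{N}\rho(\sem{O}\rho) \times \sem{M}\rho[x \mapsto \sem{O}\rho]\,v$ and the right-hand side to $\sem{M}\rho[x \mapsto \sem{O}\rho]\,v \times \chi(\sem{N}\rho(\sem{O}\rho))$, and these agree by commutativity of multiplication in $\mathbb{N}$.

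The one point that needs care -- and the only genuine obstacle -- is the Boolean/natural-number coercion in the bag case. In the bag graph-abstraction clause the domain test $\sem{N}\rho(\sem{O}\rho)$ is multiplied directly into a natural-number multiplicity, whereas the $\bagwhere$ clause inserts an explicit $\chi$ around its guard. These two match precisely because the graph domain $N$ is set-typed, as forced by the graph-abstraction typing rule in Figure~\ref{fig:graphtyping}: hence $\sem{N}\rho(\sem{O}\rho) \in \{0,1\}$ and applying $\chi$ to it is the identity coercion $\chi|_{\{0,1\}}$. Keeping track of this implicit coercion is exactly what makes the bag computation line up with the set one; everything else is routine definitional unfolding.
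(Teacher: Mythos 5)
Your proposal is correct and follows essentially the same route as the paper's own proof: unfold graph application, then the graph-abstraction clause on one side, and the $\kwwhere$ clause together with the substitution lemma and the membership identity $\sem{O \in N}\rho = \sem{N}\rho(\sem{O}\rho)$ on the other, matching the results pointwise. Your explicit handling of the $\chi$ coercion in the bag case is a point the paper's proof glosses over (it silently inserts $\chi$ around the domain test when unfolding the bag graph semantics), so your treatment is if anything slightly more careful, but it is the same argument.
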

\begin{proof}
The proofs are similar; we show the first.
  \begin{eqnarray*}
&&\sem{\graph{x \leftarrow N}{M}\gapp O}\rho v\\
&=&
\sem{\graph{x \leftarrow N}{M}}\rho(\sem{O}\rho,v)\\
&=&
\chi(\sem{N}\rho (\sem{O}\rho)) \times \sem{M}\rho[x\mapsto \sem{O}\rho] v\\
&=&
\sem{M[O/x]}\rho v \times \sem{O \in N}\rho v
\\
&=&
\sem{M[O/x]~ \bagwhere ~O \in N}\rho v
\end{eqnarray*}
\qed\end{proof}

Next we show graph construction commutes with promotion,
deduplication, union, multiset union and difference:
\begin{lemma}
  $\promote(\setgraph{\vect{x \leftarrow N}}{M}) \equiv \baggraph{\vect{x \leftarrow N}}{\promote(M)}$
\end{lemma}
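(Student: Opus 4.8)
The plan is to prove this equivalence exactly in the style of the preceding semantic lemmas (Lemmas~\ref{lem:graph-elem} and~\ref{lem:graph-eta}): by unfolding the denotational semantics of both sides and comparing them pointwise. Since both sides have graph type $\vect{\sigma} \strictfun \msetlit{\tau}$, it suffices to show that for every environment $\rho$ and every pair $(\vect{u},v)$ the two interpretations agree as natural numbers.

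First I would unfold the left-hand side. By the semantics of promotion we have $\sem{\promote(\setgraph{\vect{x \gets N}}{M})}\rho(\vect{u},v) = \chi(\sem{\setgraph{\vect{x \gets N}}{M}}\rho(\vect{u},v))$, and unfolding the set-graph abstraction turns this into
\[
\chi\!\left(\left(\bigwedge_i \sem{N_i}\rho[x_1 \mapsto u_1,\ldots,x_{i-1} \mapsto u_{i-1}]\,u_i\right) \wedge \sem{M}\rho[\vect{x \mapsto u}]\,v\right).
\]
Symmetrically, unfolding the right-hand side via the bag-graph semantics and then promotion yields
\[
\left(\bigwedge_i \sem{N_i}\rho[x_1 \mapsto u_1,\ldots,x_{i-1} \mapsto u_{i-1}]\,u_i\right) \times \chi\!\left(\sem{M}\rho[\vect{x \mapsto u}]\,v\right).
\]

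It then remains to reconcile these two expressions. Writing $b$ for the common guard $\bigwedge_i \sem{N_i}\rho[\ldots]\,u_i$ and $c$ for $\sem{M}\rho[\vect{x \mapsto u}]\,v$, I observe that because each $N_i$ has a set type and $M$ has a set type, both $b$ and $c$ lie in $\{0,1\}$. The goal thus reduces to the identity $\chi(b \wedge c) = b \times \chi(c)$ for Booleans $b,c$, which follows from the multiplicativity of $\chi$ on Booleans together with $\chi(b) = b$ for $b \in \{0,1\}$; equivalently, a two-line case analysis on whether $b = 0$ or $b = 1$ settles it.

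The only subtlety, and it is a minor one, is the bookkeeping around the implicit coercion between $\{0,1\}$ and $\mathbb{N}$: on the left the guard is combined using $\wedge$ \emph{inside} a Boolean before $\chi$ is applied, whereas on the right it appears as a multiplicative factor \emph{over} $\mathbb{N}$. I expect no genuine difficulty beyond making this coercion explicit, since the governing arithmetic fact ($\chi$ distributes over conjunction as multiplication) is precisely the kind of Boolean-to-natural identity already used freely in the earlier semantic proofs.
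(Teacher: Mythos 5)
Your proposal is correct and follows essentially the same route as the paper's proof: unfold the denotational semantics of both sides pointwise at $(\vect{u},v)$ and reduce the goal to the fact that $\chi$ turns conjunction into multiplication, i.e.\ $\chi(b \wedge c) = \chi(b) \times \chi(c)$ with $\chi(b)=b$ on $\{0,1\}$. The paper presents this as a single chain of equalities from left to right (inserting $\chi$ around the guard explicitly), while you meet in the middle from both sides, but the unfoldings and the key identity are identical.
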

\begin{proof}
  \begin{eqnarray*}
  &&\sem{\promote(\setgraph{\vect{x \leftarrow N}}{M})}\rho (\vect{u},v) 
\\
&=&
\chi(\sem{\setgraph{\vect{x \leftarrow N}}{M}}\rho (\vect{u},v))
\\
&=&
\chi(\sem{\vect{N}}\rho \vect{u} \wedge \sem{M}\rho[ \vect{x\mapsto u}] v)
\\
&=&
\chi(\sem{\vect{N}}\rho \vect{u}) \times \chi(\sem{M}\rho[ \vect{x\mapsto u}] v)
\\
&=&
\chi(\sem{\vect{N}}\rho \vect{u}) \times \sem{\promote(M)}\rho[
    \vect{x\mapsto u}] v
\\
&=&
 \sem{\baggraph{\vect{x \leftarrow
     N}}{\promote(M)}}\rho v
\end{eqnarray*}
\qed\end{proof}
\begin{corollary}\label{cor:graph-iota}
  $\promote(\setgraph{x \leftarrow N}{M})\gapp O \equiv \promote(M[O/x])
  ~\bagwhere~ O \in N$
\end{corollary}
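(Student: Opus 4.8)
The plan is to obtain the corollary by composing the two immediately preceding results, so that no fresh semantic unfolding is needed. First I would appeal to the preceding lemma, which shows that promotion commutes with set-graph construction, giving $\promote(\setgraph{x \leftarrow N}{M}) \equiv \baggraph{x \leftarrow N}{\promote(M)}$ (the single-generator instance of that lemma). Because graph application preserves semantic equivalence in its graph argument, I may rewrite the left-hand side of the corollary as $\baggraph{x \leftarrow N}{\promote(M)} \gapp O$.

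Next I would apply Lemma~\ref{lem:graph-eta}(1), the beta-like law for bag graphs, instantiated with $\promote(M)$ in the body position. This yields $\baggraph{x \leftarrow N}{\promote(M)} \gapp O \equiv (\promote M)[O/x] ~\bagwhere~ O \in N$. The only remaining point is that substitution descends through the promotion former, i.e. $(\promote M)[O/x] = \promote(M[O/x])$ as syntactic terms; this is immediate since $\promote$ binds no variables, so the substitution simply propagates into its argument. Chaining the three equivalences then gives precisely the claimed identity.

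I do not expect any genuine obstacle: the corollary is a straightforward composition, and all the real content lives in the preceding lemma, whose proof unfolds the $K$-relation semantics and relies on $\chi(a \wedge b) = \chi(a) \times \chi(b)$ together with the reading of $\promote$ as $\chi(\cdot)$. If a self-contained derivation were preferred, one could instead unfold the semantic clauses for graph application, set-graph construction, and promotion and compute $\sem{\promote(\setgraph{x \leftarrow N}{M}) \gapp O}\rho\,v$ directly; but routing through the two existing lemmas keeps the argument short and transparent.
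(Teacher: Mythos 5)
Your proposal is correct and matches the paper's intent exactly: the paper states this as an unproved corollary placed immediately after the lemma $\promote(\setgraph{\vect{x \leftarrow N}}{M}) \equiv \baggraph{\vect{x \leftarrow N}}{\promote(M)}$, and the intended derivation is precisely your composition of that lemma with Lemma~\ref{lem:graph-eta}(1), using congruence of $\gapp$ and the fact that substitution commutes with $\promote$.
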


\begin{lemma}\label{lem:graph-delta}
  $\distinct(\baggraph{\vect{x \leftarrow N}}{M}) \equiv \setgraph{\vect{x \leftarrow N}}{\distinct(M)}$
\end{lemma}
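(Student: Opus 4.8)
The plan is to prove the equivalence pointwise on the denotational semantics, dualizing the proof of the preceding promotion lemma almost mechanically. That is, I would fix an environment $\rho$ and a point $(\vect{u},v)$ of the flattened graph type and show that $\sem{\distinct(\baggraph{\vect{x \leftarrow N}}{M})}\rho(\vect{u},v)$ and $\sem{\setgraph{\vect{x \leftarrow N}}{\distinct(M)}}\rho(\vect{u},v)$ denote the same value, reading off the remaining equality from the definitions of the two graph abstractions.

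First I would unfold the outer $\distinct$ using $\sem{\distinct(P)}\rho = \lambda w.\,\zeta(\sem{P}\rho\, w)$, reducing the left-hand side to $\zeta$ applied to the bag-graph multiplicity $\left(\bigwedge_i \sem{N_i}\rho[\ldots]\,u_i\right) \times \sem{M}\rho[\vect{x\mapsto u}]\,v$, where $[\ldots]$ abbreviates the prefix environment $[x_1\mapsto u_1,\ldots,x_{i-1}\mapsto u_{i-1}]$. The crux is then the algebraic fact that $\zeta$ turns multiplication into conjunction, i.e. $\zeta(a \times b) = \zeta(a) \wedge \zeta(b)$, since $a \times b > 0$ iff $a > 0$ and $b > 0$; this is the exact dual of the identity $\chi(a \wedge b) = \chi(a) \times \chi(b)$ that drove the promotion lemma. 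Applying it splits the expression into $\zeta\!\left(\bigwedge_i \sem{N_i}\rho[\ldots]\,u_i\right) \wedge \zeta(\sem{M}\rho[\vect{x\mapsto u}]\,v)$.

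Next I would simplify each factor. The domain-membership factor is already Boolean, so after the implicit coercion into $\mathbb{N}$ it is fixed by $\zeta\circ\chi = \mathrm{id}$ (the identity $\zeta(\chi(n)) = n$ recorded with the semantics), leaving $\bigwedge_i \sem{N_i}\rho[\ldots]\,u_i$ unchanged; the multiplicity factor $\zeta(\sem{M}\rho[\vect{x\mapsto u}]\,v)$ is by definition $\sem{\distinct(M)}\rho[\vect{x\mapsto u}]\,v$. Recombining the two factors with $\wedge$ is precisely the definition of the set-graph semantics $\sem{\setgraph{\vect{x \leftarrow N}}{\distinct(M)}}\rho(\vect{u},v)$, closing the chain of equalities.

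The only subtlety, and the step I would be most careful with, is the bookkeeping of the coercions $\chi$ and $\zeta$ between $\{0,1\}$ and $\mathbb{N}$: the domain condition $\bigwedge_i \sem{N_i}$ lives in $\{0,1\}$ while $\sem{M}$ lives in $\mathbb{N}$, so the product in the bag-graph semantics silently coerces the Boolean factor, and one must check that $\zeta$ distributes correctly across this mixed product and that $\zeta(\chi(\cdot))$ collapses on the Boolean side. None of this is hard, but it is the place where a sloppy treatment would slip, just as in the dual promotion argument.
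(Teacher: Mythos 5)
Your proposal is correct and follows essentially the same route as the paper's own proof: a pointwise semantic calculation that unfolds $\distinct$ as $\zeta$, splits the bag-graph multiplicity via $\zeta(a \times b) = \zeta(a) \wedge \zeta(b)$, collapses the coerced domain condition with $\zeta(\chi(n)) = n$, and recognizes the result as the set-graph semantics of $\distinct(M)$. Your explicit attention to the $\chi$/$\zeta$ coercion bookkeeping matches exactly what the paper's calculation does (its appendix version of the bag-graph semantics inserts the $\chi$ explicitly), so there is nothing to add.
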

\begin{proof}
  \begin{eqnarray*}
  &&\sem{\distinct(\baggraph{\vect{x \leftarrow N}}{M})}\rho (\vect{u},v) 
\\
&=&
\zeta(\sem{\baggraph{\vect{x \leftarrow N}}{M}}\rho (\vect{u},v))
\\
&=&
\zeta(\chi(\sem{\vect{N}}\rho) \vect{u} \times \sem{M}\rho[ \vect{x\mapsto u}] v)
\\
&=&
\zeta(\chi(\sem{\vect{N}}\rho \vect{u})) \wedge \zeta(\sem{M}\rho[ \vect{x\mapsto u}] v)
\\
&=&
\sem{\vect{N}}\rho \vect{u} \wedge \sem{\distinct(M)}\rho[
    \vect{x\mapsto u}] v
\\
&=&
 \sem{\baggraph{\vect{x \leftarrow
     N}}{\distinct(M)}}\rho v
\end{eqnarray*}
\qed\end{proof}

\begin{lemma}\label{lem:graph-union}
  $\graph{\vect{x \leftarrow N}}{M_1} \cup\graph{\vect{x \leftarrow N}}{M_2}
  \equiv \graph{\vect{x\leftarrow N}}{M_1 \cup M_2}$
\end{lemma}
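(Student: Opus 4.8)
The plan is to prove the equation denotationally, by unfolding the $K$-relation semantics on both sides and reducing the result to distributivity of conjunction over disjunction in the Boolean algebra $\{0,1\}$. Since both sides are set-valued graphs of type $\vect{\sigma} \strictfun \setlit{\tau}$, it suffices to show that for every environment $\rho$ and every pair $(\vect{u},v)$ the two interpretations agree; this follows exactly the style of the preceding lemmas on commutation of graph construction with $\promote$ and $\distinct$.

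First I would compute the left-hand side. Writing $\rho' = \rho[\vect{x \mapsto u}]$ and abbreviating the domain guard $\Pi = \bigwedge_i \sem{N_i}\rho[x_1 \mapsto u_1,\ldots,x_{i-1}\mapsto u_{i-1}]\,u_i$, the semantics of set union gives $\sem{\graph{\vect{x \leftarrow N}}{M_1} \cup \graph{\vect{x \leftarrow N}}{M_2}}\rho(\vect{u},v) = \sem{\graph{\vect{x \leftarrow N}}{M_1}}\rho(\vect{u},v) \vee \sem{\graph{\vect{x \leftarrow N}}{M_2}}\rho(\vect{u},v)$, and unfolding each graph abstraction according to its semantics yields $(\Pi \land \sem{M_1}\rho'\,v) \vee (\Pi \land \sem{M_2}\rho'\,v)$.

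The key step is then to factor out the common guard $\Pi$ using distributivity of $\land$ over $\vee$ in $\{0,1\}$, obtaining $\Pi \land (\sem{M_1}\rho'\,v \vee \sem{M_2}\rho'\,v)$. Applying the semantics of set union in the reverse direction, the parenthesised disjunction is precisely $\sem{M_1 \cup M_2}\rho'\,v$, so the whole expression equals $\Pi \land \sem{M_1 \cup M_2}\rho'\,v$, which is exactly the unfolding of the right-hand side $\sem{\graph{\vect{x\leftarrow N}}{M_1 \cup M_2}}\rho(\vect{u},v)$.

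There is no genuine obstacle here: once the definitions are unfolded the argument is a one-line application of distributivity. The only points requiring minor care are that the guard $\Pi$ is literally identical in both summands — which holds because the two graphs share the same generator sequence $\vect{x \leftarrow N}$ — and that each factor $\sem{N_i}$ depends only on $\vect{u}$ and $\rho$, not on the head $M_1$ or $M_2$, so that the factoring is sound. The analogous statement for bag-valued graphs and $\uplus$ is handled by the symmetric computation, replacing $\land$ and $\vee$ by $\times$ and $+$ and set union by bag union throughout.
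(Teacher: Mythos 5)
Your proposal is correct and follows essentially the same route as the paper's own proof: unfold the $K$-relation semantics of both sides at a point $(\vect{u},v)$, factor the common domain guard out of the disjunction by distributivity of $\land$ over $\lor$, and fold the result back into the semantics of $\graph{\vect{x\leftarrow N}}{M_1 \cup M_2}$. Your explicit guard $\Pi$ is in fact slightly more careful than the paper's abbreviated $\vect{\sem{N}}\rho\,\vect{u}$ about the dependent structure of the generators, but the argument is the same.
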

\begin{proof}
  \begin{eqnarray*}
&&\sem{\graph{\vect{x \leftarrow N}}{M_1} \cup\graph{\vect{x \leftarrow N}}{M_2}}\rho (\vect{u},v)\\
&=&
\sem{\graph{\vect{x \leftarrow N}}{M_1}}\rho(\vect{u},v) \cup\sem{\graph{\vect{x \leftarrow
    N}}{M_2}}\rho(\vect{u},v)
\\
&=&
\vect{\sem{N}}\rho \vect{u} \wedge \sem{M_1}\rho[\vect{x\mapsto u}] v \vee
\vect{\sem{N}}\rho \vect{u} \wedge \sem{M_2}\rho[\vect{x\mapsto u}] v
\\
&=&
\vect{\sem{N}}\rho \vect{u})  \wedge (\sem{M_1}\rho[\vect{x\mapsto u}]  v \vee \sem{M_2}\rho[\vect{x\mapsto u}] v)
\\
&=&
\vect{\sem{N}}\rho \vect{u})  \wedge (\sem{M_1\cup M_2}\rho[\vect{x\mapsto u}]  v)
\\
&=& \sem{ \graph{\vect{x\leftarrow N}}{M_1 \cup M_2}}\rho (\vect{u},v)
\end{eqnarray*}
 \qed\end{proof}

\begin{lemma}
  $\graph{\vect{x \leftarrow N}}{M_1} \uplus \graph{\vect{x \leftarrow N}}{M_2}
  \equiv \graph{\vect{x\leftarrow N}}{M_1  \uplus M_2}$
\end{lemma}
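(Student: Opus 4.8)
The plan is to mirror the proof of Lemma~\ref{lem:graph-union} (the set-union case) step for step, replacing set semantics by bag semantics throughout. Since both operands return bags---as signalled by the use of $\uplus$---the relevant construct is $\baggraph{-}{-}$, whose interpretation assigns to a pair $(\vect{u},v)$ the product $d \times \sem{M}\rho[\vect{x \mapsto u}] v$, where $d$ is the shared Boolean domain factor $\bigwedge_i \sem{N_i}\rho[\ldots] u_i$ coerced into $\mathbb{N}$.

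First I would unfold the semantics of bag union, obtaining at $(\vect{u},v)$ the sum of the two graph interpretations. Next I would expand each graph using the bag-graph clause, producing $d \times \sem{M_1}\rho[\vect{x \mapsto u}]v + d \times \sem{M_2}\rho[\vect{x \mapsto u}]v$, where the domain factor $d$ is literally identical in both summands because the two graphs share the same generator sequence $\vect{x \leftarrow N}$. The crux is then to factor $d$ out using distributivity of multiplication over addition in $\mathbb{N}$, namely $d \times a + d \times b = d \times (a+b)$, and to recognize the remaining sum $\sem{M_1}\rho[\vect{x \mapsto u}]v + \sem{M_2}\rho[\vect{x \mapsto u}]v$ as $\sem{M_1 \uplus M_2}\rho[\vect{x \mapsto u}]v$ by the semantics of bag union. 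Folding the bag-graph definition back up yields $\sem{\baggraph{\vect{x \leftarrow N}}{M_1 \uplus M_2}}\rho(\vect{u},v)$, closing the equational chain.

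There is no genuine obstacle here: this is simply the commutative-semiring analogue of the set-union argument, with the Boolean pair $(\vee,\wedge)$ replaced by $(+,\times)$ and distributivity of $\wedge$ over $\vee$ replaced by distributivity of $\times$ over $+$. The only point worth a passing remark is the implicit coercion of the Boolean domain factor $d$ into $\mathbb{N}$; but since $d$ is shared identically by both summands, it is merely carried along as a common multiplicand and plays no active role, so the coercion causes no difficulty.
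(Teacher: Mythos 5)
Your proposal is correct and matches the paper's proof essentially step for step: unfold the bag-union and bag-graph semantics, factor out the common domain factor $\chi(\vect{\sem{N}}\rho\,\vect{u})$ by distributivity of $\times$ over $+$, and fold back. Your remark about the Boolean-to-$\mathbb{N}$ coercion of the domain factor is exactly what the paper makes explicit with $\chi$.
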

\begin{proof}
  \begin{eqnarray*}
&&\sem{\graph{\vect{x \leftarrow N}}{M_1} \uplus \graph{\vect{x \leftarrow N}}{M_2}}\rho (\vect{u},v)\\
&=&
\sem{\graph{\vect{x \leftarrow N}}{M_1}}\rho(u,v) \uplus\sem{\graph{\vect{x \leftarrow
    N}}{M_2}}\rho(\vect{u},v)
\\
&=&
\chi(\vect{\sem{N}}\rho \vect{u}) \times \sem{M_1}\rho[\vect{x\mapsto u}] v +
\chi(\vect{\sem{N}}\rho \vect{u}) \times \sem{M_2}\rho[\vect{x\mapsto u}] v
\\
&=&
\chi(\vect{\sem{N}}\rho \vect{u})  \times
    (\sem{M_1}\rho[\vect{x\mapsto u}]  v + \sem{M_2}\rho[\vect{x\mapsto u}] v)
\\
&=&
\chi(\vect{\sem{N}}\rho \vect{u})  \times (\sem{M_1 \uplus
M_2}\rho[\vect{x\mapsto u}]  v)
\\
&=& \sem{ \graph{\vect{x\leftarrow N}}{M_1 \uplus M_2}}\rho (\vect{u},v)
\end{eqnarray*}
\qed\end{proof}

\begin{lemma}\label{lem:graph-diff}
  $\graph{\vect{x \leftarrow N}}{M_1} -\graph{\vect{x \leftarrow N}}{M_2}
  \equiv \graph{\vect{x\leftarrow N}}{M_1 - M_2}$
\end{lemma}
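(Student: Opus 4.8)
The plan is to follow exactly the pattern established by the preceding lemmas on graph union and multiset union (Lemma~\ref{lem:graph-union} and its multiset analogue), unfolding the denotational semantics on both sides and reducing the desired equality to an arithmetic identity about the underlying multiplicity functions. Both graphs in the statement have multiset output type $\vect{\sigma} \strictfun \msetlit{\tau}$, so graph difference is interpreted pointwise by truncated subtraction $\monus$ on the multiplicities of each input--output pair $(\vect{u}, v)$.

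First I would fix an arbitrary environment $\rho$ and pair $(\vect{u}, v)$ and compute the left-hand side. Unfolding the semantics of bag difference yields $\sem{\graph{\vect{x \leftarrow N}}{M_1}}\rho(\vect{u},v) \monus \sem{\graph{\vect{x \leftarrow N}}{M_2}}\rho(\vect{u},v)$; applying the semantics of bag graph abstraction to each operand then rewrites this as $\chi(\vect{\sem{N}}\rho\,\vect{u}) \times \sem{M_1}\rho[\vect{x \mapsto u}]\,v \monus \chi(\vect{\sem{N}}\rho\,\vect{u}) \times \sem{M_2}\rho[\vect{x \mapsto u}]\,v$, where I write $\vect{\sem{N}}\rho\,\vect{u}$ for the conjunction $\bigwedge_i \sem{N_i}\rho[\vect{x \mapsto u}]\,u_i$ of domain-membership conditions. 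The aim is then to pull the common factor $\chi(\vect{\sem{N}}\rho\,\vect{u})$ out of the difference and recognize the remaining factor as $\sem{M_1 - M_2}\rho[\vect{x \mapsto u}]\,v$, which folds back into $\sem{\graph{\vect{x \leftarrow N}}{M_1 - M_2}}\rho(\vect{u},v)$ to complete the chain of equalities.

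The one step needing more care than in the union cases is this factoring-out, since the common coefficient must now distribute over truncated subtraction rather than over $+$ or $\vee$: I would appeal to the identity $c \times (a \monus b) = (c \times a) \monus (c \times b)$. Unlike ordinary distributivity over $+$, this fails for general natural numbers $c$, but holds whenever $c \in \{0,1\}$, which is exactly the situation here because the coefficient is $\chi$ applied to a Boolean value. A one-line case split settles it: when $c = 0$ both sides are $0$, and when $c = 1$ both sides equal $a \monus b$. This is the only place where the argument genuinely relies on the domain-membership factor being Boolean-valued, and it is what makes the multiset-difference case slightly more delicate than the purely additive union lemmas; every remaining equality is a routine unfolding of the semantics.
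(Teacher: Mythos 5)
Your proposal is correct and follows exactly the paper's own proof: unfold the semantics of bag difference and of the two graph abstractions, factor the common coefficient $\chi(\vect{\sem{N}}\rho\,\vect{u})$ out of the (truncated) subtraction, and fold the result back into $\sem{\graph{\vect{x\leftarrow N}}{M_1 - M_2}}\rho(\vect{u},v)$; the paper performs the factoring step silently, whereas you justify it by a case split on the Boolean-valued coefficient. The only inaccuracy is your side remark that $c \times (a \monus b) = (c \times a) \monus (c \times b)$ ``fails for general natural numbers $c$'': in fact it holds for all $c,a,b \in \mathbb{N}$ (if $a \ge b$ both sides equal $c(a-b)$, and if $a < b$ both sides are $0$), so the restriction to $c \in \{0,1\}$, while sufficient for the case at hand, is not actually needed.
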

\begin{proof}
  \begin{eqnarray*}
&&\sem{\graph{\vect{x \leftarrow N}}{M_1} -\graph{\vect{x \leftarrow N}}{M_2}}\rho (\vect{u},v)\\
&=&
\sem{\graph{\vect{x \leftarrow N}}{M_1}}\rho(\vect{u},v) -\sem{\graph{\vect{x \leftarrow
    N}}{M_2}}\rho(\vect{u},v)
\\
&=&
\chi(\vect{\sem{N}}\rho \vect{u}) \times \sem{M_1}\rho[\vect{x\mapsto u}] v -
\chi(\vect{\sem{N}}\rho \vect{u}) \times \sem{M_2}\rho[\vect{x\mapsto u}] v
\\
&=&
\chi(\vect{\sem{N}}\rho \vect{u})  \times (\sem{M_1}\rho[\vect{x\mapsto u}]  v - \sem{M_2}\rho[\vect{x\mapsto u}] v)
\\
&=&
\chi(\vect{\sem{N}}\rho \vect{u})  \times (\sem{M_1-M_2}\rho[\vect{x\mapsto u}]  v)
\\
&=& \sem{ \graph{\vect{x\leftarrow N}}{M_1 - M_2}}\rho (\vect{u},v)
\end{eqnarray*}
 \qed\end{proof}

\begin{corollary}\label{cor:graph-diff}
  $(\graph{x \leftarrow N}{M_1} -\graph{x \leftarrow N}{M_2})\gapp O
  \equiv (M_1[O/x] - M_2[O/x])~ \bagwhere~ O \in N$
\end{corollary}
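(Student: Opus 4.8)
The plan is to obtain this corollary by directly chaining two of the lemmas already established, rather than by unfolding the denotational semantics from scratch. First I would instantiate Lemma~\ref{lem:graph-diff} at the single-generator case (taking the vector $\vect{x \leftarrow N}$ to be the singleton $x \leftarrow N$), which gives $\graph{x \leftarrow N}{M_1} - \graph{x \leftarrow N}{M_2} \equiv \graph{x \leftarrow N}{M_1 - M_2}$. Since semantic equivalence is a congruence — in particular it is preserved under graph application to a fixed argument $O$ — I can rewrite the left-hand side of the goal as $(\graph{x \leftarrow N}{M_1 - M_2})\gapp O$.

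Next I would apply the multiset graph eta-law, Lemma~\ref{lem:graph-eta}(1), with head $M_1 - M_2$ in place of $M$, yielding $(\graph{x \leftarrow N}{M_1 - M_2})\gapp O \equiv (M_1 - M_2)[O/x] ~\bagwhere~ O \in N$. The only remaining gap is the purely syntactic observation that capture-avoiding substitution commutes with the bag-difference constructor, i.e. $(M_1 - M_2)[O/x] = M_1[O/x] - M_2[O/x]$; this is immediate from the definition of substitution and needs no semantic reasoning. Combining the three steps by transitivity of $\equiv$ then gives exactly $(\graph{x \leftarrow N}{M_1} - \graph{x \leftarrow N}{M_2})\gapp O \equiv (M_1[O/x] - M_2[O/x]) ~\bagwhere~ O \in N$, as required.

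There is essentially no hard step here: the corollary is a convenience packaging of Lemmas~\ref{lem:graph-diff} and~\ref{lem:graph-eta}, mirroring exactly how Corollary~\ref{cor:graph-iota} is obtained from the corresponding promotion lemma together with the eta-law. If one preferred to avoid even the appeal to congruence, the identity could instead be established from first principles by expanding $\sem{-}$ on both sides: the left-hand side unfolds, via the semantics of $\gapp$ and of bag difference, into $\chi(\sem{N}\rho(\sem{O}\rho)) \times (\sem{M_1}\rho[x \mapsto \sem{O}\rho]v - \sem{M_2}\rho[x \mapsto \sem{O}\rho]v)$, which is visibly the denotation of the right-hand side. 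The two-lemma route is shorter, however, and keeps the corollary in line with the surrounding development.
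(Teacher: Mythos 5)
Your proposal is correct and matches the paper's intended derivation: the statement appears there as an unproved corollary immediately following Lemma~\ref{lem:graph-diff}, with exactly the chaining you describe (instantiate Lemma~\ref{lem:graph-diff} at a single generator, apply $\gapp\, O$ by congruence of semantic equivalence, finish with the bag eta-law of Lemma~\ref{lem:graph-eta} and the syntactic fact that substitution distributes over bag difference), mirroring how Corollary~\ref{cor:graph-iota} is obtained from the promotion lemma. Your fallback direct unfolding of the semantics is also sound, but the two-lemma route is precisely what the paper intends.
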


We can now use these equivalences to show the correctness of the
delateralization rules for promotion and difference:
\begin{theorem}
\[  \biguplus\msetlit{M \mid x \leftarrow N, y \leftarrow \promote(P)} \equiv
\biguplus\msetlit{M \mid x \leftarrow N, y \leftarrow
  \promote(\graph{x \leftarrow N}{P}) \gapp x}\]
\end{theorem}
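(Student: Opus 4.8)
The plan is to rewrite the right-hand side into the left-hand side using the graph equivalences already established. First I would apply Corollary~\ref{cor:graph-iota} with $M := P$ and $O := x$ to the generator source $\promote(\graph{x \leftarrow N}{P})\gapp x$, obtaining the equivalent $\promote(P)~\bagwhere~x \in N$ (using $P[x/x] = P$). This reduces the right-hand side to $\biguplus\msetlit{M \mid x \leftarrow N, y \leftarrow \promote(P)~\bagwhere~x \in N}$, so it suffices to show that the guard $x \in N$ is redundant once $x$ ranges over $N$.

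To discharge that redundancy I would use the comprehension equivalences from the normalization system (Figure~\ref{fig:normNRClsb}) together with Lemma~\ref{lem:graph-elem}. Since $y \notin \FV(x \in N)$, the guard can be migrated off the $y$-generator onto the comprehension head, giving $\biguplus\msetlit{M~\bagwhere~x \in N \mid x \leftarrow N, y \leftarrow \promote(P)}$. Re-associating the comprehension and pulling the guard---which does not mention $y$---out of the inner $y$-comprehension yields $\biguplus\msetlit{(\biguplus\msetlit{M \mid y \leftarrow \promote(P)})~\bagwhere~x \in N \mid x \leftarrow N}$. Rewriting $x \in N$ as $x \in \distinct(N)$ (membership is insensitive to multiplicity) puts this in exactly the shape of the left-hand side of Lemma~\ref{lem:graph-elem}; applying that lemma deletes the guard, and re-associating the comprehension back recovers the left-hand side.

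Alternatively, the statement admits a direct pointwise proof on the $K$-relation semantics. Unfolding both bag comprehensions gives an outer sum $\sum_{u}\sem{N}\rho\,u \times (\cdots)$; on the right I would expand the generator source using the graph-application clause $\sem{M\gapp(\vect{N})}\rho\,v = \sem{M}\rho\,(\vect{\sem{N}\rho},v)$ together with the set-graph and promotion clauses, producing a factor $\chi(\sem{N}\rho\,u \wedge \sem{P}\rho[x \mapsto u]\,w)$. The spurious factor $\chi(\sem{N}\rho\,u)$ is then absorbed by the multiplicity $\sem{N}\rho\,u$ already contributed by the outer generator, via the identity $\chi(e > 0)\times e = e$ from the arithmetic lemma at the start of this section.

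The anticipated difficulty is bookkeeping rather than conceptual: each where-clause migration and comprehension re-association carries a non-occurrence side condition on $y$ that must be checked, and one must justify the passage between $x \in N$ and $x \in \distinct(N)$ so that Lemma~\ref{lem:graph-elem} applies verbatim---in particular reconciling the set-versus-bag reading of $N$ in the graph domain. The direct semantic route sidesteps all of these syntactic side conditions, so I expect it to be the cleaner option to write out in full.
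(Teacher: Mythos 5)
You give two routes, and both are sound. Your first route is precisely the paper's proof run backwards: the paper starts from the left-hand side, nests the comprehension, inserts the guard $x \in N$ via Lemma~\ref{lem:graph-elem}, pushes it onto the $\promote(P)$ generator, un-nests, and closes with Corollary~\ref{cor:graph-iota}; you start from the right-hand side with Corollary~\ref{cor:graph-iota} and then undo exactly those steps, so this is the same argument with the same two key lemmas. Your preferred second route---unfolding the $K$-relation semantics directly---is genuinely different from what the paper does for this theorem: it in effect inlines the proofs of Lemma~\ref{lem:graph-elem} and Corollary~\ref{cor:graph-iota} (both themselves established by exactly this kind of pointwise computation ending in an absorption step), trading the modularity of the paper's equational chain, whose lemmas are reused in the other two delateralization theorems, for a self-contained calculation with no syntactic side conditions. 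Two small points of care: in the semantic route the factor should strictly be $\chi\bigl(\zeta(\sem{N}\rho\,u) \wedge \sem{P}\rho[x \mapsto u]\,w\bigr)$, since $\sem{N}\rho\,u$ is a natural number rather than a Boolean, but your appeal to $\chi(e > 0) \times e = e$ shows the intent is right; and the $N$-versus-$\distinct(N)$ discrepancy you flag is real and lies in the paper itself---Lemma~\ref{lem:graph-elem}'s bag equation guards with $x \in \distinct(N)$ and the graph-formation rule demands a set-typed domain, yet both the theorem statement and the paper's own chain write $N$---so your explicit observation that membership is insensitive to multiplicity (it is the test $\zeta$ applied to the multiplicity) is exactly the identification needed to make either presentation rigorous.
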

\begin{proof}
We use Cor.~\ref{cor:graph-iota} and Lemma~\ref{lem:graph-elem} and
standard equivalences for monadic comprehensions:
\begin{eqnarray*}
&&\biguplus\msetlit{M \mid x \leftarrow N, y \leftarrow \promote(P)}
\\
&\equiv &
\biguplus\msetlit{\biguplus\msetlit{M \mid y \leftarrow
  \promote(P) }\mid x \leftarrow N}\\
&\equiv &
\biguplus\msetlit{\biguplus\msetlit{M \mid y \leftarrow
  \promote(P) }~\bagwhere ~x \in N\mid x \leftarrow N}\\
&\equiv &
\biguplus\msetlit{\biguplus\msetlit{M \mid y \leftarrow
  \promote(P) ~\bagwhere ~x \in N}\mid x \leftarrow N}\\
&\equiv &
\biguplus\msetlit{M \mid x \leftarrow N, y \leftarrow
  \promote(P) ~\bagwhere ~x \in N}\\
&\equiv &
\biguplus\msetlit{M \mid x \leftarrow N, y \leftarrow
  \promote(\graph{x \leftarrow N}{P}) \gapp x}
\end{eqnarray*}
\qed\end{proof}

\begin{theorem}
  \[
\biguplus\msetlit{M \mid x \leftarrow N, y \leftarrow P_1-P_2} \equiv
\biguplus\msetlit{M \mid x \leftarrow N, y \leftarrow
\graph{x \leftarrow \distinct(N)}{P_1} - \graph{x \leftarrow \distinct(N)}{P_2}}
\]
\end{theorem}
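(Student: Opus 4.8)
The plan is to prove this equivalence in exactly the same style as the preceding promotion theorem, reading the right-hand side with its intended graph application---that is, the generator on the right is understood to be $y \leftarrow (\graph{x \leftarrow \distinct(N)}{P_1} - \graph{x \leftarrow \distinct(N)}{P_2}) \gapp x$, as in the delateralization rule of Section~\ref{sec:delat}. All of the semantic content has already been discharged in Lemmas~\ref{lem:graph-elem} and~\ref{lem:graph-diff} and Corollary~\ref{cor:graph-diff}; what remains is to assemble these with the standard monadic-comprehension equivalences. Accordingly I would present the argument as a chain of $\equiv$-steps running from the left-hand side to the right-hand side, rather than unfolding the denotational semantics from scratch.

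First I would uncurry the comprehension, rewriting the left-hand side as $\biguplus\msetlit{\biguplus\msetlit{M \mid y \leftarrow P_1 - P_2} \mid x \leftarrow N}$, so as to isolate the inner difference generator. Next I would insert the redundant membership guard using the first equation of Lemma~\ref{lem:graph-elem}, obtaining $\biguplus\msetlit{\biguplus\msetlit{M \mid y \leftarrow P_1 - P_2}~\bagwhere~x \in \distinct(N) \mid x \leftarrow N}$; this is sound precisely because $x$ was introduced by the generator $x \leftarrow N$. I would then slide the guard from the whole comprehension onto its generator using the standard $\bagwhere$-commutation equivalences (both the rule pushing a where onto a generator and the one pushing it onto the head are available, and since $y \notin \FV(x \in \distinct(N))$ both directions are licensed), yielding the inner generator $y \leftarrow (P_1 - P_2)~\bagwhere~x \in \distinct(N)$. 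Re-nesting into a single comprehension gives $\biguplus\msetlit{M \mid x \leftarrow N, y \leftarrow (P_1 - P_2)~\bagwhere~x \in \distinct(N)}$.

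The final step folds the guarded difference back into a graph application. I would instantiate Corollary~\ref{cor:graph-diff} with $\distinct(N)$ in place of $N$, taking $M_1 := P_1$, $M_2 := P_2$ and argument $O := x$; since $P_i[x/x] = P_i$, this yields $(P_1 - P_2)~\bagwhere~x \in \distinct(N) \equiv (\graph{x \leftarrow \distinct(N)}{P_1} - \graph{x \leftarrow \distinct(N)}{P_2}) \gapp x$, and substituting this generator produces exactly the right-hand side.

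I do not expect a serious obstacle: the theorem is essentially a bookkeeping combination of results already proved. The two points that require care are (i) the use of $\distinct(N)$ rather than $N$ as the graph domain, which is exactly what makes the guard produced by Lemma~\ref{lem:graph-elem} line up with the hypothesis $O \in \distinct(N)$ needed to apply Corollary~\ref{cor:graph-diff}, and which reflects the finite-map reading in which the graph records one entry per distinct element of $N$; and (ii) verifying the non-capture side conditions when commuting the guard past the generator for $y$, which hold because the guard mentions only $x$ and $N$. If a self-contained argument were preferred, the same identity could instead be checked directly in the denotational semantics by unfolding $\sem{\biguplus\msetlit{\cdot}}$ and $\sem{\graph{\cdot}{\cdot} \gapp \cdot}$ and appealing to Lemma~\ref{lem:graph-diff}, but routing through Corollary~\ref{cor:graph-diff} keeps the proof cleanly parallel to the promotion case.
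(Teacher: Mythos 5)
Your proposal is correct and takes essentially the same route as the paper's own proof: uncurry the comprehension, insert the redundant guard $x \in \distinct(N)$ via Lemma~\ref{lem:graph-elem}, push it onto the inner generator, and fold the guarded difference into a graph application via Corollary~\ref{cor:graph-diff}. Your reading of the right-hand side as carrying the application $\gapp x$ also matches the paper, whose proof (like the delateralization rule of Section~\ref{sec:delat}) concludes with $(\graph{x \leftarrow \distinct(N)}{P_1} - \graph{x \leftarrow \distinct(N)}{P_2}) \gapp x$, so the missing application in the theorem statement is a typo there rather than a gap in your argument.
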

\begin{proof}
We use Cor.~\ref{cor:graph-diff} and Lemma~\ref{lem:graph-elem} and
standard equivalences for monadic comprehensions:
\begin{eqnarray*}
&&
\biguplus\msetlit{M \mid x \leftarrow N, y \leftarrow P_1-P_2} \\
&\equiv&
\biguplus\msetlit{\biguplus\msetlit{M \mid  y \leftarrow P_1-P_2} \mid
  x \leftarrow N}\\
&\equiv&
\biguplus\msetlit{\biguplus\msetlit{M \mid  y \leftarrow P_1-P_2}
         ~\bagwhere~ x\in \distinct(N)\mid
  x \leftarrow N}\\
&\equiv&
\biguplus\msetlit{\biguplus\msetlit{M \mid  y \leftarrow (P_1-P_2) ~\bagwhere~ x\in \distinct(N)}
         \mid
  x \leftarrow N}\\
&\equiv&
\biguplus\msetlit{M \mid x \leftarrow N, y \leftarrow
(\graph{x \leftarrow \distinct{N}}{P_1} - \graph{x \leftarrow
         \distinct{N}}{P_2}) \gapp x}
\end{eqnarray*}
\qed\end{proof}

\begin{theorem}
  \[
\bigcup\setlit{M \mid x \leftarrow N, y \leftarrow \distinct(P_1-P_2)} \equiv
\bigcup\setlit{M \mid x \leftarrow N, y \leftarrow
\distinct(\graph{x \leftarrow N}{P_1} - \graph{x \leftarrow N}{P_2})}
\]
\end{theorem}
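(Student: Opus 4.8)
The plan is to mirror the structure of the proof of the preceding (multiset) delateralization theorem, but working in the set world and inserting one extra step to commute $\distinct$ past the graph constructs. The target right-hand side is the one produced by the delateralization rule, i.e.\ with a graph application $\gapp x$ applied to $\distinct(\graph{x \leftarrow N}{P_1} - \graph{x \leftarrow N}{P_2})$. Throughout, $N$ is a \emph{set} (it is the generator of a set comprehension $\bigcup$, so $N : \setlit{\sigma}$), which is exactly why the target graph is built over $N$ itself rather than $\distinct(N)$, and why the redundant guard we introduce is $x \in N$ rather than $x \in \distinct(N)$ as in the bag case.

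First I would establish a set-valued analogue of Corollary~\ref{cor:graph-diff}, namely
\[
\distinct(\graph{x \leftarrow N}{P_1} - \graph{x \leftarrow N}{P_2})\gapp O \equiv \distinct(P_1\subst{x}{O} - P_2\subst{x}{O}) ~\setwhere~ O \in N,
\]
obtained by chaining three already-proved equivalences: Lemma~\ref{lem:graph-diff} to fuse the graph difference into a single bag graph $\graph{x \leftarrow N}{P_1 - P_2}$; Lemma~\ref{lem:graph-delta} to push the deduplication inside, turning this bag graph into the set graph $\setgraph{x \leftarrow N}{\distinct(P_1 - P_2)}$; and finally the \emph{set} eta-law of Lemma~\ref{lem:graph-eta}(2) to unfold the graph application into a guarded substitution. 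This combination is the only genuinely new ingredient; everything else is bookkeeping.

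With this corollary in hand, the main derivation proceeds by the same four standard monadic-comprehension moves used in the multiset case. I would (i)~nest the comprehension, isolating the inner set comprehension over $y \leftarrow \distinct(P_1-P_2)$; (ii)~insert the redundant guard $x \in N$ around this inner comprehension using the set equation of Lemma~\ref{lem:graph-elem}, which is licensed precisely because $x$ was introduced by $x \leftarrow N$; (iii)~push this $y$-independent guard from the whole comprehension onto its generator, replacing $\distinct(P_1-P_2)$ by $\distinct(P_1-P_2)~\setwhere~x \in N$ (a standard equivalence, since $\bigcup\setlit{M \mid y \leftarrow P}~\setwhere~C$ and $\bigcup\setlit{M \mid y \leftarrow P~\setwhere~C}$ agree when $y \notin \FV(C)$, which is immediate by unfolding the semantics and factoring out $\sem{C}\rho$); and (iv)~unnest back to a single comprehension. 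At this point the generator reads $\distinct(P_1-P_2)~\setwhere~x \in N$, and applying the corollary above right-to-left with $O := x$ rewrites it to $\distinct(\graph{x \leftarrow N}{P_1} - \graph{x \leftarrow N}{P_2})\gapp x$, which is the desired right-hand side.

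I expect no real obstacle: the argument is essentially a routine transposition of the multiset proof. The one point requiring care is the commutation step used to build the corollary — ensuring that $\distinct$ is correctly propagated from a bag graph to a set graph via Lemma~\ref{lem:graph-delta}, and that the resulting application is then unfolded with the set eta-law of Lemma~\ref{lem:graph-eta}(2) rather than its bag counterpart. It is also worth double-checking the typing along the way: $\graph{x \leftarrow N}{P_1 - P_2}$ is a bag graph, its deduplication is a set graph, and the final graph application is therefore set-valued, matching the set comprehension we started from.
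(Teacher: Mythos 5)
Your proposal is correct and matches the paper's own proof essentially step for step: the paper likewise nests the comprehension, inserts the redundant guard $x \in N$ via the set equation of Lemma~\ref{lem:graph-elem}, pushes it onto the generator, and then applies the set eta-law (Lemma~\ref{lem:graph-eta}(2)), Lemma~\ref{lem:graph-delta} and Lemma~\ref{lem:graph-diff} to arrive at the generator $\distinct(\graph{x \leftarrow N}{P_1} - \graph{x \leftarrow N}{P_2})\gapp x$. The only difference is organizational: you pre-package those three graph-lemma steps as a set analogue of Corollary~\ref{cor:graph-diff} while the paper inlines them inside the nested comprehension, and you are also right that the intended right-hand side carries the application $\gapp x$, exactly as in the paper's derivation.
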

\begin{proof}
We use Lemmas~\ref{lem:graph-elem}, \ref{lem:graph-delta} and \ref{lem:graph-diff} and
standard equivalences for monadic comprehensions:
\begin{eqnarray*}
&&
\bigcup\setlit{M \mid x \leftarrow N, y \leftarrow \distinct(P_1-P_2)} \\
&\equiv&
\bigcup\setlit{\bigcup\setlit{M \mid  y \leftarrow \distinct(P_1-P_2)} \mid
  x \leftarrow N}\\
&\equiv&
\bigcup\setlit{\bigcup\setlit{M \mid  y \leftarrow \distinct(P_1-P_2)}
         ~\setwhere~ x\in N\mid
  x \leftarrow N}\\
&\equiv&
\bigcup\setlit{\bigcup\setlit{M \mid  y \leftarrow \distinct(P_1-P_2) ~\setwhere~ x\in N}
         \mid
  x \leftarrow N}\\
&\equiv&
\bigcup\setlit{\bigcup\setlit{M \mid  y \leftarrow \graph{x \leftarrow
         N}{\distinct(P_1-P_2)} \gapp x}
         \mid
  x \leftarrow N}\\
&\equiv&
\bigcup\setlit{\bigcup\setlit{M \mid  y \leftarrow \distinct(\graph{x \leftarrow
         N}{P_1-P_2}) \gapp x}
         \mid
  x \leftarrow N}\\
&\equiv&
\bigcup\setlit{M \mid x \leftarrow N, y \leftarrow
\distinct(\graph{x \leftarrow N}{P_1} - \graph{x \leftarrow
         N}{P_2}) \gapp x}
\end{eqnarray*}
\qed\end{proof}

Now, to prove that delateralization eventually terminates, we consider
a metric on query expressions defined as follows: given an expression
in normal form, for each subexpression of the form $\promote(N)$ or
$M - N$, add up the number of free variables occurring in $M,N$.
\begin{eqnarray*}
\|\emptymset\| &=& 0\\
\|\setlit{M}\| = \msetlit{M}\} &=& \|M\|\\
\|M\cup N\| = \|M\uplus N\| &=& \|M\|+\|N\|\\
\|M - N \| &=& \|M\| + \|N\| + |FV(M,N)|\\
\|\iota(M) \| &=& \|M\|+ |FV(M,N)|\\
\|\delta(M) \| &=& \|M\|\\
\|\bigcup\setlit{M \mid x \leftarrow N}\| = \|\biguplus\msetlit{M \mid x \leftarrow N} &=& \|M\| + \|N\|\\
\|M \ \setwhere \ N\| = \|M \ \bagwhere \ N\| &=& \|M \| + \|N\| \\
\|M\| &=& 0 \qquad \text{otherwise}
\end{eqnarray*}

If the metric is zero,
then the query is fully delateralized.  Combining the basic
delateralization steps above with commutativity, any expression with
nonzero metric can be rewritten so as to decrease the metric (though
possibly increasing the query size).  We can also undo the effects of
commutativity steps to restore the original order of generators, to
preserve the query structure as much as possible for readability.

\begin{theorem}
  Given $M$ with $\|M\| > 0$, there exists an equivalent $M'$ with
  $\|M'\| < \|M\|$ that can be obtained by applying commutativity and
  basic rewrites.  Hence, there exists an equivalent
  fully-delateralized $M''$ with $\|M''\| = 0$.
\end{theorem}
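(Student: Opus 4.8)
The plan is to prove the per-step claim and then read off termination for free. Since $\|\cdot\|$ takes values in $\bbN$, the second sentence follows from the first by well-founded induction: starting from $M$ we repeatedly produce strictly smaller equivalent terms until the metric hits $0$, which by the definition of $\|\cdot\|$ means that no $\promote$ or bag-difference subexpression mentions a laterally-bound variable, i.e.\ the term is fully delateralized. Equivalence along the chain is inherited from the correctness of the three basic delateralization rules (the theorems proved just above), together with soundness of normalization and of the \NRCg-to-\NRClsb translation of Section~\ref{sec:graph}. So the whole content lies in the single strict decrease.

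For that step, suppose $\|M\| > 0$, so some comprehension in $M$ has a generator whose body is a subexpression $S$ of the form $\promote(N')$, $N_1' - N_2'$, or $\distinct(N_1' - N_2')$ that mentions a variable bound by an earlier generator. I would pick $S$ to be the \emph{leftmost} such generator in its comprehension, and pick $x$ to be the lateral variable of $S$ bound by the \emph{nearest} preceding generator $x \gets N$. Leftmost-ness forces $N$ to be closed: in normal form $N$ is a table, a $\promote$, or a bag difference, so if $N$ mentioned an even-earlier variable it would itself be a lateral $\promote$/difference generator to the left of $S$, contradicting the choice. Nearest-ness means every generator strictly between $x \gets N$ and $S$ is independent of $S$, so by the Commutativity lemma I can slide $S$ leftwards until it sits immediately after $x \gets N$, exposing the left-hand side of the matching basic rule. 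Applying that rule replaces $S$ by $\graph{x \gets \distinct N}{\,\cdot\,}\gapp x$ (one graph for $\promote$, a difference of two graphs for $N_1'-N_2'$, a deduplicated difference for the set case).

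To see $\|\cdot\|$ drop, I would translate the resulting \NRCg term back to \NRClsb and renormalize. The decisive accounting is: (i) the graph domain generator is $\distinct N$ (resp.\ $\promote(\distinct N)$), which is closed because $N$ is, hence contributes nothing; (ii) the graph abstraction binds $x$, and---translating the $\promote$/difference \emph{of the graph} as that operation applied to the translated graph, rather than commuting it inward---the offending subexpression $S$ becomes an operation on a query from which the occurrence of $x$ has genuinely disappeared; (iii) the application $\gapp x$ unfolds to an equality guard $x = x'$, and guards are not generators, so they introduce no new lateral occurrences; (iv) renormalization only unnests the lifted, closed-domain query into its context, adding closed generators and equality guards. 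Summing over all $\promote$/difference subexpressions, the contribution of $S$ falls by at least one (the handled occurrence of $x$) while every other contribution is unchanged up to renaming, so $\|M'\| < \|M\|$.

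The main obstacle is point (iv): verifying that the round trip through \NRClsb and renormalization does not silently re-create the dependence just removed, nor inflate the count inside the lifted query. The concrete trap is that pushing $\promote$ or $\distinct$ through a graph (via the graph--promotion and graph--deduplication equivalences) would drag $x$ back into an inner $\promote$; the argument must pin down the translation so that the lifted query stays closed in its domain and any \emph{residual} laterality is confined strictly inside it, to be eliminated by later steps. A secondary, more routine obstacle is justifying the commutativity reordering uniformly, which the leftmost/nearest choice is designed to guarantee. With these in hand the induction closes and delivers the fully-delateralized $M''$ with $\|M''\| = 0$.
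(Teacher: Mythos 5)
Your proposal follows the paper's own proof in all essentials: the same metric, the same reduction of the theorem to a single strict-decrease step via well-founded induction on $\mathbb{N}$, and the same step (permute independent generators to expose a redex for one of the three basic rules, rewrite, translate back to \NRClsb and renormalize). The difference is how the rewrite site is chosen, and that is where there is a genuine gap. The paper picks an \emph{outermost} offending subexpression $M_0$, i.e.\ one not contained in any larger $\promote$/difference subexpression of nonzero metric; you pick the \emph{leftmost} lateral generator $S$ of \emph{some} comprehension, with nearest binder $x \gets N$. Your key claim---that leftmost-ness forces $N$ to be closed---only rules out $N$ depending on variables bound by \emph{earlier generators of the same comprehension}. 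But in these normal forms comprehensions nest inside generators ($\promote Q^*$, $R_1^*-R_2^*$, $\distinct(R_1^*-R_2^*)$), so $N$ may mention a variable $w$ bound by an \emph{enclosing} comprehension; such an $N$ is not ``a lateral generator to the left of $S$,'' so your selection does not exclude it, yet it is not closed.

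Concretely, consider $\mcomprehension{\msetlit{\cdots} \mid w \gets t,\ y \gets \promote Q^*}$ with $Q^* = \comprehension{\setlit{\cdots} \mid x \gets \distinct(R_1(w)-R_2(w)),\ z \gets \distinct(R'_1(x)-R'_2(x))}$. Your criterion permits working in the inner comprehension, where $S = \distinct(R'_1-R'_2)$ and $N = \distinct(R_1(w)-R_2(w))$ is \emph{not} closed: the basic rule then builds graphs whose domain is $N$, so the new difference subexpression has $w$ free, and the removed occurrence of $x$ is offset by the imported occurrences of $w$ (and by the duplication of $N$, whose own metric is positive); $\|\cdot\|$ does not decrease---here it increases. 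The paper's outermost condition is exactly what forbids descending into this site: since $w \in \FV(N)$, the enclosing generator $\promote Q^*$ is itself an offending subexpression strictly containing the inner one, so an outermost choice must process $\promote Q^*$ first, whose binder $w \gets t$ is a table, genuinely closed and of zero metric. So your proof needs an outermost-style side condition (the comprehension worked in must not sit inside any offending $\promote$/difference subexpression) in addition to leftmost-ness. A smaller, related flaw: your point (i) conflates ``closed'' with ``zero metric''---a closed $N$ can still satisfy $\|N\|>0$ because of lateral subexpressions internal to it, and since the rules duplicate $N$ into the graph domains the decrease can again fail; the paper's own sketch shares this weakness (its cure is precisely a leftmost-style priority on top of outermost-ness), so the two criteria are complementary, not interchangeable, and yours alone is insufficient.
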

\begin{proof}
  The proof requires establishing that whenever $\|M\| > 0$, there
  exists at least one outermost subexpression $M_0$ of the form
  $\promote(N)$ or $N-P$ with $\|M_0\| > 0$.  That is, $M_0$ should
  not be a subexpression of any larger such subexpression of $M$
  having the same property.  Moreover, $M_0$ must occur as a
  generator.  We need to show that $M_0$ therefore contains at least
  one free record variable bound earlier in the same comprehension.
  We can show this by inspection of normal forms.  Since this is the
  case, then (if the generator is not already adjacent) we can commute
  it to be adjacent to $M_0$ and then apply one of the
  delateralization rules, decreasing $\|M_0\|$ and hence $\|M\|$.
\qed\end{proof}


\section{Proofs for Section~\ref{sec:shredding}}

\begin{lemma}\label{lem:shredmono}
If $\Phi;\Theta \vdash M \shreds \flat{M} \mid \Psi$, then $\Psi \supseteq \Phi$.
\end{lemma}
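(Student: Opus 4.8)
The plan is to proceed by structural induction on the derivation of the shredding judgment $\Phi;\Theta \vdash M \shreds \flat{M} \orelse \Psi$, reading off the output environment in each rule of Figure~\ref{fig:jshredding} and checking that it contains the input environment $\Phi$. In most rules the output is built from the input purely by extension, so monotonicity is immediate; the genuine work is confined to the two union rules, which are the only rules that \emph{delete} bindings from an intermediate environment.

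I would dispatch the easy cases first. For the base-term rule the output environment equals the input, so $\Psi = \Phi \supseteq \Phi$ holds trivially. For the tuple rule the environments are threaded through a pipeline $\Phi_0, \Phi_1, \ldots, \Phi_n$; applying the induction hypothesis to each premise gives $\Phi_{i-1} \subseteq \Phi_i$, and transitivity yields $\Phi_0 \subseteq \Phi_n$. For the two comprehension rules the output is $\Psi[\varphi \mapsto \graph{\Theta}{\cdots}]$, where the premise shreds the body with the same input $\Phi$ (under an extended generator context) and produces $\Psi$; by the induction hypothesis $\Phi \subseteq \Psi$, and since the side condition guarantees $\varphi \notin \dom(\Psi) \supseteq \dom(\Phi)$, adding the binding for $\varphi$ preserves containment, so $\Phi \subseteq \Psi[\varphi \mapsto \cdots]$.

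The crux is the set-union rule (the bag-union rule being symmetric), whose output is $(\Phi_n \setminus \vect{\psi})[\varphi \mapsto \bigcup\vect{\Phi_n(\psi)}]$. As in the tuple case, chaining the induction hypothesis across the premises gives $\Phi_0 \subseteq \Phi_n$, and the side condition $\varphi \notin \dom(\Phi_n)$ ensures the final extension by $\varphi$ is harmless (in particular $\varphi \notin \dom(\Phi_0)$). The one thing left to verify is that deleting $\vect{\psi}$ removes no binding already present in $\Phi_0$, i.e. that $\dom(\Phi_0) \cap \vect{\psi} = \emptyset$. For this I would observe that each $\psi_i$ is the \emph{fresh} graph variable introduced by the $i$-th premise $\Phi_{i-1};\Theta \vdash C_i \shreds \psi_i \gapp \dom(\Theta) \orelse \Phi_i$: since $C_i$ is a comprehension, that sub-derivation ends with a comprehension rule whose side condition forces $\psi_i$ to be absent from the environment $\Psi_i'$ obtained by shredding the body, and the induction hypothesis applied to that body gives $\Phi_{i-1} \subseteq \Psi_i'$, so $\psi_i \notin \dom(\Phi_{i-1})$. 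Combining with $\Phi_0 \subseteq \Phi_{i-1}$ (already established by chaining) yields $\psi_i \notin \dom(\Phi_0)$ for every $i$, whence $\Phi_0 \subseteq \Phi_n \setminus \vect{\psi}$ and finally $\Phi_0 \subseteq (\Phi_n \setminus \vect{\psi})[\varphi \mapsto \cdots]$.

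The main obstacle I anticipate is precisely this freshness bookkeeping in the union cases: one must be careful that the graph variables $\vect{\psi}$ discarded by the rule are exactly those \emph{created} during the recursive shredding of the $C_i$, and hence fresh with respect to the original input $\Phi_0$, rather than pre-existing bindings that would be lost. Everything else is routine once the invariant ``a freshly introduced graph variable is absent from the input environment'' has been extracted from the comprehension rules via the induction hypothesis.
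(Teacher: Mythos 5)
Your proposal is correct. The paper in fact states Lemma~\ref{lem:shredmono} without any proof at all (it is treated as a routine auxiliary fact in the appendix), so there is no official argument to diverge from; your structural induction is the natural one, and it correctly isolates the only non-trivial point, namely that the union rules delete the bindings $\vect{\psi}$ from an intermediate environment. Your freshness argument there is exactly right: each premise of a union rule must end in a comprehension rule (by the shape of the normal forms $C_i$, $D_i$), whose side condition $\psi_i \notin \dom(\Psi_i')$ combined with the induction hypothesis $\Phi_{i-1} \subseteq \Psi_i'$ yields $\psi_i \notin \dom(\Phi_{i-1}) \supseteq \dom(\Phi_0)$, so the deletions never touch $\Phi_0$; this is precisely the invariant the paper's authors leave implicit.
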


\begin{lemma}\label{lem:substweak}
Let $M$ an \NRCg term and $\Phi$ a shredding set. If $\FV(M) \subseteq \dom(\Phi)$, then for all $\Phi' \supseteq \Phi$ 
we have $M\Phi = M\Phi'$.

Furthermore, let $\Xi$ and $\Xi'$ be the shredding value sets corresponding to $\Phi$ and $\Phi'$: then $\stitch{\Xi}{\defun{M}} = \stitch{\Xi'}{\defun{M}}$.
\end{lemma}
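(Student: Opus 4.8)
The plan is to handle the two claims separately, in both cases leaning on two facts. First, $\Phi' \supseteq \Phi$ means the two environments agree on $\dom(\Phi)$, i.e. $\Phi'(\varphi) = \Phi(\varphi)$ for every $\varphi \in \dom(\Phi)$. Second, by well-typedness of shredding environments (Figure~\ref{fig:typPhi}), whenever $[\varphi \mapsto N] \in \Phi$ the term $N$ is typed in the context assembled from the entries preceding $\varphi$; hence every graph variable free in $N$ is declared earlier in $\Phi$, and in particular $\FV(N) \subseteq \dom(\Phi)$. I will refer to this as the \emph{ordering property}.

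For the first claim I would reuse the well-founded recursion on which the definition of $M\Phi$ already rests: free graph variables $\varphi \in \dom(\Phi)$ are replaced by $(\Phi(\varphi))\Phi$, the recursion descending only into the terms $\Phi(\varphi)$ and terminating by the ordering property. The point is that the recursion computing $M\Phi'$ visits exactly the same variables, and makes exactly the same replacements, as the one computing $M\Phi$: no variable of $\dom(\Phi') \setminus \dom(\Phi)$ is ever reached, because such a variable occurs neither in $M$ (as $\FV(M) \subseteq \dom(\Phi)$) nor in any $\Phi(\varphi) = \Phi'(\varphi)$ with $\varphi \in \dom(\Phi)$ (by the ordering property), and these are the only terms either recursion unfolds. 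Since $\Phi$ and $\Phi'$ coincide on every variable actually visited, $M\Phi = M\Phi'$.

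For the second claim I would first observe that $\Xi$ and $\Xi'$ agree on $\dom(\Phi) = \dom(\Xi)$: for $\varphi \in \dom(\Phi)$, $\Phi'(\varphi) = \Phi(\varphi)$ gives $\Xi'(\varphi) = \cS(\norm(\defun{\Phi'(\varphi)})) = \Xi(\varphi)$ by the pointwise definition of the shredded value set. The crux is then a \emph{closedness invariant}: for every $\varphi \in \dom(\Phi)$, each graph variable occurring inside an index $\kwindex(\psi, -)$ of $\Xi(\varphi)$ again lies in $\dom(\Phi)$. This holds because $\defun{\cdot}$ sends a graph application $\psi \gapp(-)$ to $\kwindex(\psi, -)$, so the graph variables mentioned by indices of $\defun{\Phi(\varphi)}$ are precisely those free in $\Phi(\varphi)$, which the ordering property places in $\dom(\Phi)$; and since indices are opaque base-typed atoms, normalization and DBMS evaluation $\cS$ can only duplicate or discard them, never synthesize a reference to a fresh graph variable.

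With these in hand I would prove the following generalization by induction on the result type $\tau$ (which strictly decreases at every recursive call of the stitching function, using term size to break ties in the projection case): for any shredded term $V$ of type $\tau$ whose index subterms mention only graph variables of $\dom(\Phi)$, we have $\stitch{\Xi}{V : \tau} = \stitch{\Xi'}{V : \tau}$. Every case but the index case is immediate, since stitching there does not consult the environment; in the index case $V = \kwindex(\varphi, -)$ we have $\varphi \in \dom(\Phi)$, so $\Xi(\varphi) = \Xi'(\varphi)$, and the recursive calls stitch the second components of this common collection at a strictly smaller type, where the closedness invariant supplies the hypothesis needed to apply the induction. Instantiating $V := \defun{M}$, whose top-level indices mention $\FV(M) \subseteq \dom(\Phi)$, gives $\stitch{\Xi}{\defun{M}} = \stitch{\Xi'}{\defun{M}}$. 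The main obstacle is the closedness invariant, the only step requiring us to track graph-variable references through $\defun{\cdot}$, normalization, and $\cS$; once it is established the remaining inductions are routine.
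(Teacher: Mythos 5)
The paper never proves Lemma~\ref{lem:substweak}: it is stated bare in the appendix as an auxiliary fact and invoked once inside the proof of Theorem~\ref{thm:stitchsound}, so there is no official argument to compare yours against — what you have written fills a gap the authors left as ``routine,'' and it is correct. Your two key ingredients are exactly what the omitted proof needs. First, the ordering property read off from Figure~\ref{fig:typPhi}: each entry of a well-typed shredding environment is typed in the context of the entries preceding it, so its free variables are graph variables already bound in $\Phi$; this simultaneously underwrites the well-foundedness of the recursive definition of $M\Phi$ and shows that the computations of $M\Phi$ and $M\Phi'$ replace the same occurrences by the same terms, neither ever reaching a variable of $\dom(\Phi')\setminus\dom(\Phi)$. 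Second, the closedness invariant that indices occurring in $\Xi(\varphi)$ name only graph variables of $\dom(\Phi)$: this is precisely the hypothesis that keeps your type-directed induction alive through the index case of stitching, whose recursive calls descend into values $p.2$ drawn from $\Xi(\varphi)$ rather than into syntactic subterms — which is also why your lexicographic measure (result type first, term size to absorb the projection case) is the right one, and a point a naive structural induction would miss. The one step resting on a meta-level assumption is that $\norm$ and $\cS$ cannot synthesize fresh graph-variable references inside $\bbI$-values; since the paper leaves $\kwindex$ abstract, your observation that the variable name is carried as an opaque quoted literal (so rewriting and evaluation can only copy or discard it, never invent it) is as much as can be said, and you flag it as such. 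A final cosmetic remark: the hypothesis $\FV(M)\subseteq\dom(\Phi)$ has to be read as constraining the free \emph{graph} variables of $M$ (terms produced by shredding also have ordinary free variables from $\Theta$, and the substitution $M\Phi$ touches only graph variables); your proof, like the paper's own application of the lemma inside Theorem~\ref{thm:stitchsound}, implicitly and harmlessly adopts this reading.
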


In the following proof, whenever $\Theta = \vect{x \gets F}$, we use the abbreviation:
\[
  \sem{\Theta}\rho~\vect{v} = \bigwedge_i \sem{F_i}\rho[x_1 \mapsto v_1, \ldots, x_{i-1} \mapsto v_{i-1}]~v_i
\]

\subsubsection{Theorem~\ref{thm:stitchsound}.}
\emph{
  Let $\Theta$ be well-typed and $ty(\Theta) \vdash M : \sigma$.
  Let $\Phi$ be well-typed, and  suppose $\Phi;\Theta \vdash M \shreds \flat{M} \orelse \Psi$. Let $\Xi$ be the result of evaluating
  the flattened queries in $\Psi$ as above.  Then
  $\sem{\flat{M}\Psi}\rho = \sem{\stitch{\Xi}{\defun{\flat{M}}:\tau}}\rho$.
}
\begin{proof}
  We proceed by induction on the shredding judgment. We comment the two key cases:

  \begin{itemize}
  \item For set comprehension: 
  \[
    \begin{array}{rl}
    \Phi;\Theta \vdash 
    & \comprehension{\setlit{M}~\kwwhere~X \mid \Theta'} \shreds \varphi \gapp(\dom(\Theta))
    \\
    \mid 
    & \Psi[\varphi \mapsto \graph{\Theta}{\comprehension{\setlit{\flat{M}}~\kwwhere~X \mid \Theta'}}]
    \end{array}
  \]
  where we wrote $\Theta' = \vect{x \gets F}$ for conciseness. Let $\Xi$ be the shredding value set for $\Psi$, and $\Xi'$ the shredding value set for $\Psi[\varphi \mapsto \graph{\Theta}{\comprehension{\setlit{\flat{M}}~\kwwhere~X \mid \Theta'}}]$. We rewrite the rhs:
  \[
  \begin{array}{l}
    \sem{\stitch{\Xi'}{\kwindex(\varphi,\dom(\Theta))}}\rho~u
    \\

    = \sem{\comprehension{\setlit{\stitch{\Xi'}{p.2}}~\kwwhere~p.1 = \tuple{\dom(\Theta)} \mid p \gets \Xi'(\varphi)}}\rho~u
    \\

    = \sem{\bigcup\left\{
      \begin{array}{l}
        \setlit{\stitch{\Xi'}{p.2}}~\kwwhere~p.1 = \tuple{\dom(\Theta)} 
        \\
        \qquad 
        \mid p \gets \cS(\norm(\defun{\graph{\Theta^*}{\comprehension{\setlit{\flat{M}^*}~\kwwhere~X^* \mid \Theta'^*}}}))
      \end{array}
    \right\}} \rho~u
    \\

    = \sem{\bigcup\left\{
      \begin{array}{l}
        \setlit{\stitch{\Xi'}{p.2}}~\kwwhere~p.1 = \tuple{\dom(\Theta)} 
        \\
        \qquad 
        \mid p \gets \comprehension{\setlit{\tuple{\tuple{\dom(\Theta^*)},\defun{\flat{M}^*}}}~\kwwhere~X^* \mid \Theta^*, \Theta'^*}
      \end{array}
    \right\}} \rho~u
    \\

    = \sem{\comprehension{
        \setlit{\stitch{\Xi'}{\defun{\flat{M}^*}}}~\kwwhere~(\tuple{\dom(\Theta^*)} = \tuple{\dom(\Theta)} \land X^*) \mid \Theta^*, \Theta'^*
     }}\rho~u
    \\

    = \bigvee_{\vect{v},\vect{v'}} (\sem{\stitch{\Xi}{\defun{\flat{M}^*}}}\rho' = u) \land \sem{\dom(\Theta^*) = \dom(\Theta)}\rho' \land \sem{X^*}\rho' \land \sem{\Theta^*,\Theta'^*}\rho~\vect{v},\vect{v'}
    \\

    = \bigvee_{\vect{v'}} (\sem{\stitch{\Xi}{\defun{\flat{M}}}}\rho'' = u) \land \sem{X}\rho'' \land \sem{\Theta'}\rho~\vect{v'}
  \end{array}
  \]
  where we alpha-renamed $\dom(\Theta)$ to a fresh $\dom(\Theta^*)$ within $\flat{M}$, $X$ and $\Theta'$ (yielding $\flat{M}^*$, $X^*$, and $\Theta'^*$, and we have set $\rho' = \rho[\dom(\Theta^*) \mapsto \vect{v}, \dom(\Theta') \mapsto \vect{v'}]$, $\rho'' = \rho[\dom(\Theta') \mapsto \vect{v'}]$. Note that the renaming involving $\Theta^*$ is undone in the last step through the evaluation of $\sem{\dom(\Theta^*) = \dom(\Theta)}\rho'$.

  We then rewrite the lhs:
  \[
    \begin{array}{l}
      \sem{(\varphi \gapp \dom(\Theta))\Psi'}\rho~u
      \\
      = \sem{\graph{\Theta}{\comprehension{\setlit{\flat{M}\Psi'}~\kwwhere~X \mid \Theta'}}\gapp(\dom(\Theta))}\rho~u
      \\
      = \sem{\comprehension{\setlit{\flat{M}\Psi'}~\kwwhere~X \mid \Theta'}}\rho~u
      \\
      = \bigvee_{\vect{v'}} (\sem{\flat{M}\Psi'}\rho'' = u) \land \sem{X}\rho'' \land \sem{\Theta'}\rho~\vect{v'}
    \end{array}  
  \]

  By Lemma~\ref{lem:substweak} and by induction hypotheses, we prove:
  \[
  \sem{\flat{M}\Psi'}\rho'' = \sem{\flat{M}\Psi}\rho'' = \sem{\stitch{\Xi}{\defun{\flat{M}}}}\rho'' = \sem{\stitch{\Xi'}{\defun{\flat{M}}}}\rho''
  \]
  which we combine with the previous calculations to prove the thesis.

  \item For set union: 
    \begin{prooftree}
    \AxiomC{$\varphi \notin \dom(\Phi_n)$}
    \noLine
    \UnaryInfC{$(\Phi_{i-1}; \Theta \vdash C_i \shreds \psi_i \gapp \dom(\Theta) \orelse \Phi_i)_{i=1,\ldots,n}$}
    \UnaryInfC{$
      \begin{array}{rl}
        \Phi_0; \Theta \vdash & \bigcup \vect{C} \shreds \varphi \gapp \dom(\Theta)
        \\
        \orelse & (\Phi_n \setminus \vect{\psi})[\varphi \mapsto \bigcup\vect{\Phi_n(\psi)}]
      \end{array}$}
    \end{prooftree}
    Let $\Xi_i$ be the shredding value set for each $\Phi_i$, and $\Xi'$ the shredding value set for $\Phi' := (\Phi_n \setminus \vect{\psi})[\varphi \mapsto \bigcup\vect{\Phi_n(\psi)}]$. We rewrite the rhs:
  \[
  \begin{array}{l}
    \sem{\stitch{\Xi'}{\kwindex(\varphi,\dom(\Theta))}}\rho~u
    \\

    = \sem{\comprehension{\setlit{\stitch{\Xi'}{p.2}}~\kwwhere~p.1 = \tuple{\dom(\Theta)} \mid p \gets \Xi'(\varphi)}}\rho~u
    \\

    = \sem{\comprehension{\setlit{\stitch{\Xi'}{p.2}}~\kwwhere~p.1 = \tuple{\dom(\Theta)}
        \mid p \gets \cS(\norm(\defun{\bigcup\vect{\Phi_n(\psi)}}))}}\rho~u
    \\

    = \sem{\bigcup_i \comprehension{\setlit{\stitch{\Xi'}{p.2}}~\kwwhere~p.1 = \tuple{\dom(\Theta)}
        \mid p \gets \cS(\norm(\Phi_n(\psi_i)}} \rho~u
    \\

    = \sem{\bigcup_i \comprehension{\setlit{\stitch{\Xi'}{p.2}}~\kwwhere~p.1 = \tuple{\dom(\Theta)}
        \mid p \gets \cS(\norm(\Phi_i(\psi_i)}} \rho~u
    \\

    = \bigvee_i \sem{\stitch{\Xi_i}{\kwindex(\psi_i,\dom(\Theta))}}\rho~u
  \end{array}
  \]
  We also rewrite the lhs:
  \[
    \begin{array}{l}
      \sem{(\varphi \gapp \dom(\Theta))\Phi'}\rho~u
      \\
      = \sem{(\bigcup \vect{(\Phi_n(\psi)\Phi')})\gapp(\dom(\Theta))}\rho~u
      \\
      = \sem{\bigcup \vect{(\Phi_n(\psi)\Phi')}}\rho~(\sem{\dom(\Theta)}\rho~u)
      \\
      = \bigvee_i \sem{(\Phi_n(\psi_i)\Phi')}\rho~(\sem{\dom(\Theta)}\rho~u)
      \\
      = \bigvee_i \sem{(\Phi_i(\psi_i)\Phi_i)}\rho~(\sem{\dom(\Theta)}\rho~u)
      \\
      = \bigvee_i \sem{\psi_i\Phi_i}\rho~(\sem{\dom(\Theta)}\rho~u)
      \\
      = \bigvee_i \sem{(\psi_i \gapp(\dom(\Theta)))\Phi_i}\rho~u)
    \end{array}  
  \]

  By induction hypothesis, we prove:
  \[
    \sem{(\psi_i \gapp(\dom(\Theta)))\Phi_i}\rho
    = \sem{\stitch{\Xi_i}{\defun{\psi_i \gapp(\dom(\Theta))}}}\rho
    = \sem{\stitch{\Xi_i}{\kwindex(\psi_i,\dom(\Theta))}}\rho
  \]
  which we combine with the previous calculations to prove the thesis.
  \qed
  \end{itemize}
\end{proof}

\end{techreport}


\vfill

{\small\medskip\noindent{\bf Open Access} This chapter is licensed under the terms of the Creative Commons\break Attribution 4.0 International License (\url{http://creativecommons.org/licenses/by/4.0/}), which permits use, sharing, adaptation, distribution and reproduction in any medium or format, as long as you give appropriate credit to the original author(s) and the source, provide a link to the Creative Commons license and indicate if changes were made.}

{\small \spaceskip .28em plus .1em minus .1em The images or other third party material in this chapter are included in the chapter's Creative Commons license, unless indicated otherwise in a credit line to the material.~If material is not included in the chapter's Creative Commons license and your intended\break use is not permitted by statutory regulation or exceeds the permitted use, you will need to obtain permission directly from the copyright holder.}

\medskip\noindent\includegraphics{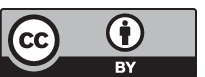}

\end{document}